\newcommand{\secref}[1]{Section~\ref{#1}}
\newcommand\footref[1]{\protected@xdef\@thefnmark{\ref{#1}}\@footnotemark}
\newcommand{\mscr}{\mathscr}
\newcommand{\mb}{\mathbf}
\newcommand{\mc}{\mathcal}
\newcommand{\mf}{\mathfrak}
\newcommand{\mbb}{\mathbb}
\newcommand{\bs}{\begin{split}}
\newcommand{\es}{\end{split}}
\newcommand{\eqr}{\eqref}
\newcommand{\tta}{\theta}
\newcommand{\eps}{\epsilon}
\newcommand{\bal}{\begin{align}}
\newcommand{\beq}{\begin{equation}\begin{aligned}}
\newcommand{\eeq}{\end{aligned}\end{equation}}
\newcommand{\closure}[1]{\mkern 1.5mu\overline{\mkern-.2mu#1\mkern-1.8mu}\mkern 1.5mu}
\let\norm=\enVert
\pgfplotsset{my style/.append style={axis x line=middle, axis y line=
           middle, xlabel={$x$}, ylabel={$f(x)$}, axis equal }}
\newtheorem{thm}{Theorem}[section]
\newtheorem{lem}[thm]{Lemma}
\newtheorem{prop}[thm]{Proposition}
\newtheorem{cor}[thm]{Corollary}
\newtheorem{defin}[thm]{Definition}
\newtheorem{rmk}{Remark}
\newenvironment{example}[1][Example]{\begin{trivlist}
\item[\hskip \labelsep {\bfseries #1}]}{\end{trivlist}}
\DeclareMathOperator{\sgn}{sgn}
\begin{document}

\title{Semicontinuous Banach Spaces for Schr\"{o}dinger's Eq. with Dirac-$\delta'$ Potential}

\author{B. Button}
\address{University of Houston-Victoria\\ School of Arts \& Sciences \\ 3007 N. Ben Wilson St.\\ Victoria, TX-77901}
\email{buttonb@uhv.edu}
\date{\today}
\maketitle

\begin{abstract}
Schr\'{o}dinger's equation with distributional $\delta$, or $\delta'$ potentials has been well studied in the past. There are challenges in simultaneously addressing some of the inherent issues of the system: The functional operator cannot exist entirely within the standard $L^2$ Hilbert spaces. On differentiable manifolds, the domain of the free kinetic energy operator is in the space of harmonic forms. Locally, by the Hodge decomposition theorem and the standard distributional calculus, the space of functionals of a $\delta$ or $\delta'$ potential must be orthogonal to the free kinetic energy operator. Restricting to semicontinuous topologies presents opportunities to address these, and other issues. We develop, in great detail, a formalism of Banach spaces with semicontinuous topologies, and their properties are extensively defined and studied. For $C(\closure{\mbb{R}})$ functions, the spaces are indistinguishable. The semicontinuous analogs of the $L^P$ spaces, are nontrivial and result in a dense topologically continuous embedding of the semicontinuous $L^p$ spaces into the semicontinuous $C(\closure{\mbb{R}})$ spaces. Here, certain classes of distributions may be inverted in terms of their primitive functions. Also many operators are inherently self adjoint. We define equivalence relations between the cohomology classes of distributions and derivatives of their associated primitives on local sections of $\closure{\mbb{R}}$. Here Hamilton's equations are canonical, and define a connection on the fibers of the base space. Semicontinuity provides a resolution to the above domain and interaction problems, and easily integrable Feynman functional. We arrive at a compatible domain which is Krein ($\mf{H}$) over disjoint components of $\closure{\mbb{R}}$. The subspaces of $\mf{H}$ are isomorphic to the semicontinuous Hilbert spaces of the Hamiltonian.  
\end{abstract}



\section{Introduction}\label{intro}
The study of quantum mechanics necessitates the study of the self adjoint Hamiltonian operator on some Hilbert space, $\mc{H}=L^2(\mbb{R}^n)$ for example. In one dimension, for some wave function $\psi\in\mc{H}$, the Hamiltonian (energy operator) acting on $\psi$ is Sch\"{o}dinger's differential equation given by,
\beq\begin{aligned}\label{qmSchrdeq}
-i\frac{d}{dt}\psi=\hat{H}\psi=E\psi,
\end{aligned}\eeq
where $E$ is the eigenvalue of the energy operator, $\psi=\psi(x,t)$. Since this is usually defined on $L^2$, the standard physics inner product notation for this coupled integral-differential equation is the Dirac bra-ket, $\bra{\psi^\ast}-\frac{d}{dt}\ket{\psi}=\bra{\psi^\ast}\hat{H}\ket{\psi}$. If the energy of the system is constant in time, then $\hat{H}$ is given as
\beq\begin{aligned}\label{qmHam}
\hat{H}&=\hat{P}^2+V(\hat{x})\\
\end{aligned}\eeq 
where the $\hat{P}$ is the Hermitian momentum operator is $\hat{P}=-i\frac{\hbar}{\sqrt{2m}}\frac{\partial}{\partial x}$. Here $m$ is the mass of a point-like particle and $V(\hat{x})$ is a time independent potential. We will denote the free Hamiltonian operator $\left(V(\hat{x})=0\right)$ by $\hat{H}_f$.

Over the past couple of decades, there has been a considerable amount of work done by both the physics and mathematics community for cases where the potential is highly singular, and in particular point supported, such as the Dirac-$\delta$ or derivatives of the Dirac-$\delta$. Thus one is left to make sense of a Hamiltonian operator of the sort
\begin{align}\label{hamdelta}
 \hat{H}=-\frac{\hbar^2}{2m}\frac{\partial^2}{\partial x^2}+\delta(x)\\
 \shortintertext{or,}
\label{hamdelta2}
 \hat{H}=-\frac{\hbar^2}{2m}\frac{\partial^2}{\partial x^2}+\delta'(x).
 \end{align}  
We will work explicitly on the extended real line $\closure{\mbb{R}}^1$, we use the notation for partial derivatives $\frac{\partial}{\partial x}$ to denote the differential operator. This will facilitate later discussions when we discuss \eqr{hamdelta2} in terms of differential forms on closed manifolds. Furthermore, not all results will be here, generalize to higher dimensions. Though it should be readily apparent which results admit higher dimensional generalizations.

There are several considerations which motivates our study of semicontinuous spaces. The collective set of motivations aim for both mathematical rigor (as much as possible) and relevance to applications in physics. As such, the work here attempts to bridge the difference between mathematical theory and theoretical physics. The results which follow are not quite those of the usual $L^p(\mbb{R})$ theory, but they are not so different as to be completely unrecognizable from it. The formalism developed herein has some unexpected, but pleasant properties. The properties are themselves noteworthy in their own right, but also have potential for use in low dimensional condensed matter systems (in particular graphene sheets) and could possibly produce non-trivial results in \textit{AdS/CFT} or string theory itself. We will make some general comments in the summary regarding applications to interactions in perturbative string theory, as well as other avenues for future investigations. As such, in~\secref{deltaprm}, we choose to apply the semicontinuous spaces to analyze the basic quantum system defined by Eq.~\eqr{hamdelta2} in terms of Feynman's functional integral. We reserve applications to specific to other systems, \textit{i.e.} string theory and holography for future work. It is with these considerations in mind, as well as the notable differences of $\delta$ and $\delta^\prime$ potentials in 2 and 3 dimensions \cite{albeverio2012solvable} and specifically, the use of spherical/polar coordinates, that we do not to generalize beyond the discussion beyond one dimension. However, provided that our postulates hold, there is no \textit{a priori} reason we expect that higher dimensional generalizations will necessarily fail. The issue (as is the case with all regularization methods), is whether or not the regularized system is representative of the initial system. 

From a pure mathematics perspective, we address two particularly troublesome issues regarding the system defined by Eq.~\eqr{hamdelta2}, from which, we may define a method to enable one to more completely utilize the Feynman path integral in similar cases. We construct a formalism which is \textit{sufficient} to accommodate for the problems that,
\begin{enumerate}
\item Although the space of test functions for $\delta$ is dense in the space of $L^p(\mbb{R})$ functions, the space of test functions for $\delta$ (and derivatives of $\delta$) is not equivalent to the entirety of any $L^p$ space, for any $1\leq p<\infty$.
\item  In general, the kinetic energy operator and the "potentials" of Eq.~\eqr{hamdelta} or Eq.~\eqr{hamdelta2} do not act as maps from the same base space to the same target space.
\end{enumerate}
To elaborate on (1), methods of approximations or limiting sequences which approach $\delta$ or $\delta^\prime$ potentials have meaning in $L^2$ \textbf{except} at the limit point itself, where the $L^2$ functions actually become true singular distributions. Closure in functional spaces (and thus self adjoint-ness) is lost. Extensions to $L^2$ or even Sobolev subspaces are not sufficient in such circumstances\cite[Thm. 8.27]{reedsimon1}. 

Further elaborations on (2) have two complementary facets. As a distribution (i.e. at the limit point of some approximation scheme for $\delta,\delta^\prime$), the "potential" becomes a map from the space of Schwartz functions ($\mc{S}$) to the real numbers, whereas the differential operator is a linear transformation from one functional vector space to another. The complementary issue arises if one attempts to place the singular Hamiltonians on a differentiable manifold. Here, one may view the systems above as linear functionals on the space of differential forms. Local arguments for singular operators on differential forms are still subject to the Hodge decomposition theorem. As a result, the second order differential operator and the singular potential must belong to orthogonal spaces. The vector on which the kinetic energy operator acts, is necessarily orthogonal to that of the potential, unless they are the same $p$-form degree. This prohibits the system from being a truly interacting system. We will discuss each of these in more detail shortly. 

Also, each component of our formalism is $necessary$ in the sense that the formalism is self consistent, while simultaneously addressing both (1) and (2), as well as some other points which we will encounter along the way. The space of singular distributions do not have the same notion of "domain" as linear operators or linear transformations. In an abuse of language, we will often reference the collective domain of the linear functional, \eqr{hamdelta} or \eqr{hamdelta2}.

The main results of the paper span the sets of different tools used to address each of the above points. Topological measure spaces are constructed in such a way as to address the particular domain incompatibility between the distribution and the differential operator components of Eq.~\eqr{hamdelta2}. We define spaces of semi-continuous function(al)s which cannot distinguish between $C(\closure{\mbb{R}})$ and $L^p(\closure{\mbb{R}})$ functions. Therefore we have no need to extend our results to subsets of $L^p$ spaces for self adjointness. A positive consequence of our construction is that many operators are inherently self adjoint. The mapping to semicontinuous spaces produces subspaces of semicontinuous $L^p$ functions which are orthogonal in regards to left versus right semi-continuity. The defined semi-continuous measure spaces allows the freedom to define an equivalence class between $\delta$ and derivatives of functions which may be considered as a primitive function of $\delta$, such as the Heaviside function $\theta$. This in turn affords one the ability to define a mapping of $\delta'$, via equivalence class identifications, to the cohomology class of harmonic forms. Here, the system becomes a genuinely interacting system. It is shown that the equivalence class mapping is locally canonical. Our example discussed in ~\secref{deltaprm}, the formalism is applied to Feynman's path integral. The net result of the formalism, collectively broadens the applicability of the Feynman path integral to include exponentiation of full Hamiltonian. The resulting function space is Krein. It is known that Krein spaces have subspaces which are isomorphic to $L^p$ spaces of functions. 

\subsection{Notation and Conventions}\label{conventions}
Here we pause in order to state the conventions and notation used throughout the paper. In later sections we will discuss the Hamiltonian as a functional on a differentiable (psuedo-Riemannian) manifold, where differential geometric structures are relevant, and it will be necessary to make distinctions between various differential operators. Thus, given a differentiable function $f$ then, $\frac{\partial}{\partial x}f:=dx\frac{\partial f}{\partial x}$ is a covector (1-form) in some cotangent space at the point $x$, whose component is $\frac{\partial f}{\partial x}$. Exterior differentiation and codifferentiation will exclusively be denoted by $\mathbold{d}f$ and $\mathbold{\delta}f$, respectively. Then, $\frac{\partial^2}{\partial x^2}$ will be implicitly defined by the Laplace-Beltrami operator; $\mathbold{\Delta}:=\mathbold{\delta d}+\mathbold{d\delta}$. We will adopt the general notation of $D$ when regarding differentiable set definitions or where we wish regard differentiation more colloquially, and will assume the applicable derivative to be implicit. For functions (say $f$) used in equations, we generally denote derivatives with respect to their arguments as $f'(x)$, and $f'$ for distributional/functional derivatives, and commonly use either interchangeably when there is no danger of ambiguity. 

The space of Schwarz functions, the space of smooth functions such that $f(x)$ and all its derivatives go zero faster than $x^n$ for all $n\in\mbb{Z}$, is denoted by $\mc{S}$. The topological dual of $\mc{S}$, the space of tempered distributions, is denoted by $\mc{S}^\prime$. We shall also commonly, but not exclusively, denote distributions in a manner similar to: $f_{\delta_x}=f(0)$, for some $f\in\mc{S}$, and the Dirac-$\delta$ with point support at $x$. 

A Borel measurable space over some universal set $(\Sigma, \mathscr{B})=\mathscr{B}(\Sigma)$ with total variation measure, $\mu$ (or sometimes $\nu$), defines a measure space $(\Sigma, \mathscr{B},\mu)=(\mathscr{B}(\Sigma),\mu)$. We denote the space of bounded linear functions over the previously given measure space by $\mc{L}(\mathscr{B},\mu)$. We almost exclusively have $\Sigma=\closure{\mbb{R}}$, and so in this case we will omit the universal set from the notation. We denote the left (respectively right) semicontinuous spaces of bounded linear functions by with left-semicontinuous measure $\mu_L$ (resp. right-semicontinuous measure $\mu_R$) by $\mc{L}_L(\mathscr{B},\mu_L)$ (resp. $\mc{L}_R(\mathscr{B},\mu_R$). Left semicontinuity (resp. right semicontinuity) is defined to be the continuous one-sided measure approaching \textit{from the left} (resp. right). For example, if $[a,b]\in\closure{\mbb{R}}$, then the left-semicontinuous Borel measure will be given by $\mu_L:=\mu((a,b))\cup\mu\left\{b\right\}=\mu\left((a,b]\right)$, where the half-open interval notation is as expected. These are just Stieltjes measures over half-open Borel sets. 

The standard Lebesgue measure denoted as $\lambda$. Measures of functions under Lebesgue equivalence class identifications are generally understood with respect to the usual Lebesgue-Stieltjes measure. Left (resp. right) continuous Lebesgue (Lebesgue-Stieltjes (L-S)) measures are denoted by $\lambda_L$ (resp. $\lambda_R$).  Generally, whether we have measures $\mu_{L,R}$ or $\lambda_{L,R}$ (where "$L,R$" refers collectively to left or right semicontinuous measures), sets of measure zero under Lebesgue measure, can have non-zero measure under L-S measures. We will be more precise about the semicontinuous measures and measure spaces in \secref{semi} below. 

\subsection{Further Comments on the Hamiltonian Functional}\label{commentsonHam}
If Eq.~\eqr{hamdelta} is to act on a wave function $\psi$, as in Eq.~\eqr{qmSchrdeq}, the resulting expression is ill-defined as a differential equation. In particular $\frac{\partial}{\partial x}$ is a differential operator, which is a linear transformation $D:\mc{L}\to\mc{L}'$, where $\mc{L}$ denotes an arbitrary differentiable manifold, vector or linear functional space with domain $\mc{D}_{\mc{L}}$. In particular, the kinetic energy operator is the mapping $D^2=(D\circ D):\mc{L}\to\mc{L}'\to\mc{L}^\second$. By implicitly assuming $D^2\sim \mathbold{\Delta}$, then $D^2$ will map $p$-forms in $\mc{L}\to p$-forms in $\mc{L}^\second$ such that, for local neighborhoods $U,V$ of some base spaces $M,N$ respectively, we require $\mc{L}(M)\large|_U\cong\mc{L}^\second\large(N)\large|_V$.  However $\delta_x$ only has rigorous meaning either as a tempered distribution ($\delta\in\mc{S}'$) or as a measure ($\mu_\delta\in \mc{B}\left(\Sigma\right)$), with $\Sigma$ some measurable $\sigma$-algebra manifold. But what sort of object (or linear space) is Eq.~\eqr{hamdelta2} acting on? A rigid interpretation of Eq.~\eqr{hamdelta2}, implies that $\frac{\partial}{\partial x}\delta(x)=dx\frac{\partial \delta(x)}{\partial x}$, which makes the "potential" a 1-form. But then we could have no scalar wave function solutions which are mapped to a common space (although direct product/sum spaces can be constructed). By the Hodge decomposition theorem, the solution space of harmonic $0$-forms is orthogonal to the solution space of the codifferential on 1-forms. In this case, the interaction between the free kinetic energy operator and the potential become independent, and thus completely decouple by orthogonality.

This is particularly troublesome with respect to Feynman path integrals, where one would like to have solutions to the functional integral of the form 
\beq\label{arbpathint}
\int{\mc{D}q^i \mc{D}p^i}~\psi^\ast (q,p,t)\exp^{i\int{dt}\mscr{H}(q,q_i,p_i;t)}\psi(q_i,p_i;t_i)=\delta(q-q_i)\delta(t-t_i),
\eeq  
where $\mscr{H}(q,q_i,p_i;t)$ is the Hamiltonian density functional, and $\psi(q)$ are required to satisfy (to at least first order in $t$) $\psi^\ast (q,t) \hat{H}\ket{\psi(q_i;t_i)}=\delta(q-q_i)\delta(t-t_i)$. Classically, this is interpreted as $\braket{\psi^\ast|\hat{H}|\psi}=0$. 

Assuming that $\psi$ is a 0-form, by the Hodge decomposition theorem, then $\psi$ is a harmonic $0$-form for the kinetic energy operator. The addition of any potential term $\hat{V}(q)$ is necessarily either cohomologous with $\psi$, or must lie in a functional space orthogonal to $\psi$. The latter requires the wave function to be of the form $\Psi=\psi\oplus\phi$, with $\psi$ a 0-form and $\psi$ a 1-form. Let $F^0,F^1$ be the space of 0 and 1 forms on $T^\ast (\mc{L})$. Then $\Psi\in F^0\oplus F^1$, and $F^0\perp F^1$. The problem now becomes that the wave function $\Psi$ solves the equation $\braket{\Psi^\ast |\hat{H}|\Psi}=\lambda$, with $\lambda>0$ (here $\lambda$ is an eigenvalue). It cannot solve the equation that we intended it to solve ($i.e.~\lambda=0$), and any potential function $\hat{V}(q)$ must therefor be a vector in a space which is orthogonal to the free operator. 

Moreover, Hodge's orthogonality condition implies that $\hat{H}$ cannot even act on a two non-cohomologous functions originating from the same function space. Indeed, a wave function solution $\Psi$, must be comprised of the direct sum of two wave function in orthogonal spaces $(\Psi=\psi\large|_{\mc{L}^\second}\oplus\phi\large|_{\mc{L}'})$! Clearly, the rigid interpretation of the "potential" as a 1-form is not the intent behind Eq.~\eqr{hamdelta}. We will return to this point again in \secref{deltaprm}.

Specifying $\hat{V}=\delta$ as either a distribution or a measure, determines whether $\psi\in \mc{S}\left(\closure{\mbb{R}}\right)$ or $\psi\in \mc{B}\left(\Sigma\right)$. Clearly $\mc{S}\left(\closure{\mbb{R}}\right)\cap \mc{B}\left(\Sigma\right)\neq\emptyset$. In the former, the space of distributions is the space of continuous linear functionals $T:\mc{S}(\closure{\mbb{R}})\to\mbb{C}$ (or $\mbb{R}$). As a measure in the latter case, $\mu_{\delta}: \mc{B}\left(\Sigma\right)\to\mbb{C}$ (or $\mbb{R}$). However in quantum mechanics, we typically regard $\psi\in\mc{H}\subset L^p\left(\closure{\mbb{R}}\right)$. Since we impose a self-duality condition; $\psi^\ast=\psi$. Then the H\"{o}lder inequality formally restricts $\mc{H}\in L^2\left(\closure{\mbb{R}}\right)$. However $\mc{D}_{\delta_x}\not\subset L^p\left(\closure{\mbb{R}}\right)$ for any $1\leq p \leq\infty$, so $\mc{D}_{\delta_x}\nsubset \mc{H}$! Within this context, the question of the domain for Eq.~\eqr{hamdelta} ($\mc{D}_{\hat{H}}$) cannot be meaningfully addressed. The standard approach of extension parameters is of little help. Limiting sequences approaching the $\delta'$ (or $\delta$) distribution can be constructed with $L^2$ functions on compact subsets of $\mbb{R}$. However the limit point of the sequence inevitably has a domain which cannot be extended be $L^2(\mbb{R})$. Thus no self adjoint extension which is a result of limits of sequences exists, as they are not closed in $L^2$. 

In the above paragraphs, we outlined a number of inconsistencies with regard to Eqs.~\eqr{hamdelta} and \eqr{hamdelta2}. Neither one is a differential equation. Even worse, $\mc{D}_{\delta_x}\notin \mc{H}\subset L^2(\closure{\mbb{R}})$ and $\delta'_x$ is not even a measure. On a differentiable manifold, $\frac{\partial}{\partial x}\delta(x)$ defines 1-form, which implies that a free scalar solution and the interaction solution space are decoupled, and introduces an undefined inner product for any non-zero constants $\psi_0$. A similar point is raised in \cite[Secs. 3 and 4]{Park:1997fw}. We assume that Eq.~\eqr{hamdelta} originates from the variation of some linear functional (Lagrangian or Hamiltonian) density, $\mscr{H}$:
\beq\label{varH}
\frac{\delta}{\delta \psi^\ast}\int_{x\in\closure{\mbb{R}}}\mscr{H}(\psi^\ast,\psi)dx=\int_{x\in\closure{\mbb{R}}}\psi^\ast\left(\hat{H}\psi\right) dx&=\braket{\psi^\ast, \hat{H}\psi}-E\braket{\psi^\ast,\psi}=0,
\eeq
Obviously we require that $\psi$ be self-dual in Eq.~\eqr{varH}. We also require  that $\hat{H}$, at a minimum be essentially self adjoint, with bounded operator norm: $\norm{\braket{\psi^\ast,\hat{H}\psi}}_\infty<\infty$, with $\psi\in \mc{D}_{\delta_x}:\braket{\psi^\ast ,\hat{H}\psi}=\braket{\psi^\ast ,\hat{H}\psi}^\ast$. Quantum mechanics demands that Eq.~\eqr{varH} admit a definition on some dense subset of $L^2\left(\closure{\mbb{R}}\right)$ (or a suitably equivalent notion thereof), otherwise colloquially speaking, we break quantum mechanics. The above minimal requirements are met by assuming $\hat{H}$ to be of Schatten class. The semicontinuous topological vector spaces on $\closure{\mbb{R}}$ with Riemann(Lebesgue)-Stieltjes measure ($C_{L,R}$), which are defined in \secref{semi}, are locally compact Hausdorff spaces in $\mc{D}_{\hat{H}}$. That $\hat{H}$ is compact, trivially follows by the Alexandroff compactification for any wave function $\psi(x)\in C_{L,R}(\closure{\mbb{R}})$ such that $\hat{H}:C_{L,R}(\closure{\mbb{R}})\to\closure{\mbb{R}}$.     

We argue that the domain incompatibility, is inherently topological in nature. The $\delta_x$ functional requires continuity at the point of support for any function on which it acts. This immediately places the a solution in some subspace of $C\left(\closure{\mbb{R}}\right)$ for $\mc{D}_{\delta_x}$. Thus any solution should be in some subset of $C\left(\closure{\mbb{R}}\right)$ such that it admits a self adjoint extension to $L^2\left(\closure{\mbb{R}}\right)$. 

Singular differential systems have been well studied in terms of nonholonomic geometric mechanics. The works of \textit{Faddeev} and \textit{Vershik} (in particular \cite{faddeev199540,Vershik1988}) allow us to work with distributions on (co)tangent spaces, which exist as geometrical objects in their own right ($i.e.$ vectors as jets or germs of fields, and 1-forms as modules over jets or germs). The inherent differential structure of the (co)tangent spaces will be particularly useful in later sections, where Eq.\eqr{hamdelta2} will be defined on a configuration space endowed with a symplectomorphism structure. Below we construct a general formalism which aims to bridge this gap between the space of test functions for $\delta$ and the free kinetic energy operators more satisfactorily.

There are many approaches available in order to tame singular Hamiltonian (and Lagrangian) systems. There is also an overwhelming number of papers which apply functional calculus methods to point-supported interactions of Schr\"{o}dinger operators. A small subset of these cover Green's Function methods, deficiency indices, and extensions to $L^2$ or Sobolev spaces (dense subspaces of $L^2$)~\cite{Park:1996, Park:1997fw, albeverio1998symmetries, albeverio2002point, Nizhnik2006, gill2008, Eckhardt2011SL, Lange2015, Gadella2016}. The seminal works by Albeverio \textit{et al}, spanning over three decades, is summarized in~\cite[and refs therein]{albeverio2012solvable}, and worth particular mention due to the multitude of systems analyzed using the method of self adjoint extensions of symmetric operators. In particular, the study of propagators of quantum mechanical Hamiltonians with regular, and singular potentials in many spatial dimensions. Approximation methods determine estimates for boundedness and well-posedness in finite difference Schr\"{o}dinger equation (as well as the NLS, and semi-relativistic variants). See \cite{albeverio2012solvable,albeverio2000singular,TakFaddSpect2015,StrichBfieldDAncona2009,LewinSabinHartee2015,StrichineqFLL2013}, and references therein for instance.

 If one looks to non-perturbative methods for handling point supported interactions such as Eq.~\ref{hamdelta2}, nonlinear distributional solutions are invariably the only other tool at one's disposal. On the other hand, there has been a tremendous amount of work done in the fields of nonlinear functional analysis, with special attention to point interactions. Colombeau algebras are a considerably intricate and abstract formalism which have had some success in recent years with the construction of generalized functional algebras. The difficulty inherent in Colombeau algebras is matched only by their potential for use in large classes of function spaces. For works on the general theory of Colombeau algebras see ~\cite{colombeau1990, oberguggenberger1992, Jelinek1999}. An interesting exposition on a modern generalization which simplifies some of the formalism, and discusses the current challenges of Colombeau algebras is~\cite{Nigsch2013}. Recently~\cite{SchwarzOps} has appeared, which outlines a generalization for algebras of operators and distributions. 
 
Here we will not need to employ such generalized formalisms, though there is certainly some overlap with the afore mentioned in all cases. The approach here is a construction from first principals, in terms of functional methods on topological vector spaces with particular measure properties, differentiable manifolds\cite{faddeev199540,Vershik1988,TakFaddSpect2015}, and the spaces of integrable distributions~\cite{Talvila2009,TalvilaLp,Talvilafinord}. 

It is worth making a particular mention of the works of Johnson and Lapidus~\cite[and refs therein]{johnlap}, which became known to the author only after the completion of the initial draft of this work. The work here contains some parallels of Johnson and Lapidus~\cite{johnlap} in terms of the usage of the L-S measures in Feynman's path integral. However, in this work, we build Borel measurable and Banach spaces on the foundation of half-open Borel generating sets on $\closure{\mbb{R}}\cong S^1$. In this setting, L-S measures are in some sense, very natural measures for such semicontinuous Banach spaces. Specifically, L-S measures have been precisely chosen to coincide with the generating sets of the underlying topological vector space (TVS). This has certain benefits in the analysis below, and which are not generally possible in the standard Hilbert (or Banach) spaces. For instance, we have a topology compatible with notions of making identifications of certain (even singular) distributions with the derivatives of their primitive distributions, in particular see \cite{Talvila2009}. The ability to make these identifications, offers an intriguing option which may potentially (if generalizable in a meaningful way) expand the tools available to define and evaluate Feynman integrals including those with singular measure potentials. Another benefit of our construction is that many operators are naturally self adjoint on the semicontinuous manifold spaces defined below.
 
 \subsection{Gauge integrals, primitive functions, and Krein spaces}\label{ghk}
 We would like to have a mapping $\mc{I}$ such that if $\hat{H}\in\mc{L}^\ast$ where $I:\mc{L}^\ast \to\mc{L}'^\ast\to\mc{L}''^\ast$, where $\ast$ denotes the dual space of continuous linear functionals. In this case then it is at least, in principle, possible to have some $\mc{D}_{D^2}=\mc{D}_{\delta'}$. It is well known that the generalizations of the Riemann and Lebesgue integrals are the class of gauge integrals (and in particular the Henstock-Kurzweil (HK) and Stieltjes classes,\cite{Bartle1996,abbott2010,Bongiornominsolns,Bongiornolebesgue,Bongiornoinfvar}), which are known to integrate functions which are derivatives of unique primitive functions. 
  
Of particular relevance to our discussion here are the regulated classes of gauge integrals with Lebesgue-Stieltjes measure \cite{Talvila2009,TalvilaLp,Talvilafinord}, as well as Krein function spaces\cite{Kreinlinear,Kreintopics,Rovnyak2002,ellis2003}. The gauge integrals define a gauge function within subintervals $I\subset \mbb{R}$, and a tagged partition of $I$. The uniqueness of primitive functions obtained from integrable functions and distributions affords one the luxury of straight forward identifications of domains of certain classes of functionals (particularly the space of Schwarz functions over some topological metric space, $\mathscr{T}(X)$) where the inversion of distributional derivatives is possible. These are the classes of integrable distributions. 

A function $\gamma$ which maps some interval $[a,b]$ to $\mathbb{R}$ is called a {\it gauge} on $[a,b]$ if $\gamma (x)>0$ for all $x\in [a,b]$. A {\it tagged partition} is a finite set of pairs of closed intervals and tag points in $\closure{\mathbb{R}}$, $\mathcal{P}=\{([x_{n-1},x_n],\tilde{x}_n)\}^N_{n=1}$ for some $n\in\mathbb{N}$, with $\tilde{x}_n\in[x_{n-1},x_n]$ for each $1\leq n\leq \mathbb{N}$ and $-\infty=x_0<x_1<x_2\ldots<x_N=\infty$. The finite pair set consisting of a tagged partition and tag points is denoted by $(\mathcal{P},\{\tilde{x}\}^N_{n=1})$. For a particular $\gamma(x)$, a tagged partition $(\mathcal{P},\{\tilde{x}\}^N_{n=1})$ is said to be $\gamma$-{\it fine} if every subinterval $[x_{n-1},x_n]$ satisfies $x_n-x_{n-1}<\gamma(x)$. Therefore a gauge $\gamma$ on $\closure{\mathbb{R}}$ together with a tagged partition, maps to open intervals in $\closure{\mathbb{R}}$ and for each $x$ in some subinterval $[a,b]\subset \closure{\mathbb{R}}$, then $\gamma(x)$ is an open interval containing $x$. 

In particular we will generally consider normed linear topological metric spaces generated by the collection of all half-open Borel sets over the extended real line $\closure{\mathbb{R}}=[-\infty,\infty]$, which is also one possible way to define a compactification of $\mathbb{R}$. Krein spaces are useful due to their structure, since they can include regular subspaces which are isomorphic to Hilbert spaces of quantum mechanics. Krein spaces will evolve naturally out of the formalism developed here. 
 
 The organization of the paper is the following. In~\secref{semi}, we introduce and define the spaces of semicontinuous functions. We begin with a discussion of some idiosyncrasies regarding certain definitions of specific tempered distributions. The Heaviside distribution and general classes of step functions are discussed in detail. These discussions serve as the motivations which follow in the latter sections of~\secref{semi}, where we define the measure spaces of half-open Borel topologies, the spaces of continuous measurable (and therefore bounded) functions $C(\closure{\mbb{R}})$ and the isomorphic left/right semicontinuous topological spaces $C_{L,R}(\closure{\mbb{R}})$ defined over the half-open Borel set topologies. We then discuss the $L^p$ analogs to the semicontinuous spaces $C_{L,R}$, which are defined for non-atomic semicontinuous $L^p$ functions, excluding sets (and collections of sets) of Lebesgue measure zero, such as fat Cantor sets or Cantor-Lebesgue measure. With a refined notion of the standard $L^p$ equivalence class identifications, we construct the semicontinuous quotient spaces $L^p_{L,R}$ which are continuous with respect to the half-open Borel measure space topologies for $1\leq p\leq\infty$. Under these equivalence class identifications, we achieve a continuous and dense partial embedding of non-atomic $L^p_{L,R}$ functions into the spaces of $C_{L,R}$. This considerably enlarges the classes of functions in $C_{L,R}$ which are topologically continuous with respect to the base Borel topologies. Theorems are proved regarding the topologically continuous dual spaces of all of the defined function spaces. The topological and norm closure of the dual spaces of semicontinuous functions over half-open Borel topologies with the $\norm{\cdot}_{\sup}$ is given by the function spaces of semicontinuous bounded variation $\mc{BV}_{L,R}$ with $\norm{\cdot}_{\mc{BV}_{L,R}}$ which are finitely additive over all collections of compact subsets of $\closure{\mbb{R}}$. The benefit of this construction is that with the half-open L-Sj measures, there is a one-to-one mapping of $L^\infty_{L,R}$ functions to the semicontinuous spaces of bounded variations, which can be generalized to the Riemann-Stieltjes measures for gauge integrals. The Radon-Nikodym theorem provides a description of the semicontinuous spaces of absolutely continuous $(AC_{L,R})$ functions in terms of the second fundamental theorem of calculus, which is reflexive and includes primitives for semicontinuous integrable distributions. The spaces of semicontinuous integrable distributions are the finitely additive measures of $\mc{BV}_{L,R}$ with the corresponding $\norm{\cdot}_{\mc{BV}_{L,R}}$. We note that by continuity in these spaces, the $\norm{\cdot}_{L,R;\sup}$ (equivalent to the $\norm{\cdot}_{L^1_{L,R}}$) bounds all $\norm{\cdot}_{L^p_{L,R}}$, for all $p$. In this manner we have containment of the norms 
 $\closure{\norm{\cdot}}_{L^p_{L,R}}\subseteq\closure{\norm{\cdot}}_{L,R;\sup}=\norm{\cdot}_{L,R;\sup}\cup\norm{\cdot}_{\mc{BV}_{L,R}}$, with equality in the case of $\norm{\cdot}_{L^1_{L,R}}$. 

 In~\secref{deltaprm} we will discuss the Hamiltonian in terms of differential geometric structures and utilize the semicontinuous spaces of functions to analyze the Hamiltonian functional equation Eq.~\eqr{hamdelta2}. Placing the Hamiltonian on a differential manifold will yield a geometric approach, which will give further support to the functional approach that we are proposing. In terms of the semi-continuous topological spaces, we will find the corresponding Hilbert space $\mc{H}$ for the Dirac-$\delta'$ system, which is separable, and admits a semicontinuous orthogonal decomposition such that $\mc{H}=\mc{H}_L\oplus\mc{H}_R$. Therefore $\mc{H}$ is measure valued projective space. The Hilbert space is defined as Sobolev spaces of semicontinuous functions such that the Hamiltonian is bounded in the operator norm. Essentially $\mc{D}({\hat{H}}$ is the semicontinuous functions with $\hat{H}$ Schatten class. These are simply $\left(L^2(\closure{\mbb{R}})\cap C^\infty_{L,R;0}(\closure{\mbb{R}})\right)\subset\closure{C_{L,R}(\closure{\mbb{R}})}$, where the bar denotes the set closure of $C_{L,R}(\closure{\mbb{R}})$.
  
 In~\secref{Krein} we summarize the structure and properties of the indefinite Krien spaces denoted by $\mf{H}$, which contains subspaces that are isomorphic to the Hilbert space $\mc{H}$. It is shown that the Hilbert space and its associated negative norm antispace correspond to the sign of the coupling term to the $\delta'$ potential. The Hilbert space topology is the strong topology of $\mf{H}$, and therefore $\mf{H}$ inherits the orthogonal decomposition from the Hilbert space/antispace states in addition to the orthogonal decomposition in terms of semicontinuous functions from $\mc{H}$. We close the paper with a short summary and concluding remarks on future works in progress.
  
\section{Spaces of Semi-Continuous Functions}\label{semi}
Many points discussed in the previous section will become relevant if we consider the Riemann(Lebesgue)-Stieltjes integral of semi-continuous functions/distributions with measures defined by the Borel sets of half open intervals over $\closure{\mbb{R}}$. We want our space to be reflexive so that the weak$^\ast$ equals the norm topology. Our strategy will be to use the half open topologies on $\closure{\mbb{R}}$, along with set inclusion/exclusion definitions in order to define a measure. Let $\left(\closure{\mbb{R}},\mathscr{B},\mu\right)$ denote our measure space over the extended real line. We also define a normed linear space, $\mc{L}\left(\closure{\mbb{R}},\norm{\cdot}\right):=\mc{L}\left(\closure{\mbb{R}}, \exists~\mu (f)~|~for~some~f,~\norm{f}\in\mc{B} \right)$. Thus we will have a Banach space $\mc{B}$ over $\closure{\mbb{R}}$. Then Riemann-Stieltjes integration will be semicontinuous with respect to the norm inherited through the weak$^\ast$ topology on $\mc{L}$.

In particular we will have a Banach space with isometric isomorphic dual over the space of $C_0^\infty (\closure{\mbb{R}})\subsetneq L^p$-space. It is well known that $L^p$ is the completion of $C_0^\infty (\closure{\mbb{R}})$ with respect to the $L^p$-norm. We will show that with restriction to the solution space of  Eq.~\eqr{hamdelta2} that we may extend our space to a subspace of $L^p(\closure{\mbb{R}},d)$ for $p=2$. This extension will be necessary in order that the domain of kinetic energy operator and the space of test functions for the potential agree. It is also well known the domain of the free Hamiltonian operator $\mc{D}_{H_f}$, is essentially self adjoint on $L^2(\closure{\mbb{R}})$. Let $k$ be Fourier conjugate variable to $x$, then the unique self adjoint extension is the subspace of $L^2(\closure{\mbb{R}})$ functions with Fourier transforms quadratic in $k$. Therefore we need to show that $\mc{D}_H\subseteq \mc{D}_{H_f}$ and that $\mc{D}_H$ is self adjoint on its domain.

Our approach will rely heavily on arguments of continuity, both algebraic and topological. We first start with the class of step functions $\chi$ over finite intervals, which are dense in $L^p$ for all $p$. This is a natural bridge between $L^p$ and $C(\mbb{R})$, as the set $C_0(\closure{\mbb{R}})$ is dense $C(\closure{\mbb{R}})$, and $L^p(\closure{\mbb{R}},\|\cdot\|_p)=\mc{C}(C(\closure{\mbb{R}},\|\cdot\|_{\sup}))$, the norm closure of $C(\closure{\mbb{R}},\|\cdot\|_{\sup})$. Thus any $f\in C(\closure{\mbb{R}})$ can be approximated as some sequence of step functions, $\{\chi_n\}\in L^p(\closure{\mbb{R}})$, and therefore Lebesgue (or gauge Lebesgue-Stieltjes, HK-Stieltjes) integration in $L^p$ is sequentially equivalent to Riemann-Stieltjes integration in $C(\closure{\mbb{R}})$. 

\subsection{Spaces of semi-continuous functions}
 Consider the semi-continuous step functions defined from subsets of the collection of all Borel generating sets of half open intervals in $\closure{\mbb{R}}^1$. The generalization to $\closure{\mbb{R}}^n$ is straight forward. For example, the Heaviside distribution (Fig.\ref{fig1}) can be uniquely defined as a semi-continuous function both from the left $(H_L=H_{(0,\infty]})$ and from the right $(H_R=H_{[0,\infty)})$. \textbf{\textit{However}} uniqueness is lost with the Lebesgue measure. 
 
\begin{figure}
\centering
\begin{tabular}{p{6cm} p{6cm}}
\begin{subfigure}[h]{0.3\textwidth}
\begin{tikzpicture}
	\begin{axis}[my style, scale=.5, ylabel={$H_R(x)$}, xmin=-2, xmax=2, ymin=-2, ymax=2]
	
	\addplot[domain=-3:0]{0};
	\addplot[domain=3:0]{1};
	
	\addplot[mark=*, fill=white]  coordinates {(0,0)};
	\addplot[mark=*] coordinates {(0,1)};
	\end{axis}
\end{tikzpicture}
\subcaption{Right continuous Heaviside}
\label{rthvsd}
\end{subfigure}
&
\begin{subfigure}[h]{0.3\textwidth}
\begin{tikzpicture}
	\begin{axis}[my style, scale=.5, ylabel={$H_L(x)$}, xmin=-2, xmax=2, ymin=-2, ymax=2]
	
	\addplot[domain=-3:0]{0};
	\addplot[domain=3:0]{1};
	
	\addplot[mark=*, fill=white]  coordinates {(0,1)};
	\addplot[mark=*] coordinates {(0,0)};
	\end{axis}
\end{tikzpicture}
\subcaption{Left continuous Heaviside}
\label{lfthvsd}
\end{subfigure}
\end{tabular}
\caption{One-Sided Continuous Heaviside Functions}\label{fig1}
\end{figure}
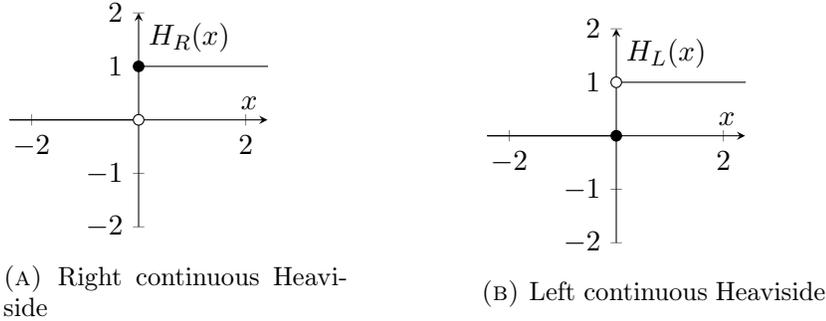

A natural question one may ask is, why are half-open topologies necessarily helpful? One benefit is that in a sense, uniqueness is gained. Here, there is only one regulated, semi-continuous Heaviside function on each measurable space $\left(\closure{\mbb{R}},\mathscr{B}_{(\cdot,\cdot]}\right)$, and $\left(\closure{\mbb{R}},\mathscr{B}_{[\cdot,\cdot)}\right)$. With respect to the corresponding defining topologies, $H_L$ and $H_R$ are unique topologically continuous $L^\infty$ functions. 

The above sense of uniqueness, along with the function spaces to be defined in the following sections, will admit maps from $\mc{S}^\prime$ to spaces of regular distributions for distributions that have primitive functions, such as $\delta$. Such mappings, when they exist, permit well defined functional notions of integration by parts and exponentiation. This would be of use for physicists who work with Feynman's functional integral, which motivates the particular choice of application discussed in~\secref{deltaprm}. In essence, the functional tools at ones disposal, is enlarged with respect to such classes of distributions. Furthermore, the $S^1$ topology utilized herein, is naturally compatible with world sheet topologies of closed (super)strings and we posit the potential existence for non-trivial applications there as well.    

In order to better motivate our later definitions and conventions, we first discuss a trivial example which still highlights particular nuances in semicontinuous spaces. Consider the semi-continuous Heaviside distributions in Fig.\ref{fig1}. Note that $H_L$ and $H_R$ are regulated in the sense of~\cite{Talvila2009}. When paired with some $\phi\in\mc{S}$, these meet the standard definition of a distribution on $\mc{S}$ such that $T_{H_{L,R}}:=\int_{\mbb{R}}{}H_{L,R}(x)\phi(x)dx$, or rather $T_{H_{L,R}}:\mc{S}\to\mbb{R}^+$. They define a semi-positive definite mapping from the space of Schwartz functions to the field of scalars $\mbb{R}^+$. Now consider the signum (sgn) distribution. One typically encounters various definitions in textbooks. For example

\begin{equation}\label{sgn1}
\begin{aligned}
\sgn(x):=\begin{cases}
~1, & x>0\\
-1, & x<0
\end{cases}
\end{aligned}
\end{equation}
or the semicontinuous variants. Another way to define the sgn distribution is to use the distributional identity defined by a linear combination of the completely discontinuous Heaviside distribution\footnote{Though this distributional identity does not hold pointwise with respect to arbitrary measure.}: $H(x)=1$ if $x>0$ and $H(x)=0$ if $x<0$,
\begin{align}\label{sgn2}
\sgn(x)=H(x)-H(-x).
\end{align}
Indeed, take $\phi(x)\in\mc{S}$, which necessarily implies $\phi'(x)\in\mc{S}$. With Lebesgue measure and the usual topology on $\mbb{R}$, then
\beq\label{sgnHid}
\braket{\sgn,\phi'}&=\int_{\mbb{R}}\sgn(x)\phi'(x)dx\\
&=\int^\infty_0 \phi'(x)dx+\int^0_{-\infty}(-1)\phi'(x)dx\\
&=\braket{H(x)-H(-x),\phi'(x)}\\
&=-2\phi(0)
\eeq
The equivalence definitions of sgn in Eqs.\eqr{sgn1} and \eqr{sgn2} is a distributional identity only. However two sequences, $\{g_n\},\{f_m\}$ say, converge to the same distribution $(T_g\leftrightarrow T_f)$ if and only if they converge pointwise in the dual topology. In the case of the semi-continuous Heaviside distributions $H_L$ or $H_R$ (Fig.\ref{fig1}) in Eq.\eqr{sgn2} with a Stieltjes measure on the half-open measure topology, the above distributional identity in Eq.\eqr{sgnHid} does not hold.

This can be seen in the following (see also~\cite{Talvila2009,TalvilaLp}). Take $\phi(x)\in\mc{S}$. From point reflection $(x\to-x)$ and the definition of $H_L(x)$ in Fig.\ref{fig1} we find 
\beq\label{HL}
H_L(-x)=\begin{cases}
1, & x<0\\
0, & x\geq 0
\end{cases}
\eeq
Then for sgn$(x)$ as in Eq.\eqr{sgn2}, but using $H_L(x)$ and a left semicontinuous L-S measure instead (here $\lambda_L(dx)\sim dx$), we have the distributional result
\beq\label{Lsgn1}
\braket{\sgn(x),\phi'(x)}&=\int_{\closure{\mbb{R}}}\sgn(x)\phi'(x)dx\\
&=\int^{0}_{-\infty}\left(H_L(x)-H_L(-x)\right)\phi'(x)dx+\lim_{\epsilon\to0^+}\int_{\epsilon}^{\infty}\left(H_L(x)-H_L(-x)\right)\phi'(x)dx\\
&=\int^{0}_{-\infty}-H_L(-x)\phi'(x)dx+\lim_{\epsilon\to0^+}\int_{\epsilon}^\infty H_L(x)\phi'(x)dx\\
&=\left(0\cdot-\phi(0)+H_L(-\infty)\phi(-\infty)\right)+\phi(\infty)-\lim_{\epsilon\to 0^+}\phi(\epsilon)\\
&=\lim_{\epsilon\to 0^+}\int^{\infty}_{\epsilon}H_L(x)\phi^\prime (x)dx\\
&=\braket{H_L(x),\phi^\prime(x)}
\eeq
In this case we have the distributional identity of $\sgn(x)=H_L(x)$. Analogous calculations yield the same result using $H_R(x)$ or $H_L(x)$ in Eq.~\eqr{sgn2} and the corresponding right or left semi-continuous half-open topologies\footnote{The half-open topologies here are measure norm topologies in terms of Lebesgue-Stieltjes measures.}. An analogous calculation to \eqr{Lsgn1} using the standard Lebesgue measure instead of the L-S measure, one obtains zero! The topology\footnote{We say "topology" here because the L-S measure is chosen to be continuous with respect to the defining topology.} is clearly important regarding distributional identities, and sets of measure zero are now relevant.

We may view the discrepancies between the last example and the distributional "identity" Eq.~\eqr{sgn2} resulting from the lack of reflection symmetry with the distributions $H_L$ and $H_R$\footnote{This does not occur in definitions where there is a discontinuity from both the left and the right directions, as in Eq.\eqr{sgn1}. This is a point which we will return to later.}. Under reflections $x\to-x$, $H_L(x)$ and $H_R(x)$ do not maintain the direction of semi-continuity. Moreover, if we were to employ the distributional derivative after the first line in Eq.\eqr{Lsgn1}, we would incorrectly conclude that $\braket{\sgn(x),\phi'(x)}=-2\braket{\delta(x),\phi(x)}$, in agreement with the expected distributional derivative of sgn. However, doing so after the last line in Eq.\eqr{Lsgn1}, we arrive at $\braket{\sgn(x),\phi'(x)}=-\braket{\delta(x),\phi(x)}$. 

\subsubsection{A glance at De Rham cohomology via homotopies}\label{deRham}
In the space of distributions (or \textit{De Rham} cohomology), the two distributions are equivalent since they differ by a constant, however a homotopy analysis shows that they produce distinct Euler characters. This gives us a hint regarding the nature of the discontinuity. $H_L(x)$ and $H_R(x)$ are each discontinuous at one point. However, using the standard Lebesgue measure, both fail to produce a non-zero distributional identity. 

With Ex.\eqr{Lsgn1}, one may define a contractable homotopy map, such that it may be a representative of the 0-$th$ \textit{De Rham} class $H^0(\closure{\mbb{R}})$, of the piecewise semicontinuous 0-form $H_L(x):~dimH^0(\closure{\mbb{R}})=dim~(\closure{\mbb{R}})=1$, where $b_0$ is the Betti number. Define the homotopy parameter $t$, such that it is locally equivalent to the directional (left/right) limit of the singular point in $H_L(x)$ and $\delta(x)$. It follows that the Euler character (with $\delta(x)dx$, a 1-from) is easily seen to be $\chi(\closure{\mbb{R}})=(-1)^0 (1)+(-1)^1 (1)=0=\chi(S^1)$. In this case we have that the homotopies $\sgn(x)$ and $H_L(x)$ are cohomologous, which implies that $\mathbold{d}(H_L(x)-H_L(-x))=\delta(x)\ dx$ is a $C^\infty(\closure{\mbb{R}},\mbb{R}^+)$ projective diffeomorphism on $\closure{\mbb{R}}^1$, in the sense that the support of $\mathbold{d}(H_L(x)-H_L(-x))\in (0^+,\infty]$ and not $(-\infty,\infty]$.

We repeat the analogous calculation with $\sgn(x)$ (Eq.~\eqr{sgn1} on $\closure{\mbb{R}}$) with the usual topology and Lebesgue measure. The Euler character is $\chi(\mbb{\closure{R}})=(-1)^0 (2)+(-1)^1(1)=1$. $\sgn{x}$ maps $\closure{\mbb{R}}$ into two disconnected components $(\mbb{R}^-,\mbb{R}^+)$. Here, $\mathbold{d}\sgn(x)=2\delta(x)\ dx$, which is no longer a projection, as was the case above. As a homotopy, $\mathbold{d}\sgn(x)$ must either map to inequivalent De Rham groups depending on which measure topology one implements, or must violate the equivalence between the homotopy and De Rham groups. No matter the case, the distinct Euler characters show that there is an inherent topological difference between $\sgn$ (as Eq.~\eqr{sgn1}) and linear combinations of $H_L$ (as in Ex.~\ref{Lsgn1}), and analogously for $\sgn$ defined from $H_R$. 

$\closure{\mbb{R}}^1$ is homeomorphic to $S^1$. It is well known that for $S^1$, the two cohomology groups of compact support are equivalent, $H^0_c(S^1)=H^1_c(S^1)=\mbb{R}$. Moreover, generally for some compact manifold $M$, $H^p_c(M)=H^p(M)$. In terms of the above homotopies, a discontinuity at the origin of $\closure{\mbb{R}}$ is equivalent to a cut at a some point on $S^1$. For semicontinuous topologies, a discontinuity depends on the direction (orientation) in which the limit is taken, or rather, on the left/right half-open interval topology. The semicontinuous homotopy projects onto the continuous path connected component of $\closure{\mbb{R}}$. 

We see that we may identify the above topological distinction in the half-open topologies with a violation of reflection symmetry, if we choose the convention to define the sgn distribution(s) such that the reflection symmetry is maintained. We therefore define the left semi-continuous sgn function as
\beq\label{sgnL}
\sgn_L(x):=H_L(x)-H_R(-x),
\eeq
and similarly, the right semi-continuous sgn function as 
\beq\label{sgnR}
\sgn_R(x):=H_R(x)-H_L(-x).
\eeq

\begin{example}{2.1.1:}\label{expl3}
 With respect to the left continuous half-open measure topology  (here, $\mu_L(dx)\sim dx$)\footnote{We could have instead used the left semicontinuous L-S measure: $\lambda_L(dx)\sim dx$. We will establish these equivalence classes below.}, $\sgn_L$ yields the expected distributional equivalence,
 \beq\label{sgnL}
 \braket{\sgn_L(x),\phi'(x)}_{(-\infty,\infty]}&=\int^0_{-\infty}\sgn_L(x)\phi'(x)dx+\lim_{\epsilon\to 0^+}\int^{\infty}_{\epsilon}\sgn_L(x)\phi'(x)dx\\
 &=\int^0_{-\infty}\left(H_L(x)-H_R(-x)\right)\phi'(x)dx+\lim_{\epsilon\to 0^+}\int^{\infty}_{\epsilon}\left(H_L(x)-H_R(-x)\right)\phi'(x)dx\\
 &=\left(0\cdot\phi(0)-0\cdot\phi(-\infty)\right)-\left(\phi(0)-\phi(-\infty)\right)+\\
 &\qquad\left(\phi(\infty)-\lim_{\epsilon\to 0^+}\phi(\epsilon)\right)-\left(0\cdot\phi(\infty)-0\cdot\lim_{\epsilon\to 0^+}\phi(\epsilon)\right)\\
 &=-\phi(0)-\phi(0^+)\\
 &=-2\phi(0).
 \eeq
Analogous calculations show,
\begin{align}
\label{sgnR,oc}
\braket{\sgn_R(x),\phi'(x)}_{(-\infty,\infty]}&=0\\
\label{sgnL,co}
\braket{\sgn_L(x),\phi'(x)}_{[-\infty,\infty)}&=0\\
\label{sgnR,co}
\braket{\sgn_R(x),\phi'(x)}_{[-\infty,\infty)}&=-2\phi(0).
\end{align}
The measure and measure space topologies in Eqs.\eqr{sgnL}-\eqr{sgnR,co} were chosen to coincide with the semi-continuity of $H_L$ and $H_R$, which has obvious generalizations to the entire class of step functions.
\end{example}

Another homotopy analysis of the last example, shows that $\mathbold{d}\sgn(x)$ is still a projective diffeomorphism such that its support is in either $(-\infty,\infty]$ or $[-\infty,\infty)$, depending on the particular chosen left/right half-open topologies. Furthermore, for each non-zero result in Ex.~\ref{expl3}, the Euler character is $\chi(\closure{\mbb{R}})=0$, as we would like. Hence we confirm that reflections of semicontinuous homotopic maps is a homeomorphism invariant of the Euler character. 

Ex.~\ref{expl3} implies that weak equivalences with respect to classes of step functions may be engineered such that reflection symmetry is maintained or broken, depending on one's particular preference. In what follows we will study topological spaces which preserve the reflection symmetry of the classes of semicontinuous step functions. In this sense, the classes semicontinuous of step functions are the simple function representatives of the maximally symmetric classes of functions of these spaces. 

\subsubsection{Spaces of semicontinuous functions in $C\left(\closure{\mbb{R}}\right)$}\label{CR}
With the above in mind, we make the following definitions. 
\begin{defin}\label{measurespaces}
Let $\mathscr{B}$ be the collection of all generating Borel sets on $\closure{\mbb{R}}$, such that $\mathscr{B}$ is a $\sigma$-algebra with the usual topology. We denote by $\mathscr{B}_L$, all countable disjoint unions of left continuous half-open Borel sets $(\cdot,\cdot]$, taken as the generating sets for $\mathscr{B}_L$. Similarly we denote by $\mathscr{B}_R$, the half-open right continuous generating Borel sets $[\cdot,\cdot)$. Clearly $\mathscr{B}_L,\mathscr{B}_R\subseteq\mathscr{B}$ are also $\sigma$-algebras on $\closure{\mbb{R}}$. We define the $\sigma$-finite measure space $X=\left(\closure{\mbb{R}},\mathscr{B}, \lambda\right)$, with $\lambda$, the standard Lebesgue measure on $\closure{\mbb{R}}$. We also separately define the measure spaces for the left and right half-open Borel sets as $X_{\mu_L}=\left(\closure{\mbb{R}},\mathscr{B}_L,\mu_L\right)$ and $X_{\mu_R}=\left(\closure{\mbb{R}},\mathscr{B}_R,\mu_R\right)$ respectively, where $\mu_L=\mu_{(\cdot,\cdot]}$ and $\mu_R=\mu_{[\cdot,\cdot)}$ are the appropriate  ($\sup$, $\braket{\cdot,\cdot}$) norms. 
\end{defin}
\textbf{Note:} Since we are on $\closure{\mbb{R}}\cong S^1$, half-open intervals are indeed open sets \cite[Ch. 6.3 a]{roman1}. Thus we have no problems taking the left (resp. right) half-open intervals as generating sets. It is also important to point out here that $X_{\mu_L,\mu_R}$ are \textbf{\textit{not}} considered \textit{a priori} to be \textit{bitopological} spaces. Though, it is surely possible to define such set structures. $X_{\mu_L,\mu_R}$ is notational convenience to collectively denote the distinct measure spaces $X_{\mu_L}$ and $X_{\mu_R}$. 

\begin{defin}\label{measure}
Let $\lambda$ be the standard Lebesgue measure on $\closure{\mbb{R}}$ generated by any Borel set $\mathscr{B}$. We define the measure inclusions for $\mathscr{B}_L$ and $\mathscr{B}_R$ to be that $\mu_L\preceq\lambda$, and $\mu_R\preceq\lambda$, such that $\mu_L,\mu_R$ will be the finer or equivalent continuous topologies with respect to the Lebesgue measure $\lambda$. For Lebesgue measure zero sets (LMZ sets), we regard all topologies and measure spaces $X,X_{\mu_L},X_{\mu_R}$ as equivalent.  
\end{defin}

Obviously all topologies generated on $\mbb{R}$, the usual topologies ($i.e.$ half-open topologies or other), are homeomorphic and generate paracompact subsets with respect to $\lambda$. Using generating Borel sets, we have that for any finite half-open $Y_{\mu_L,\mu_R}\subset X_{\mu_L,\mu_R}$, then $Y_{\mu_L,\mu_R}$ will also contain open, closed, and half-open subsets which are finer to $Y_{\mu_L,\mu_R}$. Similarly, for some finite $Y\subset X$, $Y$ will contain finer closed, open, and half-open subsets. Thus for any continuous mapping $f$, between the topological metric spaces $X,X_{\mu_L,\mu_R}$ with the respective norm topologies, we assume that generally there is always a coarser or finer gauge refinement such that $f^{-1}$ exists and is also continuous. We assume the relative topologies to be the weakest relative topologies such that $f$ will be bijectively continuous. Or rather, for any Lebesgue measure $\lambda$, we may define $\mu_L,\mu_R$ such that either may be extended by an appropriate set of measure zero (with respect to the half-open topologies) giving $\closure{\mu}_L,\closure{\mu}_R=\lambda$. The notable characteristic of this construction is that sets which contain finite jump discontinuities with respect to $\lambda$ are defined such that a finite discontinuity becomes left/right semicontinuous. In this respect, we regard $\mu_L,\mu_R$ as continuous Lebesgue-Stieltjes measures on subsets of $X$, and $\lambda$ as a continuous Lebesgue measure on subsets of $X_{\mu_L,\mu_R}$. Thus for LMZ sets, $\closure{\mu}_L(a)=\closure{\mu}_R(a)=\lambda(a)$, for some $a\in X$, such that the LMZ set $\{a\}$ is closed in all spaces, $X,X_{\mu_L,\mu_R}$. 

It is well known that measure theory of functions and topology can be intimately linked\cite{roman2}. The advantage of the coinciding half-open Borel topologies and measure topologies, is that we have the ability to use the discrete reflection (or rather parity) symmetry to describe a (partial) continuous symmetry on the subspaces $X_{\mu_L,\mu_R}$ separately. 

\begin{lem}\label{lemma1}
The topological metric spaces $X_{\mu_L}$ and $X_{\mu_R}$ are isometric homeomorphisms under the continuous identity mapping $e$, such that $e: X\cong X_L$ and $e:X\cong X_R$. Moreover, the topological metric subspaces $X_L\cong X_R$ are isometrically homeomorphic under $e$.
\end{lem}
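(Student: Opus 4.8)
The plan is to exhibit the identity mapping $e(x)=x$ on $\closure{\mbb{R}}$ as the desired isomorphism and to verify, in turn, that it is a bijection, a homeomorphism, and an isometry for each of the three pairs of spaces. Bijectivity is immediate, since $e$ is the set-theoretic identity on the common underlying point set $\closure{\mbb{R}}$; the entire content lies in comparing the topologies $\mathscr{B},\mathscr{B}_L,\mathscr{B}_R$ and the measures $\lambda,\mu_L,\mu_R$ under this single map.

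For the homeomorphism assertions $e:X\cong X_L$ and $e:X\cong X_R$, I would argue that $e$ carries a generating basis to a generating basis in both directions. A basic open set of $X$ is an open arc of $S^1$, while the generating sets of $X_L$ (resp. $X_R$) are the left (resp. right) half-open intervals. By the Note following Definition~\ref{measurespaces}, on $\closure{\mbb{R}}\cong S^1$ these half-open intervals are genuinely open, so each generating set of $X_L$ or $X_R$ is open in the $S^1$ topology, and conversely an open arc is a countable union of half-open generators; hence $e$ and $e^{-1}$ send basic open sets to open sets. Where a single endpoint would spoil openness on the non-compact line, I would invoke the refinement principle stated after Definition~\ref{measure}: the relative topologies are taken to be the weakest ones for which $e$ is bijectively continuous, with a finer/coarser gauge available so that $e^{-1}$ is continuous as well. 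Composing these two homeomorphisms, the map $X_L\to X\to X_R$ is again the point identity, which establishes $X_L\cong X_R$ under $e$.

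For the isometry (equivalently, measure-preservation under the ($\sup$, $\braket{\cdot,\cdot}$) norms of Definition~\ref{measurespaces}), I would appeal directly to Definition~\ref{measure}: $\mu_L\preceq\lambda$ and $\mu_R\preceq\lambda$, with the completed measures satisfying $\closure{\mu}_L=\closure{\mu}_R=\lambda$ and all three spaces agreeing on LMZ sets. Since distances and norms in these spaces are computed from the measures, and the measures coincide except on sets of Lebesgue measure zero, which contribute nothing to any Stieltjes integral or total variation, $e$ preserves every such distance; thus $e$ is an isometry for each pair and, by the same LMZ-agreement, for $X_L\cong X_R$.

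The hard part will be the homeomorphism at the level of orientation, not the isometry. On the ordinary lower- and upper-limit (Sorgenfrey) lines the identity between the left half-open and the right half-open topologies is \emph{not} continuous, because a set such as $(a,b]$ fails to be open in the upper-limit topology at its right endpoint; the obvious homeomorphism $X_L\cong X_R$ is instead the reflection $x\mapsto-x$. The lemma nonetheless asserts the isomorphism for the identity $e$, so the crux is to show that the $S^1$ gluing at the point(s) at infinity really does convert the left and right half-open topologies into homeomorphic ones, rather than merely stipulating it through the weakest-topology convention. I expect this endpoint/orientation bookkeeping on the compactified $\closure{\mbb{R}}\cong S^1$ --- reconciling the genuine openness of half-open arcs with the direction of semicontinuity --- to be the one step requiring real care, with bijectivity, the basis comparison, and the measure-preservation all following routinely once it is in place.
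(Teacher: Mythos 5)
Your route is the same as the paper's: its entire proof is the assertion that $e$ is ``a bicontinuous mapping of open (closed) subsets of $(\mbb{R},\mathscr{B})$ onto $(\mbb{R},\mathscr{B}_L)$ and $(\mbb{R},\mathscr{B}_R)$,'' is distance preserving, and that $X_{\mu_L}\cong X_{\mu_R}$ follows by transitivity --- precisely your bijection/basis/measure/composition scheme, with less detail than you supply.

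However, the concern you flag in your last paragraph is a genuine gap, and it is exactly the step the paper dismisses as trivial. Under the standard reading, in which $\mathscr{B}_L$ and $\mathscr{B}_R$ generate their topologies, the identity is a homeomorphism for \emph{none} of the three pairs. The topology generated by the sets $(\cdot,\cdot]$ is strictly finer than the usual one, so $e:X\to X_{\mu_L}$ is already discontinuous: the preimage of the open set $(a,b]$ is $(a,b]$ itself, which is not open in $X$. Likewise $[a,b)$ is not open in the left half-open topology, since any generator $(c,d]$ containing the endpoint $a$ must contain points of $(c,a)$, which lie outside $[a,b)$; hence $e:X_{\mu_L}\to X_{\mu_R}$ is discontinuous as well. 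Passing to $\closure{\mbb{R}}\cong S^1$ does not repair this. In the left arc topology on the circle each $(a,b]$ is clopen (its complement is the arc $(b,a]$), but $[a,b)$ still fails to be open at $a$ by the same endpoint argument, so the left and right arc topologies remain distinct, non-compact, Sorgenfrey-type topologies, and the natural homeomorphism between them is the reflection $x\mapsto -x$, not the identity. Consequently the Note after Definition~\ref{measurespaces}, which you lean on in your second paragraph, cannot be taken at face value: half-open intervals are open in their own generated topology (trivially) and, in the two-point compactification, only those intervals reaching $\pm\infty$ are open in the order topology; a generic $(a,b]$ is not open in the standard topology of $\closure{\mbb{R}}$ or of $S^1$. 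So the basis-to-basis argument runs in only one direction.

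What remains is the stipulation following Definition~\ref{measure}, that the relative topologies are the weakest ones making $e$ bijectively continuous. Under that convention the lemma is true but contentless: the identity is bicontinuous between two topologies on the same set if and only if the topologies are equal, so the stipulation forces $X_{\mu_L}$ and $X_{\mu_R}$ to carry the standard topology, and the half-open structure survives only in the measures $\mu_L,\mu_R$, not in the topologies. Your instinct is therefore correct on all counts: the endpoint/orientation bookkeeping is the crux, it cannot be discharged by the $S^1$ gluing, and the lemma as stated holds only by stipulation (or with $e$ replaced by reflection in the claim $X_{\mu_L}\cong X_{\mu_R}$). The measure-theoretic half of your argument --- agreement of $\mu_L$, $\mu_R$, $\lambda$ up to sets of Lebesgue measure zero --- is consistent with the paper's conventions and is not where the difficulty lies.
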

\begin{proof}{:}
Trivial. $e$ is a bicontinuous mapping of open (closed) subsets of the topological space $\left(\mbb{R},\mathscr{B}\right)$ onto the spaces $\left(\mbb{R},\mathscr{B}_L\right)$ and $\left(\mbb{R},\mathscr{B}_R\right)$. $e$ is a distance preserving map in the respective norm topologies, and therefore $X\cong X_{\mu_L}$ and $X\cong X_{\mu_R}$ are isometric homeomorphisms. Transitively, $X_{\mu_L}\cong X_{\mu_R}$.
\end{proof}

\begin{rmk}\label{rmrk32}
Clearly $X$ is a homogeneous space under the identity map. We could have chosen not to separately define the spaces $X_{\mu_L},X_{\mu_R}$, then trivially shown that they are isometrically homeomorphic. However it in what follows it is better to have these spaces separately defined. Indeed as we have seen in Ex.\ref{expl3} above, the class of step functions do not possess equal measures between $X_{\mu_L}$ and $X_{\mu_R}$.   
\end{rmk}

At this point we have equivalence between the topological metric spaces under the identity map in the respective norm topologies. However in light of Ex.\ref{expl3} equivalence between linear function(al) spaces is clearly not possible for all function spaces. In particular for the class of semicontinuous step functions, reflection symmetry (for the semicontinuous sgn functions\footnote{as defined from Eqs.\eqr{sgnL} and \eqr{sgnR}}, and thus linear combinations of $H_L$ and $H_R$) and semi-continuity are only isometrically preserved in the spaces of corresponding semicontinuous norm topology.

\begin{lem}\label{lemCR}
Let $C\left(\closure{\mbb{R}}\right)$ denote the class of bounded, linear, continuous functions over $\closure{\mbb{R}}$. For any function $f$ defined on subsets $C\left(\closure{\mbb{R}}\right)$, the normed linear vector spaces $C_{\mu_L}\equiv C\left(\closure{\mbb{R}},\mu_L\right),C_{\mu_R}\equiv C\left(\closure{\mbb{R}},\mu_R\right)$, and $C_\lambda \equiv C\left(\closure{\mbb{R}},\lambda\right)$ are isometrically isomorphic. Moreover, they are Banach spaces.
\end{lem}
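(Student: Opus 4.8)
The plan is to exhibit a single map realizing all three isometric isomorphisms at once, and the natural candidate is the identity map $e$ of Lemma~\ref{lemma1}. Because $e$ acts as the identity on the points of $\closure{\mbb{R}}$, the map it induces on function spaces is literally $f\mapsto f$: only the measure structure against which $f$ is regarded changes, not $f$ itself. Linearity and bijectivity of this induced map are therefore immediate, so the entire content of the first assertion collapses onto a single verification, namely that the norm of $f$ is the same whether it is computed in $C_{\mu_L}$, $C_{\mu_R}$, or $C_\lambda$.

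First I would establish that norm equality, which is the real heart of the statement and the precise reason $C(\closure{\mbb{R}})$ behaves so differently from the step functions of Example~\ref{expl3}. A continuous function has no jump discontinuities, so its value at every point---including the points of the Lebesgue-measure-zero (LMZ) sets on which $\mu_L$, $\mu_R$, and $\lambda$ are permitted to disagree by Definition~\ref{measure}---is unambiguous and shared across all three spaces. For the $\norm{\cdot}_{\sup}$ norm this is decisive: the value $\sup_{x\in\closure{\mbb{R}}}|f(x)|$ is computed pointwise and makes no reference to any measure whatsoever, so the three sup-norms coincide identically, and since $\closure{\mbb{R}}\cong S^1$ is compact the supremum is in fact attained. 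For the integral pairing $\braket{\cdot,\cdot}$ the same conclusion follows from Definition~\ref{measure}, which gives $\closure{\mu}_L=\closure{\mu}_R=\lambda$ after completion by LMZ sets; for any continuous (hence $\mu_L,\mu_R,\lambda$-integrable) $f$ one then has $\int f\,d\mu_L=\int f\,d\lambda=\int f\,d\mu_R$, the measures differing only on LMZ sets, which contribute nothing to the integral of a continuous integrand. Either way $e$ is norm-preserving, hence an isometric isomorphism, and transitivity (exactly as in Lemma~\ref{lemma1}) links all three spaces.

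Next I would verify that the common space is Banach by invoking the classical fact that $C(K)$ for a compact Hausdorff $K$ is complete under the sup norm. Since $\closure{\mbb{R}}$ is compact Hausdorff, a uniformly Cauchy sequence of continuous bounded functions converges uniformly to a limit that is again continuous and bounded, so the limit lies in the space; completeness of $C_\lambda$ in $\norm{\cdot}_{\sup}$ thus holds, and because the three spaces are isometrically isomorphic by the previous paragraph, completeness transfers verbatim to $C_{\mu_L}$ and $C_{\mu_R}$.

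I expect the only genuine subtlety---as opposed to routine bookkeeping---to be articulating \emph{why} the half-open versus closed distinction that was decisive for step functions in Example~\ref{expl3} evaporates here. The argument must make explicit that the sets on which $\mu_L$, $\mu_R$, and $\lambda$ may differ are exactly LMZ sets, and that continuity forbids precisely the kind of jump at such a point that could render the measure choice visible. Once this observation is isolated the remainder is mechanical, so I would present it as the crux rather than burying it inside the norm computation.
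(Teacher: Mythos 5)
Your proposal is correct and takes essentially the same route as the paper: the paper's proof (especially its alternative version) likewise uses the identity map $e$ as the isometric isomorphism, derives norm equality from the fact that $\mu_L,\mu_R,\lambda$ cannot be distinguished by continuous functions, and obtains Banach-ness by transferring the completeness of $C_\lambda$ through the isometry. Your treatment is somewhat more explicit than the paper's on the two key points---that the sup norm is measure-independent and that completeness follows from the classical uniform-limit argument on the compact space $\closure{\mbb{R}}\cong S^1$, where the paper merely asserts $C_\lambda$ is norm-complete---but the underlying argument is the same.
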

\begin{proof}{:}
Let $f$ be any measurable, bounded, and continuous Borel function with domain $\mc{D}_f$ such that $\mc{D}_f\subseteq C(\closure{\mbb{R}})\to\mbb{R}$. $f$ is continuous if and only if $\lim_{x\to a^-}f(x)=\lim_{x\to a^+}f(x)=f(a)$ for every $a\in \mc{D}_f$. The measures $\mu_L,\mu_R,\lambda$ are mappings from homeomorphic topological spaces $X_{\mu_L},X_{\mu_R},X$ to $\mbb{R}$ respectively. Since the measures $\mu_L,\mu_R\preceq \lambda$ almost everywhere, they are continuous. Thus for any continuous function $f:C\left(\closure{\mbb{R}},\mu\right)\to\mbb{R}$, with $\mu$ either $\mu_L, \mu_R, \lambda$, and we have $\mu_L(f)\to\lambda(f)$ in the strong norm topology, and similarly for $\mu_R$. Thus by continuity $\mu_L(f)=\mu_R(f)=\lambda(f)$, for all $f$ with $\mc{D}_f\in C\left(\closure{\mbb{R}}\right)$. Since $f$ is a linear, norm preserving, continuous mapping such that $f: X_{\mu_L},X_{\mu_R},X\to\mbb{R}$, $f$ must also be injective, which implies that it has a kernel with $ker(f)=\{0\}$. Therefore, for each $x\in Rng(f)$, there exists a $g\in C\left(\mbb{R}\right)$ where $g=f^{-1}(x)$. Thus $f$ is a bijective invertible map over $X_{\mu_L},X_{\mu_R},$ and $X$. Therefore $C_{\mu_L}\cong\ C_\lambda$, and similarly for $C_{\mu_R}$. By transitivity, $C_{\mu_L}\cong C_{\mu_R}$. $C\left(\closure{\mbb{R}}\right)$ inherits the norm from the metric on $C_\lambda$, which is a norm-complete linear metric space. Therefore $C_\lambda$ is a Banach space of bounded continuous functions over $\closure{\mbb{R}}$. It follows that $C_{\mu_L}, C_{\mu_R}$ are also Banach spaces since each is isometrically isomorphic to $C_\lambda$.
\end{proof}
\begin{proof}{:}{(Alt):}
Let be $B\left(C_\lambda,C_{\mu_L}\right)$ denote the space of bounded, linear, continuous mappings from $C_\lambda\to C_{\mu_L}$. Take $f$ to be a bounded, linear, continuous function in $C\left(\closure{\mbb{R}}\right)$. By definition $f$ is Lebesgue integrable, and hence is measurable in $X$ with $\lim_{x\to a^+}f(x)=\lim_{x\to a^-}f(x)=f(a)$. The identity map $e$, acts as an invertible, isometric, bijective linear transformation. Take $e\circ f$ such that $e\circ f:C_\lambda\to C_{\mu_L}$ over the field of scalars $\closure{\mbb{R}}$, implies $f$ is bounded, linear and continuous in $C_{\mu_L}$. Therefore $C_\lambda\cong C\mu_L$. This holds analogously for $C_{\mu_R}$. By transitivity, $C_{\mu_L}\cong C_{\mu_R}$. $C\left(\closure{\mbb{R}}\right)$ inherits the norm from the metric on $C_\lambda$, which is a norm-complete linear metric space. Thus $C_\lambda$ is a Banach space of bounded continuous functions over $\closure{\mbb{R}}$. It follows that $C_{\mu_L}, C_{\mu_R}$ are also Banach spaces since each is isometrically isomorphic to $C_\lambda$.
\end{proof}

\begin{cor}\label{blcf1}
Since $C_\lambda, C_{\mu_L},$ and $C_{\mu_R}$ are isometrically isomorphic Banach spaces, we have trivially have that $\mc{B}\left(C_\lambda ,C_{\mu_L,\mu_R}\right)$ and $\mc{B}\left(C_{\mu_L,\mu_R},C_\lambda\right)$ are also Banach spaces, where $\mc{B}\left(\cdot ,\cdot\right)$ denotes the space of bounded continuous linear mappings.
\end{cor}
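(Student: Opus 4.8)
The plan is to reduce the statement to the standard fact that the space of bounded linear operators $\mc{B}(X,Y)$ is complete whenever the target $Y$ is complete, the domain $X$ being required only to be a normed space. Lemma~\ref{lemCR} has already supplied exactly what is needed: $C_\lambda$, $C_{\mu_L}$, and $C_{\mu_R}$ are all Banach spaces (indeed isometrically isomorphic under $e$). Hence in each of the four operator spaces $\mc{B}(C_\lambda, C_{\mu_L})$, $\mc{B}(C_\lambda, C_{\mu_R})$, $\mc{B}(C_{\mu_L}, C_\lambda)$, and $\mc{B}(C_{\mu_R}, C_\lambda)$ both the domain and the codomain are Banach, so the conclusion is immediate once the general fact is invoked.

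Concretely, I would first equip $\mc{B}(\cdot,\cdot)$ with the operator norm $\norm{T} = \sup_{\norm{f}\le 1}\norm{Tf}$ and verify the norm axioms (positivity with trivial kernel, absolute homogeneity, and subadditivity), each of which descends routinely from the corresponding property of the norm on the target space. The only substantive step is completeness. Given a Cauchy sequence $\{T_n\}$ in the operator norm, the estimate $\norm{T_n f - T_m f} \le \norm{T_n - T_m}\,\norm{f}$ shows that $\{T_n f\}$ is Cauchy in the target for each fixed $f$; since the target is one of the complete spaces from Lemma~\ref{lemCR}, the pointwise limit $Tf := \lim_{n} T_n f$ exists. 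Linearity of $T$ follows from the linearity of each $T_n$ together with continuity of the vector-space operations, and a standard $\eps/3$ argument (letting $m\to\infty$ in the Cauchy estimate, uniformly in $\norm{f}\le 1$) shows that $T$ is bounded and that $T_n\to T$ in the operator norm.

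Alternatively, and this is presumably the ``trivial'' route the corollary alludes to, I would transport completeness through the isometric isomorphism $e$ of Lemma~\ref{lemCR}. Post-composition with $e$ yields an isometric isomorphism $\mc{B}(C_\lambda, C_{\mu_L}) \cong \mc{B}(C_\lambda, C_\lambda)$, pre-composition handles $\mc{B}(C_{\mu_L}, C_\lambda)$ in the same way, and the space of bounded operators on a single Banach space is a Banach algebra and hence Banach. Either way the corollary collapses to completeness of the common target already established.

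There is essentially no obstacle here; the one point meriting care is the asymmetry that completeness of $\mc{B}(X,Y)$ rests on completeness of $Y$ rather than of $X$, so it is precisely the Banach property of the \emph{target} (from Lemma~\ref{lemCR}) that does the work. I would therefore state the operator norm explicitly so that ``Banach space'' is unambiguous, and remark that the isometric isomorphisms render the choice of which of $\lambda$, $\mu_L$, $\mu_R$ sits in the domain versus the codomain immaterial.
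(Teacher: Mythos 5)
Your proposal is correct and follows essentially the same route as the paper: the corollary is stated there without a separate proof, as a trivial consequence of Lemma~\ref{lemCR} together with the standard fact that $\mc{B}(X,Y)$ is Banach whenever the target $Y$ is Banach (a fact the paper itself invokes later, with a citation, in its remark on the spaces $\mc{B}_{L,R}$ of linear transformations). Your write-up simply makes explicit the operator-norm completeness argument that the paper leaves implicit, and your observation that completeness rests on the \emph{target} space is the right point of care.
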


\begin{rmk} The Banach spaces $C_{\mu_L,\mu_R}$ are just two copies of the same Banach space $C_\lambda$. This construction is by choice and rather inert for the class of continuous functions on $\closure{\mbb{R}}$. We trivially have $f\in C_\lambda$ \textit{if and only if} $f\in C_{\mu_L}\cap C_{\mu_R}$, by definition. However, this will not be true for more general function(al) spaces.
\end{rmk} 

If we consider the measure spaces $X_{\mu_L,\mu_R}$ and the linear function spaces $C_{\mu_L,\mu_R}$, we may view them as the topological quotient spaces $X/X_{\mu_L,\mu_R}$ and linear function spaces $C_\lambda/C_{\mu_L,\mu_R}$. 

\begin{defin}\label{topq}
We define the equivalence class structures $X_L=X/X_{\mu_L}$ and $X_R=X/X_{\mu_R}$ as the equivalence classes of left and right semi-continuous measure spaces over the measure space $X=\left(\closure{\mbb{R}},\mathscr{B},\lambda\right)$.
\end{defin}   
Similarly,
\begin{defin}\label{linq}
We define the Banach space equivalence class structures $C_L=C_\lambda /C_{\mu_L}$ and $C_R=C_\lambda /C_{\mu_R}$ as the equivalence classes of continuous functions which have equivalent left (resp. right) measures for all $f\in C_\lambda=C\left(X,\lambda\right)\iff f\in C_L\cap C_R$, and $\mu_L (f)=\mu_R (f)=\lambda (f)$.
\end{defin}

\begin{thm}\label{bclf2}
The space of bounded linear continuous mappings $B\left(C_\lambda ,C_{L,R}\right)$, and $B\left(C_{L,R},C_\lambda\right)$ are isometrically isomorphic  Banach spaces with the uniform norm topologies. Moreover, they are locally convex and separable.
\end{thm}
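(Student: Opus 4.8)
The plan is to push everything through the isometric isomorphisms of Lemma~\ref{lemCR} and then invoke standard operator-space facts. First I would observe that $C_L$ and $C_R$ are themselves Banach spaces: by Definition~\ref{linq} and the Remark after Corollary~\ref{blcf1}, for continuous functions one has $f\in C_\lambda$ exactly when $f\in C_{\mu_L}\cap C_{\mu_R}$ with $\mu_L(f)=\mu_R(f)=\lambda(f)$, so the quotient defining $C_{L,R}$ collapses and $C_{L,R}$ is isometrically isomorphic to $C_\lambda$. Write $\Phi\colon C_\lambda\to C_{L,R}$ for this isometry (the identity map $e$ of Lemma~\ref{lemma1} followed by the quotient identification). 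Since $C_{L,R}$ is then complete, the standard fact that $B(X,Y)$ is a Banach space in the operator norm whenever the target $Y$ is complete applies to both $B(C_\lambda,C_{L,R})$ and $B(C_{L,R},C_\lambda)$, exactly as in Corollary~\ref{blcf1}; this settles the Banach-space claim.

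For the isometric isomorphism between the two operator spaces I would conjugate by $\Phi$. Given $T\in B(C_\lambda,C_{L,R})$, set $\Psi(T):=\Phi^{-1}\circ T\circ\Phi^{-1}$, a map $C_{L,R}\to C_\lambda$, hence an element of $B(C_{L,R},C_\lambda)$; the assignment $S\mapsto\Phi\circ S\circ\Phi$ is a two-sided inverse, and since $\norm{\Phi}=\norm{\Phi^{-1}}=1$ one gets $\norm{\Psi(T)}=\norm{T}$. Thus $\Psi$ is a linear surjective isometry and $B(C_\lambda,C_{L,R})\cong B(C_{L,R},C_\lambda)$. Local convexity then needs no separate work: each space is a normed vector space, so the operator-norm balls furnish a convex local base at the origin.

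The separability assertion is where I expect the real difficulty, and it is the step I would scrutinize most. The natural opening move does succeed: $\closure{\mbb{R}}\cong S^1$ is compact metric, so Stone--Weierstrass makes the rational-coefficient trigonometric polynomials dense in $C(\closure{\mbb{R}},\norm{\cdot}_{\sup})=C_\lambda$, whence $C_\lambda$ and (via $\Phi$) $C_{L,R}$ are separable. The obstruction is that separability of the base spaces does not transfer to the operator norm: the rank-one family $T_t\colon f\mapsto f(t)\,\mathbf{1}$, indexed by $t\in\closure{\mbb{R}}$, satisfies $\norm{T_t-T_s}=\sup_{\norm{f}_{\sup}\le1}\lvert f(t)-f(s)\rvert=2$ for $t\neq s$, an uncountable $2$-separated set inside $B(C_\lambda,C_\lambda)$. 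To rescue the claim I would restrict to the operator class actually used for $\hat{H}$ in the sequel --- the compact (Schatten) operators anticipated in the introduction --- and argue separability there by norm-approximating each such operator with finite-rank maps assembled from a countable dense subset of $C_\lambda$ together with a countable norming family of evaluation functionals. Making this restriction precise, and verifying that finite-rank operators are norm-dense in the chosen class (an approximation-property statement, available since $C(\closure{\mbb{R}})$ carries a Schauder basis), is the crux on which a correct separability statement rests.
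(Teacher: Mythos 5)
Your handling of the Banach-space and isometry claims is correct and in fact tighter than the paper's own argument. The paper proceeds by declaring $C_\lambda$ ``homogeneous,'' asserting that any bounded linear map $f\colon C_\lambda\to C_{L,R}$ ``can be regarded as a continuous isometric mapping,'' and deducing invertibility from norm preservation --- which fails as literally stated (the zero map is bounded, linear, continuous, and neither isometric nor invertible). What the paper actually needs, and what you supply, is the conjugation map $T\mapsto\Phi^{-1}\circ T\circ\Phi^{-1}$ built from the isometry $\Phi\colon C_\lambda\to C_{L,R}$ of Lemma~\ref{lemCR} and the collapse of the quotient noted after Corollary~\ref{blcf1}; your verification that this is a linear surjective isometry with two-sided inverse $S\mapsto\Phi\circ S\circ\Phi$ is the clean route to the paper's intended conclusion. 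Completeness of $B(X,Y)$ for complete target $Y$, and local convexity of any normed space, are handled the same way in both arguments.

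On separability you have identified a genuine defect in the theorem itself, not a gap in your proof. The paper's proof dismisses this point with ``can be shown in the standard way, and are omitted'' --- but no such argument exists, because the claim is false for the full operator space in the uniform norm. Your counterexample is valid: for $t\neq s$, Urysohn's lemma produces $f$ with $\norm{f}_{\sup}\leq 1$, $f(t)=1$, $f(s)=-1$, so the evaluation operators $T_t\colon f\mapsto f(t)\mathbf{1}$ form an uncountable $2$-separated family in $B(C_\lambda,C_{L,R})$, which therefore has no countable dense subset. The failure is already visible one level down: $C_\lambda^\ast$ is the space of Radon measures on $\closure{\mbb{R}}$, which contains the $2$-separated family of Dirac measures and is classically nonseparable. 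Your proposed repair --- restricting to compact (Schatten-class) operators, the class the paper actually invokes for $\hat{H}$ in Section~\ref{deltaprm}, and proving norm-density of finite-rank operators there --- is the right kind of fix, but be aware that it proves a different, weaker statement than Theorem~\ref{bclf2} as written: the theorem's statement, not merely its proof, would need to be amended.
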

\begin{proof}{:}
The normed linear metric space $C_\lambda$ is homogeneous. Any linear bounded continuous mapping $f:C_\lambda\to C_{L,R}$ can be regarded as a continuous isometric mapping $C_\lambda\to Y\subseteq C_\lambda$. Since any homogeneous space is homeomorphic to itself, the mapping $f$ preserves the norm topologies. Since $f$ is a norm preserving continuous linear map, it is invertible and $f^{-1}$ is continuous on $Y$. Therefore $f$ is bijective. It follows that $B\left(C_\lambda ,C_{L,R}\right)$, and $B\left(C_{L,R},C_\lambda\right)$ are isometrically isomorphic. $C_\lambda$ is complete linear metric space with the uniform norm and therefore is a Banach space. Since $C_\lambda\cong C_{L,R}$, and $C_{L,R}$ are also Banach spaces, which imply $B\left(C_\lambda,C_{L,R}\right)$ and $B\left(C_{L,R},C_\lambda\right)$ are Banach spaces. The properties of convexity and separability can be shown in the standard way, and are omitted in the proof. 
\end{proof}

\subsubsection{$L^p$ function spaces}\label{Lp functions}
We now move on to the spaces of Lebesgue integrable functions, $\left(L^p\right)$, and their duals, the continuous linear functionals acting on $L^p$ functions. This will be slightly more intricate than the spaces of bounded linear continuous functions. The enlarged class of absolutely integrable functions includes discontinuous functions. The Lebesgue and Lebesgue-Stieltjes measures which are built from collections of disjoint intervals of $\closure{\mbb{R}}$ has the effect of changing the measure of discontinuous functions equal almost everywhere for equivalence classes up to LMZ subsets. This matter complicates the construction of equivalence class structures on the space of absolutely integrable functions, and leaves intact the subspaces of continuous functions. 

Let us begin with the usual equivalence class identifications. For two functions $f,g$ in the space of absolutely integrable functions, we make the identification of equivalences classes of $f,g$ such that $\left[f\right]\sim\left[g\right]$ if $f=g$ almost everywhere. Let $L^p\left(X\right)$ denote the quotient space of equivalence classes of the $p$-th power (with $1\leq p\leq\infty$) of absolutely integrable functions over the measure space $X$, with Lebesgue measure $\lambda$. 

Here the function spaces $L^p\left(X,\lambda\right)$ and $L^p\left(X_{L,R}\right)$ are \textbf{\textit{not}} isometrically isomorphic. For example, take two elements from the class of step functions over the intervals $\left[0,1\right]$ and $\left(0,1\right]$. The measure of $H_{L}$ in the latter interval is $\mu_L\left(H_L\right)_{(0,1]}=H_L(1)-H_L(0^+)=1-1=0$, whereas in the former interval $\lambda\left(H_L\right)_{[0,1]}=H_L(1)-H_L(0)=1-0=1$, and therefore $\mu_L(H_L)\neq\lambda(H_L)$.

\begin{defin}\label{LpdX}
In the function space $L^p_{loc}(X)$ with $\left(1\leq p\leq\infty\right)$, we define the subspace of discontinuous functions by $L^p_d\left(X\right)\equiv L^p_{loc}\left(X\right)\setminus\closure{C_\lambda\left(X\right)}$, where $\closure{C_\lambda\left(X\right)}$ denotes the $p$-norm completion of the subspace $C_\lambda\left(X\right)$ over all collections of intervals in $L^p_{loc}\left(X\right)$. \textbf{Note:} $L^p_d\left(X\right)$ contains functions defined by collections of measure zero sets, however they are clearly not dense in $X$. 
\end{defin}

\begin{thm}\label{discLploc}
Any non-LMZ function $f\in L^p_d(X)$ over some bounded interval $I\subset\mbb{R}$ such that $f$ is not left or right semicontinuous on $I$ is \textbf{completely discontinuous}. Furthermore $f$ has measure zero on the subspaces of semicontinuous $L^p_{loc}(X)$-functions in the $p$-norm topology, but not necessarily on $L^p_{loc}$(X) itself. 
\end{thm}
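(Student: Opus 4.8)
The plan is to establish the statement in two stages, mirroring its two assertions: first the structural claim that $f$ must be \emph{completely discontinuous}, and then the measure-theoretic claim comparing its behaviour in the semicontinuous subspaces against $L^p_{loc}(X)$. Throughout I would work with the regulated structure of $L^p_{loc}$ representatives in the sense of \cite{Talvila2009}, so that at every point $a\in I$ the one-sided limits $f(a^-)$ and $f(a^+)$ exist and the discontinuity set of $f$ is at most countable. This regulated hypothesis is what lets the vague phrase ``not left or right semicontinuous'' be converted into a precise statement about one-sided limits versus values.

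For the first stage I would argue by contrapositive. Since $f\in L^p_d(X)=L^p_{loc}(X)\setminus\closure{C_\lambda(X)}$ by Definition~\ref{LpdX}, $f$ is not $p$-norm equal to any element of $\closure{C_\lambda(X)}$, so $f$ carries at least one genuine (non-removable) jump on $I$; its non-LMZ character guarantees this jump is not erasable on a $\lambda$-null set. At such a jump point $a$ exactly one of three mutually exclusive cases holds: $f(a)=f(a^-)\ne f(a^+)$ (left continuity), $f(a)=f(a^+)\ne f(a^-)$ (right continuity), or $f(a^-)\ne f(a)\ne f(a^+)$ with $f(a^-)\ne f(a^+)$ (the two-sided, i.e. completely discontinuous, case). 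If \emph{every} discontinuity of $f$ fell into the first case then $f$ would coincide with a left-semicontinuous representative, and symmetrically for the second; hence a function that is \emph{neither} left nor right semicontinuous on $I$ must realize the third case, which is the definition of complete discontinuity. The one point of care is to exclude the scenario where left continuity fails at one point while right continuity fails at a \emph{different} point with no single point being two-sided; I would close this by invoking the maximally symmetric step-function representatives of Example~\ref{expl3}, so that once reflection symmetry (the conventions fixed in Eqs.~\eqref{sgnL} and \eqref{sgnR}) is imposed on the semicontinuous classes, the two failures are forced to register at a common point.

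For the second stage I would compute the Stieltjes increment of $f$ across the two-sided jump in each topology. By Definition~\ref{measure} the half-open measures satisfy $\mu_L,\mu_R\preceq\lambda$ and differ from $\lambda$ only through endpoint inclusion, which is exactly the feature a two-sided jump is sensitive to. Modelling the computation on Example~\ref{expl3}, where the completely discontinuous $\sgn$ paired against a mismatched one-sided topology returns $0$ (Eqs.~\eqref{sgnR,oc} and \eqref{sgnL,co}) whereas the full Lebesgue pairing returns $-2\phi(0)$ (Eq.~\eqref{sgnHid}), the jump of $f$ at $a$ contributes nothing to the one-sided half-open increment: the excluded endpoint absorbs precisely the portion of the jump that a left- (resp. right-) semicontinuous representative would carry. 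Hence the jump set of $f$ is $\mu_L$- and $\mu_R$-null, so $f$ has vanishing $p$-norm on the semicontinuous subspaces, while the closed-interval Lebesgue-Stieltjes increment retains the full jump and keeps $f$ non-trivial in $L^p_{loc}(X)$.

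The hard part, I expect, is this second stage rather than the bookkeeping of the first: making precise the sense in which a completely discontinuous jump is ``split'' by the half-open topology so that its one-sided measure genuinely vanishes, while simultaneously ensuring the argument does not accidentally annihilate the one-sided step functions $H_L,H_R$ (whose increments must survive, as Example~\ref{expl3} requires). This forces me to pin down that the operative ``measure'' is the Stieltjes variation $|df|$ read through the generating half-open Borel sets of Definition~\ref{measurespaces}, not the ambient base measure, and to invoke the non-atomicity hypothesis (the exclusion of fat-Cantor-type sets) so that the discontinuity set is a countable union of isolated two-sided jumps on which the one-sided increments cancel cleanly.
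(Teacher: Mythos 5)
Your proposal takes a genuinely different route from the paper, and it contains a gap that is fatal to your first stage. The paper's own proof performs no pointwise classification of jumps at all: it is a set-theoretic disjointness argument. Using Definition~\ref{LpdX} and Lemma~\ref{lemCR} it decomposes $L^p_d(X)=L^p_{loc}(X)\setminus\closure{C_\lambda(X)}\cong\left(L^p_{loc}(X)\setminus\closure{C_L}\right)\cup\left(L^p_{loc}(X)\setminus\closure{C_R}\right)$, identifies the semicontinuous quotient spaces $L^p_{L,R,loc}$ as the $L^p_{loc}$-norm closures of $C_{L,R}$, and concludes $L^p_d(X)\cap L^p_{L,R,loc}=\emptyset$; the ``measure zero on the subspaces'' assertion is then read off as $\mu(f)=\mu(\emptyset)=0$, i.e.\ as the measure of an empty intersection, not as a vanishing Stieltjes increment of $f$. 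Correspondingly, ``completely discontinuous'' functions essentially as a definition in the paper --- it is explicitly glossed in Theorem~\ref{Hdecomp} as ``(neither left, nor right semicontinuous)'' --- so there is nothing to prove in your first stage under the paper's reading.

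Under your stronger pointwise reading (``there exists a single point with a two-sided jump''), the first claim is actually false, and this is precisely the scenario you flag: a function with a right-continuous-only jump at $a$ and a left-continuous-only jump at $b\neq a$ is neither left nor right semicontinuous on $I$, yet has no two-sided jump point. Your proposed repair does not close this hole: the reflection-symmetry conventions of Eqs.~\eqref{sgnL} and \eqref{sgnR} fix how the $\sgn$ distributions are assembled from $H_L$ and $H_R$; they impose no constraint on an arbitrary $f\in L^p_d(X)$ and cannot force two independent one-sided failures to register at a common point. Separately, your second stage ends in a non sequitur: showing that the jump set of $f$ is $\mu_L$- and $\mu_R$-null does not yield ``vanishing $p$-norm on the semicontinuous subspaces'' (the continuous part of $f$ would still carry norm); what the theorem asserts, and what the paper proves, is that $f$ does not lie in $L^p_{L,R,loc}$ at all, so its measure \emph{on} those subspaces is that of the empty set, while $f$ remains nontrivial in $L^p_{loc}(X)$. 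Your Stieltjes-increment computation is the right ingredient for the later, genuinely pointwise statements (Proposition~\ref{LpmuLmuR}, part 3, and the extension cases in Theorem~\ref{Hdecomp}), but for the present theorem the closure/disjointness argument is both what the paper does and what the claim, as stated, actually requires.
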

\begin{proof}{:}
Let $L^p_{\mu_L,\mu_R,loc}\left(X_{L,R}\right)$ denote the subspaces of left, respectively right, semicontinuous $L^p_{loc}$ functions over all intervals $I\subseteq\mbb{R}$. We may then define the quotient space equivalence classes $L^p_{loc}(X)/L^p_{\mu_L,\mu_R,loc}(X_{L,R})\equiv L^p_{L,R,loc}$. By def.~\ref{LpdX}, $L^p_d(X)=L^p_{loc}(X)\setminus \closure{C_\lambda(X)}\cong L^p_{loc}(X)\setminus\left(\closure{C_L}\cap \closure{C_R}\right)\cong \left(L^p_{loc}(X)\setminus \closure{C_L}\right)\cup \left(L^p_{loc}\setminus \closure{C_R}\right)$, by Lemma \ref{lemCR}. For any bounded interval $I\subset\closure{\mbb{R}}$, and any semicontinuous function $g$ on $I$, $g$ has the domain $\mc{D}_g=L^p_{L,R,loc}\cup\closure{C_{L,R}}\subset L^p_{loc}$. The equivalence classes defined above are just the $L^p_{loc}$-norm completion of $C_{L,R}$, which are bounded by the uniform $(\sup)$ norms in $C_{L,R}$ and equivalent to the $L^1$-norm in $L^p_{loc}$. Therefore $\mc{D}_g=cl\left(\closure{C_{L,R}}\right)=L^p_{L,R,loc}$, where $cl(\cdot)$ denotes the closure of the subspaces of locally semicontinuous functions over all collections of subintervals $I_i\subset\mbb{\closure{R}}$. Thus, semi-continuity of the linear quotient subspaces $L^p_{L,R,loc}$ implies that $L^p_d\left(X\right)\cap L^p_{L,R, loc}=\emptyset$. Therefore $f$ has measure $\mu(f)=\mu(\emptyset)=0$. 
\end{proof}
For the moment we merely state the following corollary.
\begin{cor}\label{intdiscLpdX}
Let $f,I$ be as in Theorem \ref{discLploc}. If there exists an $I_1\subset I\subset\mbb{R}$ (also for $I_1\supset I$) which allows $f$ to be continuous (or semicontinuous ) over $I_1$, then there is a refinement of the measure gauge $\tilde{\mu}=\mu_{I_1}$ such $\tilde{\mu}(f)\neq 0$. If the refinement is accomplished by adding (or removing) a single point to $I$ (i.e. a set of measure zero), we say that "$f$ is $\mu_L$-\textbf{extendable} (analogously $\mu_R$-\textbf{extendable})" (and respectively "$\mu_L (\mu_R)$-\textbf{restrictable}"). 
\end{cor}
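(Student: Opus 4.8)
The plan is to read this as the natural converse to Theorem~\ref{discLploc}: there the complete discontinuity of $f$ on $I$ forced $f\in L^p_d(X)$ with $L^p_d(X)\cap L^p_{L,R,loc}=\emptyset$, so that $f$ registered zero measure on the semicontinuous subspaces. Here the point is that this vanishing is an artifact of the interval $I$ and not an intrinsic property of $f$, and that passing to an interval $I_1$ on which $f$ recovers one-sided continuity reinstates a nonzero measure. First I would invoke the hypothesis directly: on $I_1$, $f|_{I_1}$ is continuous or semicontinuous, hence $f|_{I_1}\in L^p_{L,R,loc}\cup\closure{C_{L,R}}$ and in particular $f|_{I_1}\notin L^p_d(X)$, so the measure-annihilation mechanism of Theorem~\ref{discLploc} no longer applies over $I_1$.

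Next I would construct the refined gauge explicitly using the apparatus of Section~\ref{ghk}. I would fix a gauge $\gamma$ on $I_1$ and a $\gamma$-fine tagged partition whose nodes are aligned with the left (resp. right) half-open generating sets of $\mscr{B}_L$ (resp. $\mscr{B}_R$), and set $\tilde{\mu}:=\mu_{I_1}$ to be the corresponding restricted (resp. extended) Lebesgue--Stieltjes measure. The direction of semicontinuity of $f|_{I_1}$ dictates which of $\mu_L,\mu_R$ is selected, so that the boundary jump of $f$ on $I_1$ is oriented compatibly with the half-open endpoint the measure weights.

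The crucial and most delicate step is to show $\tilde{\mu}(f)\neq 0$, and I expect this to be the main obstacle. It rests on the endpoint sensitivity already exhibited by the contrast $\mu_L(H_L)_{(0,1]}=H_L(1)-H_L(0^+)=0$ versus $\lambda(H_L)_{[0,1]}=H_L(1)-H_L(0)=1$ preceding Def.~\ref{LpdX}, and by the pairings of Ex.~\ref{expl3}: a single endpoint, which is Lebesgue null and hence invisible to $\lambda$, is nonetheless a genuine half-open boundary point seen by $\mu_{L,R}$, so moving it toggles whether the Riemann--Stieltjes sum captures or cancels the jump. Since on $I_1$ the orientation has been matched to the semicontinuity of $f$, I would argue that the jump is captured rather than cancelled, yielding a nonvanishing Stieltjes contribution exactly as the nonzero pairings of Ex.~\ref{expl3} do. The technical care lies in verifying that this contribution is genuinely supplied by the un-closed $\mu_L$ or $\mu_R$ and is not forced back to zero by the LMZ closure identifications of Def.~\ref{measure}, which equate only the closures $\closure{\mu}_{L,R}=\lambda$ and not the one-sided measures themselves.

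Finally I would specialize to the single-point refinement. If $I_1=I\cup\{a\}$ or $I_1=I\setminus\{a\}$ for some $a\in\closure{\mbb{R}}$, then $\{a\}$ is LMZ, so $\closure{\mu}_{L,R}(a)=\lambda(a)$ by Def.~\ref{measure}, yet $\mu_L$ and $\mu_R$ need not agree on $\{a\}$; it is precisely this half-open asymmetry that is being named. Declaring $f$ to be $\mu_L$- (resp. $\mu_R$-) \emph{extendable} when $a$ is adjoined and \emph{restrictable} when $a$ is deleted then follows directly from the orientation of the generating set used to build $\tilde{\mu}$, which completes the argument.
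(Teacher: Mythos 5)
Your proposal is correct in substance, but it takes a genuinely different route from the paper, whose entire proof of this corollary is a forward reference: ``See Theorem~\ref{sc-Banach} below.'' The paper's actual justification is therefore the unique measure-zero extension established in that theorem, namely that any $\mu_{L,R}$-extendable $f\in L^p_{d}$ extends by a single endpoint, e.g. $\closure{f_{(a,b)}}_L=f_{(a,b)\cup[b]}\in L^p_{loc}\left(X_L\right)$, so that the extended function lands in the semicontinuous subspace $L^p_{L,R,loc}$ where the annihilation mechanism of Theorem~\ref{discLploc} no longer forces its measure to vanish. You instead build the refinement by hand: a gauge and $\gamma$-fine tagged partition from Section~\ref{ghk} aligned with the generating sets of $\mathscr{B}_L$ or $\mathscr{B}_R$, oriented to match the semicontinuity of $f|_{I_1}$, with $\tilde{\mu}(f)\neq 0$ justified by the endpoint-sensitivity computations (the Heaviside contrast preceding Def.~\ref{LpdX} and the pairings of Ex.~\ref{expl3}). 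What your route buys is a self-contained argument that makes the capture-versus-cancellation mechanism explicit at the level of Stieltjes sums; what the paper's route buys is economy and uniqueness --- by deferring to Theorem~\ref{sc-Banach}, the extension is not only nonvanishing but unique, a property the later results (Theorem~\ref{embedLpC}, Theorem~\ref{Hdecomp}) rely on and which your gauge construction does not by itself address. One shared caveat: neither your argument nor the paper's rules out $\tilde{\mu}(f)=0$ for a function whose one-sided limits happen to produce a vanishing increment over $I_1$; the non-LMZ hypothesis inherited from Theorem~\ref{discLploc} is what implicitly does that work, and your ``delicate step'' would be fully closed only by invoking it explicitly.
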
  
\begin{proof}{:}
See Theorem \ref{sc-Banach} below.
\end{proof}
\begin{rmk}\label{extconvtn}
Our convention will be to generally apply the term \textbf{extendable} for both of the extension and restriction cases when there is no chance of confusion from the immediate context. If a given set is extendable, we take it as implicit, that this also implies that the set is restrictable, unless otherwise stated.
\end{rmk} 
\begin{cor}\label{reg}
$\mu_{L,R}$-extendable implies regularity.
\end{cor}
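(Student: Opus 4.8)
The plan is to read "regularity" as regularity of the associated Lebesgue--Stieltjes measure in the Radon sense -- finite (hence locally finite), outer regular by open sets, and inner regular by compact sets -- and to show that $\mu_{L,R}$-extendability is precisely what upgrades the degenerate situation of Theorem~\ref{discLploc} into a genuine finite Borel measure, for which regularity is then essentially automatic on the compact metric space $\closure{\mbb{R}}\cong S^1$. The first step is to record the ambient geometry: $\closure{\mbb{R}}\cong S^1$ is compact, second countable, and metrizable (hence Polish), so any measure finite on it is locally finite, and it suffices to verify the approximation properties on the half-open generators of $\mathscr{B}_L$ (resp. $\mathscr{B}_R$), the passage to all of $\mathscr{B}_{L,R}$ following by a monotone-class argument.

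Next I would use extendability to certify that we are dealing with a legitimate finite measure at all. By Theorem~\ref{discLploc} a completely discontinuous $f$ has vanishing semicontinuous measure; Corollary~\ref{intdiscLpdX} says that if $f$ is $\mu_L$-extendable then a single-point refinement $I\to I_1$ renders $f$ semicontinuous, whence $f|_{I_1}$ is regulated and its Stieltjes increment $\tilde\mu(f)=\mu_{I_1}(f)$ is finite and nonzero. Because the adjoined or deleted endpoint is a Lebesgue-measure-zero set, Definition~\ref{measure} forces $\closure{\mu}_L=\closure{\mu}_R=\lambda$ there, so the refinement does not corrupt the measure on the interior -- it merely resolves the offending discontinuity into a one-sided semicontinuous form. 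The upshot is that $\tilde\mu=\mu_{L,R}$ is a genuine finite Lebesgue--Stieltjes (Borel) measure on $S^1$.

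With finiteness in hand, regularity on the generators is direct. For outer regularity one exploits the note after Definition~\ref{measurespaces} that half-open intervals are open on $S^1$, so each generator $(a,b]$ (resp. $[a,b)$) is its own open outer approximation with zero discrepancy; alternatively $U=(a,b+\delta)\downarrow(a,b]$ gives $\mu_L(U)\to\mu_L((a,b])$ by continuity from above. For inner regularity one takes the closed -- hence compact, by compactness of $S^1$ -- subinterval $K=[a+\delta,b]\subseteq(a,b]$, and since $(a,b]\setminus K=(a,a+\delta]\downarrow\emptyset$, continuity from above of the finite measure gives $\mu_L((a,a+\delta])\to0$, whence $\mu_L(K)\uparrow\mu_L((a,b])$. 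The right-semicontinuous case is identical with $[a,b)$ and $\mu_R$. Combining these with the monotone-class extension yields outer and inner regularity on all of $\mathscr{B}_{L,R}$.

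The main obstacle is not the approximation step -- which is automatic once finiteness is secured -- but the prior step of guaranteeing that $\tilde\mu$ is a finite measure rather than the pathological, everywhere-discontinuous object of Theorem~\ref{discLploc}, for which $df$ need not be a measure at all and "regularity" would be vacuous. This is exactly the work done by the extendability hypothesis: it asserts that the obstruction to semicontinuity is concentrated at a single LMZ endpoint on which, by Definition~\ref{measure}, the three measures coincide, so the single-point refinement both survives (leaving a nonzero increment) and produces a regulated distribution function. I would therefore present the corollary's content as the chain: extendability $\Rightarrow$ regulated refined $f$ $\Rightarrow$ finite Lebesgue--Stieltjes measure $\Rightarrow$ regular, with Definition~\ref{measure} supplying the LMZ compatibility that makes the first implication legitimate.
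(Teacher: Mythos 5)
Your proposal is correct and follows essentially the same route as the paper's own proof: both arguments treat extendability as the adjunction (or removal) of an LMZ endpoint, which by Definition~\ref{measure} is compatible with $\lambda$ and upgrades the half-open Stieltjes increment to a genuine Borel measure, after which regularity is obtained by identifying the refined measure with its outer/inner approximations (the paper's $\mu^\ast$, $\mu_\ast$ correspond exactly to your open-outer and compact-inner approximation steps on $\closure{\mbb{R}}\cong S^1$). The only difference is one of completeness: the paper asserts the regular extension in two sentences, while you supply the standard verification (finiteness plus continuity from above on the compact metric space) that the paper leaves implicit.
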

\begin{proof}{:}
For any locally measurable function over $X_{L,R}$ the addition or removal of a set of measure zero at an endpoint will extend any half-open measure $\mu_{L,R}$ uniquely such that it is a regular Borel measure measure. Then choose a measure gauge refinement such that $\tilde{\mu}_{L,R}=\mu^\ast$ or $\tilde{\mu}_{L,R}=\mu_\ast$ and $\tilde{\mu}_{L,R}\preceq \lambda$. The uniqueness lies in the designation of the strongest measure topology for which continuity between $X_{L,R}$ and $X$ holds. 
\end{proof}

\begin{prop}{: Properties of $L^p_{\mu_L}\left(X_L\right)$ and $L^p_{\mu_R}\left(X_R\right)$}\label{LpmuLmuR}\\
For $1\leq p\leq\infty$, the linear measure subspaces $L^p_{\mu_L}\left(X_L\right)$ and $L^p_{\mu_R}\left(X_R\right)$ have the following properties.
\begin{enumerate}
\item $L^p_{\mu_L}\left(X_L\right)$ and $L^p_{\mu_R}\left(X_R\right)$ are isomorphic but not isometric.
\item Each are separable seminormed Banach subspaces of $L^p\left(X\right)$.
\item $L^p_{\mu_L}\left(X_L\right)$ and $L^p_{\mu_R}\left(X_R\right)$ are complete linear semi-normed sub-manifolds of $L^p\left(X\right)$, with $L^p_{\mu_L}\left(X_L\right)\perp L^p_{\mu_R}\left(X_R\right)$, in the $p$-norm topology. 
\end{enumerate}
\end{prop}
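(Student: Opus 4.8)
The plan is to derive all three parts from three tools already in place: the isometric measure-space homeomorphism $X_L\cong X_R$ of Lemma~\ref{lemma1}, the Banach and separability structure of the continuous cores from Lemma~\ref{lemCR} and Theorem~\ref{bclf2}, and the completely-discontinuous decomposition of Theorem~\ref{discLploc}. I would treat statement (2) first, since (1) and (3) lean on it. By the construction in the proof of Theorem~\ref{discLploc}, $L^p_{\mu_L}(X_L)$ is the completion $cl(\closure{C_L})$ of the left-semicontinuous core, so, being a closure, it is closed in $L^p(X)$ and hence complete; the identical argument handles $L^p_{\mu_R}(X_R)$. Separability is inherited: the countable family of half-open step functions with rational endpoints and rational values is dense in $C_{L,R}$ (which is separable by Theorem~\ref{bclf2}), and $C_{L,R}$ is dense in the completion, so both spaces are separable. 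The spaces are only \emph{seminormed} because $\mu_L$ (resp. $\mu_R$) annihilates any function supported on a $\mu_L$-null endpoint set that is not $\lambda$-null; I would exhibit such a generator — a single included-endpoint jump of the kind isolated in Ex.~\ref{expl3} — lying in the seminorm kernel, and note that quotienting by this kernel produces the genuinely normed Banach space.

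For statement (1), the isomorphism is the one induced by the identity map $e$ of Lemma~\ref{lemma1}, which descends to the subspaces because all measure spaces agree on LMZ sets (Definition~\ref{measure}). Since $e$ is a distance-preserving bicontinuous bijection of the base spaces and $\mu_L,\mu_R\preceq\lambda$, the induced map on equivalence classes is a bounded linear bijection with bounded inverse, hence a topological isomorphism $L^p_{\mu_L}(X_L)\to L^p_{\mu_R}(X_R)$. To see that it is not an isometry, I would invoke Remark~\ref{rmrk32}: the class of step functions does not carry equal $\mu_L$- and $\mu_R$-measures, since a half-open step function registers its endpoint jump under one measure but not the other. This furnishes an $f$ with $\|f\|_{p,\mu_L}\neq\|f\|_{p,\mu_R}$, so the identification under $e$ is not norm-preserving.

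For statement (3), completeness and the linear semi-normed structure are exactly as in (2); the content is the orthogonality $L^p_{\mu_L}(X_L)\perp L^p_{\mu_R}(X_R)$ in the $p$-norm. I would first show the intersection collapses to the continuous core: by Theorem~\ref{discLploc} a function lying in both subspaces is simultaneously left- and right-semicontinuous on every subinterval, hence continuous, hence an element of the common core $C_\lambda$ (the functions lying in both $C_L$ and $C_R$). After quotienting by this core the two subspaces meet only at $0$. I would then identify the distinguishing jump parts of $\mu_L$- and $\mu_R$-functions as supported on the mutually singular endpoint sets of $\mathscr{B}_L$ versus $\mathscr{B}_R$ (left-included and right-included endpoints being disjoint off a $\lambda$-null set), so the defining measures are mutually singular. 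Mutual singularity of the supports yields the additivity $\|f+g\|_p^p=\|f\|_p^p+\|g\|_p^p$ for $f$ on one side and $g$ on the other, which is the sense in which the decomposition $L^p_{\mu_L}(X_L)\oplus L^p_{\mu_R}(X_R)$ is orthogonal in the $p$-norm.

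The main obstacle I anticipate is the seminorm/quotient bookkeeping that makes these maps well defined on equivalence classes. Because $\mu_L$ and $\mu_R$ can charge $\lambda$-null endpoint sets, the usual ``equal almost everywhere'' identification of $L^p(X)$ must be refined (as in Definition~\ref{LpdX} and Theorem~\ref{discLploc}) before $e$ descends to the subspaces, and verifying that the kernel of each seminorm corresponds correctly under $e$ is delicate. The same refinement is what makes ``orthogonal'' precise for $p\neq 2$, where, lacking an inner product, it must be read as mutual singularity of the underlying measures; confirming that the continuous core is genuinely the full intersection, rather than a proper subspace of it, is the second point where care is required.
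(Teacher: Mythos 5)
Your proposal is correct at the paper's level of rigor and rests on the same ingredients the paper uses (Lemma~\ref{lemma1}, Ex.~\ref{expl3}, Theorem~\ref{discLploc}, and density of the half-open step functions), but it genuinely differs from the paper's proof in two places. First, completeness: the paper does not argue via closedness; it asserts that the identity map is an isometric bijection of each subspace \emph{into} the Banach space $L^p(X)$ and, for part (3), simply defers completeness forward to Theorem~\ref{sc-Banach}. Your observation that $L^p_{\mu_L}(X_L)=cl(\closure{C_L})$ is closed in $L^p(X)$, hence complete, is more self-contained and in fact repairs a soft spot in the paper's reasoning, since a subspace of a Banach space is complete only if it is closed. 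Second, orthogonality: the paper never mentions mutual singularity or $p$-norm additivity. Instead it takes $f\in L^p_{\mu_L}(X_L)$ and $g\in L^p_{\mu_R}(X_R)$ with $f=g$ $\lambda$-a.e.\ on a base interval containing a discontinuity and invokes Theorem~\ref{discLploc} to conclude that each is completely discontinuous relative to the opposite topology, so $\mu_R(f)=0$ and $\mu_L(g)=0$; there, $\perp$ means annihilation by the opposite measure. Your reading --- the intersection collapses to the continuous core, and the jump parts carry mutually singular measures giving $\norm{f+g}_p^p=\norm{f}_p^p+\norm{g}_p^p$ --- gives $\perp$ a concrete meaning for $p\neq 2$ and squarely confronts the fact that continuous functions lie in \emph{both} subspaces, which the paper's formulation silently ignores; the cost is the extra quotient by the core, an operation the paper never performs. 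Two smaller divergences: the paper proves the submanifold claim explicitly by checking that semicontinuous step-function classes are closed under addition and scalar multiplication, which you compress into ``as in (2)''; and for part (1) the paper claims only a vector-space isomorphism (its argument is purely linear-algebraic), so your stronger assertion that the identity-induced map is bounded with bounded inverse goes beyond what the paper establishes and, as you yourself flag, hinges on the delicate opposite-sided re-identification at jump points.
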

\begin{proof}{:}
\begin{enumerate}
\item There exists some linear transformation $f:L^p_{\mu_L}\left(X_L\right)\to L^p_{\mu_R}\left(X_R\right)$. Since $f$ is a linear mapping over the same field of scalars in one dimension, they are isomorphic. To see that they are not isometric, see Ex.~\ref{expl3}. 
\item $L^p_{\mu_L}\left(X_L\right)$ and $L^p_{\mu_R}\left(X_R\right)$ are formed by countable disjoint collections of Borel sets, any of which can be taken as a base for $\closure{\mbb{R}}$. Hence they are separable. For any function $f$ over either space, the identity map $e$ will be an isometric bijection of $f$ \textbf{into} a subspace of $L^p(X)$. Since $L^p\left(X\right)$ is a Banach space, so are $L^p_{\mu_L}\left(X_L\right)$ and $L^p_{\mu_R}\left(X_R\right)$. However, $L^p_{\mu_L}\left(X_L\right)$ and $L^p_{\mu_R}\left(X_R\right)$ have only a semi-norm in the $p$-norm topology, as they do not have quotient space identifications to make them a true normed linear subspace of $L^p\left(X\right)$.   
\item Completeness is shown below in Theorem~\ref{sc-Banach}. From 2, they are each semi-normed Banach spaces, and therefore linear. Orthogonality is as follows. There exists linear transformations $f:L^p_{\mu_L}\left(X_L\right)\to L^p_{\mu_R}\left(X_R\right)$ and $g:L^p_{\mu_R}\left(X_R\right)\to L^p_{\mu_L}\left(X_L\right)$ over a base interval $I_b\subset\closure{\mbb{R}}$ which includes a discontinuity, such that $f,g$ are injective, $\mu_{L,I_b}(f)\neq\mu_{R,I_b}(g)\neq 0$, and $f=g$ almost everywhere on $I_b$. Since we have $f=g$ almost everywhere, this implies that there exists an equivalence class over the measure space $X:~[f]\sim[g]$ such that $\lambda_{I_b}([f])=\lambda_{I_b}([g])$ on $L^p\left(X\right)$. By Theorem~\ref{discLploc}, it follows that $f$ is completely discontinuous in $L^p\left(X\right)\setminus L^p_{\mu_L}\left(X_L\right)\cong L^p_{\mu_R}\left(X_R\right)\cup L^p_d\left(X\right)$, and therefore $\mu_{R,I_b}(f)=0$. This holds analogously for $g$ , with $\mu_L(g)=0$. Since $L^p_{\mu_L}\left(X_L\right)$ and $L^p_{\mu_R}\left(X_R\right)$ are Banach subspaces with the norm topology, they are semi-normed spaces. To see that they are sub-manifolds, take the class of step functions which is dense in $L^p\left(X\right)$. For any $\chi_{I_1=[a,b)},\chi_{I_2=[c,d)}\in\chi_R$ over $\closure{\mbb{R}}$, with $a,d<b,c\in\closure{\mbb{R}}$, we have $\chi_{I_1}+\chi_{I_2}=\chi_{I_1\cup I_2}\in\chi_R$. For any scalar $a\in\closure{\mbb{R}}$, then it follows that $a\chi_{I_1}\in\chi_R$. This holds for $\chi_L$ as well.    
\end{enumerate}
\end{proof}
For any two measurable functions $f,g$ in $L^p_{\mu_L}\left(X_L\right)$ or $L^p_{\mu_R}\left(X_R\right)$, we cannot form the equivalence class identifications $[f]\sim[g]$ with $f=g$ almost everywhere. We have seen that it is possible to have $f=g$ a.e., but $\mu(f)\neq \mu(g)$ or even $\mu(f)\neq 0$ but $\mu(g)=0$. We must be more discriminatory in defining the quotient space equivalence class identifications.

\begin{defin}{: $L^p_L\left(X_L\right)$ and $L^p_R\left(X_R\right)$ quotient spaces}\label{LpLRQclss}\\
We define the quotient space spaces of $L^p_L\equiv L^p\left(X\right)/L^p_{\mu_L}\left(X_L\right)$ and $L^p_R\equiv L^p\left(X\right)/L^p_{\mu_R}\left(X_R\right)$ by identifying any two functions, say $f,g\in L^p_L$ where $f=g$ almost everywhere \textbf{and} if either $f$ \textbf{or} $g$ is $\mu_L$-extendable such that $\mu_L(f)=\mu_L(g)\in L^p_L$. We denote the left semicontinuous equivalence identifications by $[f]\sim_L [g]$. We analogously define the right semicontinuous equivalence classes and denote the right semicontinuous equivalence identifications by $[f]\sim_R[g]$. 
\end{defin}

\begin{thm}\label{sc-Banach}
 Each semicontinuous measure space is a complete dense subspace of semicontinuous functions of $L^p_{loc}\left(X\right)$, which is also semicontinuous with respect to the uniform norm topologies on the measure spaces $X_{L,R}$. The $p$-norm is the completion of the semicontinuous subspaces with respect to the uniform metric norms on the measure spaces $X_{L,R}$. We denote the semicontinuous subspaces of $L_{loc}^p\left(X\right)$ by $L^p_{L,loc}$ and $L^p_{R,loc}$ respectively. $L_{L,loc}^p$ and $L_{R,loc}^p$ are quotient space equivalence classes which extend uniquely over all of $L^p\left(X\right)$, such that $L^p_{L,loc}\equiv L^p_{loc}(X)/L^p_{\mu_L,loc}(X_L)$ and $L^p_{R,loc}\equiv L^p_{loc}(X)/L^p_{\mu_R,loc}(X_R)$, where $L^p_{\mu_L,\mu_R,loc}(X_{L,R})$ are the $\mu_L,\mu_R$-measure subspaces of $L^p_{loc}(X)$. Furthermore, under these extensions, all $L^p$-functions are either piecewise semicontinuous or completely (left and right) continuous.
\end{thm}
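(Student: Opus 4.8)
The plan is to leverage the isometric isomorphisms of Lemma~\ref{lemCR} and Theorem~\ref{bclf2} together with the complete-discontinuity dichotomy of Theorem~\ref{discLploc}, transferring completeness and density from the ambient $L^p(X)$ to the semicontinuous subspaces while controlling the jump points with the half-open L-S measures. I treat the left-semicontinuous case $L^p_{L,loc}$ throughout; the right case follows verbatim by Lemma~\ref{lemma1}, the two being interchanged under $x\mapsto-x$ exactly as in Example~\ref{expl3}. For completeness I would start from Riesz--Fischer completeness of $L^p(X)$: a $p$-norm Cauchy sequence $\{f_n\}\subset L^p_{\mu_L,loc}(X_L)$ converges to some $f\in L^p_{loc}(X)$, and on a subsequence $f_n\to f$ $\lambda$-almost everywhere. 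The only obstruction to $f$ being left-semicontinuous is its values on the countable (hence LMZ) jump set; by Definition~\ref{measure} the three measures agree on LMZ sets, so I may redefine $f$ there to be left-continuous without changing its $\sim_L$ class (Definition~\ref{LpLRQclss}) or its $\mu_L$-measure, and by Theorem~\ref{discLploc} any piece that is neither left- nor right-semicontinuous is completely discontinuous, hence $\mu_L$-null and discarded in the quotient. Thus $f\in L^p_{\mu_L,loc}(X_L)$, which also supplies the completeness deferred from Proposition~\ref{LpmuLmuR}(3) and the extendability of Corollary~\ref{intdiscLpdX}.

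For density, step functions are $p$-norm dense in $L^p_{loc}(X)$ and each is $\lambda$-a.e. equal to a half-open left-semicontinuous step function $\sum_i c_i\chi_{(a_i,b_i]}$, so the left-semicontinuous functions are dense. For the assertion that the $p$-norm is the completion of the semicontinuous subspaces relative to the uniform norm on $X_{L,R}$, I would note that on each compact subinterval the uniform norm dominates the $L^1$-norm (finite length) and, by H\"{o}lder on a finite-measure set, every $L^p$-norm; since $C_{L,R}$ is already uniform-norm complete by Theorem~\ref{bclf2}, completing instead in the weaker $p$-norm enlarges it to all of $L^p$, the standard identity $L^p=\closure{C_\lambda}$ restated semicontinuously.

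The quotient classes $L^p_{L,loc}\equiv L^p_{loc}(X)/L^p_{\mu_L,loc}(X_L)$ are those of Definition~\ref{LpLRQclss}; uniqueness of the extension of a local representative over all of $L^p(X)$ is Corollary~\ref{reg}, since adding or removing an endpoint (an LMZ set) extends $\mu_L$ to the unique regular Borel measure giving the strongest topology keeping $X_L\cong X$ bicontinuous. The dichotomy then follows by applying Theorem~\ref{discLploc} on each piece of a partition: a non-LMZ representative is, on each piece, left- or right-semicontinuous (gluing to a piecewise-semicontinuous function) or completely discontinuous and so null in the quotient, while a representative semicontinuous from both sides at every point is continuous, \emph{i.e.} completely continuous.

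The hard part is the completeness step, precisely the reconciliation of the \emph{pointwise} condition of semicontinuity with the \emph{almost-everywhere} nature of $p$-norm convergence: a.e. limits need not preserve one-sided continuity at the jump set, and the argument works only because that set is LMZ, where $\mu_L,\mu_R,\lambda$ coincide (Definition~\ref{measure}), and because the $\sim_L$ classes of Definition~\ref{LpLRQclss} let one select the canonical left-continuous representative without leaving the class. The delicate verification is that the quotient by $\sim_L$ is exactly coarse enough to absorb the jump-point ambiguity yet fine enough that the completely-discontinuous part of Theorem~\ref{discLploc} is genuinely $\mu_L$-null.
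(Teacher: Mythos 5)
Your density argument and your final dichotomy run parallel to the paper's proof (both rest on step-function density, on Theorem~\ref{discLploc}, and on the agreement of $\mu_L,\mu_R,\lambda$ on LMZ sets), but your completeness step takes a different route --- Riesz--Fischer in the ambient $L^p(X)$ followed by a repair of the limit on its ``jump set'' --- and this is where there is a genuine gap. You assert that the only obstruction to left-semicontinuity of the limit $f$ is its values on a countable (hence LMZ) jump set. That is false for a general $p$-norm limit of left-semicontinuous functions: $p$-norm convergence controls $f$ only up to $\lambda$-a.e.\ equivalence, so $f$ need not be regulated at all --- one-sided limits need not exist anywhere, so there is no well-defined countable jump set to repair. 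Worse, the limit need not be a.e.\ equal to \emph{any} left-semicontinuous function. Concretely, let $K\subset[0,1]$ be a fat Cantor set, $K=\bigcap_n K_n$ with each $K_n$ a finite union of closed intervals. Each $\chi_{K_n}$ is a.e.\ equal to a left-semicontinuous step function $\sum_i\chi_{(a_i,b_i]}$, and these form a Cauchy sequence converging to $\chi_K$ in every $p$-norm. Yet if $g$ were left-continuous with $g=\chi_K$ a.e., then at a Lebesgue density point $a$ of $K$ with $g(a)=1$, left-continuity forces $g>\frac{1}{2}$ on some interval $(a-\delta,a)$, while nowhere-density of $K$ places an open subinterval of $(a-\delta,a)$ in the complement of $K$, on which $g=0$ a.e.\ --- a contradiction. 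So no null-set modification restores semicontinuity, and the limit you claim lies in $L^p_{\mu_L,loc}(X_L)$ simply does not exist for this Cauchy sequence.

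The paper's proof proceeds in the opposite direction and thereby confines exactly this problem: it builds the space from Cauchy sequences of semicontinuous step functions with a gauge chosen so that convergence is uniform on a finite subcover, performs the unique one-point $\mu_{L,R}$-extensions interval by interval, and then explicitly quarantines the residue --- ``the subspace $\closure{L^p_{d}(X)}$ may be reduced to functions taking values from collections of nowhere dense sets'' --- which is precisely the fat-Cantor-type obstruction above, a class the paper excludes by its non-atomic hypotheses (see the discussion preceding Definition~\ref{LpdX} and in Section~\ref{intro}). Your argument has no such safeguard: Theorem~\ref{discLploc} lets you discard completely discontinuous pieces as $\mu_L$-null in the quotient, but $\chi_K$ is not piecewise anything --- its discontinuity set has positive Lebesgue measure --- so it is neither repaired by your LMZ modification nor annihilated by the quotient of Definition~\ref{LpLRQclss}. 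To salvage the proposal you would need either to restrict the admissible Cauchy sequences (as the paper's step-function construction effectively does) or to build the exclusion of limits supported on nowhere dense sets of positive measure into the definition of $L^p_{\mu_L,loc}(X_L)$ before invoking Riesz--Fischer; as written, the closedness claim at the heart of your completeness step is false.
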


\begin{proof}{:}\ref{sc-Banach}
The density is straightforward. Take the class of step functions $\chi_I$, over some interval $I\subseteq\closure{\mbb{R}}$. We tacitly assume the interval $I$ to be congruent with the measure space topology over which $\chi_I$ is associated within the immediate adjacent text. It is well known that $\chi_I$ is dense in $L^p$ and there for dense in all $L^p$ subsets. Take $\{\chi_I\}\in L^p_{loc}\left(X\right)$ for $I=\cup_{\mc{I}_i\subseteq \closure{\mbb{R}}}^\infty \mc{I}_i$, where the unions are taken over all disjoint collections of Borel generating sets in $X$. Thus for any $f\in L^p_{loc}\left(X_I\right)$, with $I$ a bounded subset, there is a Cauchy sequence of step functions over $I$ such that $\{\chi_{I_n}\}\to f$ a.e. as $n\to\infty$. Take the gauge for the Lebesgue measure to be the strongest topology for which given any $\eps>0$, there is a finite sub-cover of $I=\cup_{\mc{I}_i\subseteq\closure{\mbb{R}}}^m I_i$, congruent with the metric topology on $X$, such that for any $x\in X$ and $n>m$, then $\left|\chi_{I_n}(x)-f(x)\right|\leq\eps$ as $n\to\infty$, and the sub-cover $I$ contains the limit point of $\{\chi_{I_n}\}=\accentset{\circ}{f}(x)$. For any $p$-th power, the sequence $\{\chi_{I_n}\}\to f$ pointwise monotonically for each sub-interval. By linearity, the sequence is continuous (and also sequentially compact). Continuity (of a sequentially compact set) and the sub-additivity of the measure, imply uniform convergence and therefore the series of the subsequences $\Sigma_{n_i}^\infty \left|\chi_{n_i}\right|^p$ is term-wise $p$-summable. It follows that $\{\chi_{I_n}\}\to f$ uniformly in the $p$-norm topology, and therefore it is complete in $L^p(X)$. Similar to definitions \ref{topq} and \ref{linq}, the spaces of $L^p_{\mu_L,\mu_R, loc}\left(X_{L,R}\right)$ mappings are absolutely integrable over the measure spaces $X/X_{\mu_L,\mu_R}$, with $\mu_L,\mu_R\preceq \lambda$ for all choice of gauges such that the mapping $f:L^p_{loc}(X)\to L^p_{L,R,loc}(X_{L,R})$ also is topologically continuous. Therefore the continuity for any pre-image $f^{-1}:L^p_{\mu_L,\mu_R,loc}\left(X_{L,R}\right)\to L^p_{loc}\left(X\right)$ is continuous in the stronger Lebesgue-Stieltjes measure topologies, $\mu_L,\mu_R$, as well as $\lambda$. By Theorem \ref{discLploc}, the semi-continuity of the linear mappings $L^p_{\mu_L,\mu_R,loc}\left(X_{L,R}\right)$ imply that $L^p_d\left(X\right)\cap L^p_{\mu_L,\mu_R, loc}\left(X_{L,R}\right)=\emptyset$ \textbf{however}, they may share intervals with common interiors. Let $f\in L^p_d\left(X\right)$ be $\mu_{L,R}$-extendable. We denote the subset of all $\mu_{L,R}$-extendable functions as $L^p_{d, ext}$. For any Cauchy $\{\chi_I\}$ and therefore also any $f\in L^p_{d, ext}$, $f$ can be uniquely  extended by of a set of measure zero such that $\closure{L^p_{d,ext}}=L^p_{L,R,loc}$. Thus, any extension by a collection of sets of measure zero, will be a unique extension from $L^p_d$ to the left or right semicontinuous function spaces, such that for any completely discontinuous function $f$ over the interval $(a,b)$, then $\closure{f_{(a,b)}}_L=f_{(a,b)\cup[b]}\in L^p_{loc}\left(X_L\right)$ and similarly,  $\closure{f_{(a,b)}}_R=f_{(a,b)\cup[a]}\in L^p_{loc}\left(X_R\right)$. Since it is generally true that $L^p\left(Y\right)\supset L^p_{loc}\left(Y\right)$ for some arbitrary measure space $Y$, then the quotient space of semicontinuous functions $L^p_{L, loc}$ can be extended to all of $L^p\left(X\right)$ up to an equivalence relation almost everywhere. The process can be repeated for any piecewise defined function on $L^p(X)$ defined over all collections of bounded subintervals of $\closure{\mbb{R}}$. In this way, all subintervals $I_i\subset \mbb{R}$ can be reduced to arbitrarily small but countable lengths. Therefore the subspace $\closure{L^p_{d}(X)}$ may be reduced to functions taking values from collections of nowhere dense sets. Since we regard $\mu_L=\mu_R=\lambda$ on sets of measure zero, every $L^p$-function may be regarded as either semicontinuous or completely (left and right) continuous almost everywhere.
\end{proof}

\begin{prop}{: Properties of $L^p_{L,R}$}\label{LpLR-props}\\
For $1\leq p\leq\infty$, the quotient subspaces $L^p_{L}(X_L)$ and $L^p_{R}(X_R)$ have the following properties in $L^p(X)$.
\begin{enumerate}
\item $L^p_L(X_L)$ and $L^p_R(X_R)$ are isomorphic, but \textbf{not} isometric. 
\item Each are separable normed Banach sub-spaces of $L^p\left(X\right)$.
\item $L^p_L(X_L)$ and $L^p_R(X_R)$ are complete sub-manifolds of $L^p\left(X\right)$ with $L^p_L(X_L)\perp L^p_R(X_R)$.
\end{enumerate}
\end{prop}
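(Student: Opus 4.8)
The plan is to treat this statement as the quotient-completed refinement of Proposition~\ref{LpmuLmuR}: the genuinely new ingredient is the pair of discriminating equivalence relations $\sim_L$ and $\sim_R$ from Definition~\ref{LpLRQclss}, which are designed precisely to collapse the seminorm degeneracy that obstructed $L^p_{\mu_L}(X_L)$ and $L^p_{\mu_R}(X_R)$ from being truly normed. Accordingly, I would inherit the structural skeleton from Proposition~\ref{LpmuLmuR} and Theorem~\ref{sc-Banach}, and spend the new effort verifying that each claim descends well-defined to the quotient. For part~(1), I would produce the linear bijection by descending the identity map $e$ of Lemma~\ref{lemma1} to the quotient level: since $X_L \cong X_R$ isometrically under $e$, and since both function spaces sit over the same one-dimensional scalar field, the induced map $\bar{e}: L^p_L(X_L) \to L^p_R(X_R)$ is a linear isomorphism, provided one checks it respects $\sim_L$ versus $\sim_R$. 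Non-isometry is then immediate from Example~\ref{expl3}: the semicontinuous step functions realize $\mu_L(f) \neq \mu_R(f)$ on a common interval, so no representative-respecting relabeling can preserve the $p$-norm.

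For part~(2), the upgrade over Proposition~\ref{LpmuLmuR} is from seminormed to normed. The seminorm kernel in the earlier proposition consisted exactly of functions whose semicontinuous measure vanished despite being nonzero a.e.; the identification $[f]\sim_L[g]$ of Definition~\ref{LpLRQclss} --- equating $f,g$ when they agree a.e.\ \emph{and} one is $\mu_L$-extendable to equal measure --- is precisely what quotients out this kernel, so that $\norm{[f]}=0$ forces $[f]=[0]$. Separability carries over verbatim from Proposition~\ref{LpmuLmuR}(2): the countable disjoint collections of half-open Borel generating sets furnish a countable dense base. For the Banach property I would invoke the standard fact that a quotient of a Banach space by a closed subspace is again Banach; here $L^p_L(X_L) = L^p(X)/L^p_{\mu_L}(X_L)$, the needed closedness of $L^p_{\mu_L}(X_L)$ is exactly the completeness established in Theorem~\ref{sc-Banach}, and the canonical semicontinuous representatives of that theorem realize the quotient as a genuine subspace of $L^p(X)$.

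For part~(3), completeness is again delegated to Theorem~\ref{sc-Banach}. The orthogonality $L^p_L(X_L)\perp L^p_R(X_R)$ I would obtain by reusing the mechanism of Proposition~\ref{LpmuLmuR}(3): on a base interval $I_b$ containing a jump, choose $f\in L^p_L$ and $g\in L^p_R$ with $f=g$ a.e.\ but realizing distinct one-sided measures; Theorem~\ref{discLploc} then forces the completely discontinuous crossterm to carry zero measure in the opposite semicontinuous subspace, which is orthogonality in the $p$-norm. The sub-manifold structure follows, as before, from the closure of the left/right step-function classes $\chi_L,\chi_R$ under addition and scalar multiplication.

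The hard part will be the well-definedness bookkeeping across the quotient, not any single estimate. Specifically, every measure-comparison and orthogonality assertion must be shown independent of the representative chosen within a $\sim_L$ (resp.\ $\sim_R$) class, and the extendability clause in Definition~\ref{LpLRQclss} has to be reconciled with the a.e.\ identification so that no nonzero class acquires zero norm. I expect the subtlest point to be confirming that the induced isomorphism $\bar{e}$ genuinely intertwines the two \emph{distinct} equivalence relations rather than merely the underlying a.e.\ classes --- that is, that $\mu_L$-extendability maps to $\mu_R$-extendability under reflection --- since this is where the broken reflection symmetry of $H_L,H_R$ (cf.\ Example~\ref{expl3}) could in principle obstruct a clean descent.
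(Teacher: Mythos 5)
Your proposal is correct and follows essentially the same route as the paper, whose entire proof reads: ``Analogous to Proposition~\ref{LpmuLmuR}. After making the identifications of equivalence classes defined by the quotient spaces in Def.~\ref{LpLRQclss}, we have bona fide norms on $L^p_L\left(X_L\right)$ and $L^p_R\left(X_R\right)$.'' You have simply spelled out in detail what the paper delegates to Proposition~\ref{LpmuLmuR}, Theorem~\ref{sc-Banach}, and Example~\ref{expl3} --- including the well-definedness and descent issues the paper leaves implicit.
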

\begin{proof}{:}
Analogous to Proposition~\ref{LpmuLmuR}. After making the identifications of equivalence classes defined by the quotient spaces in Def.~\ref{LpLRQclss}, we have bona fide norms on $L^p_L\left(X_L\right)$ and $L^p_R\left(X_R\right)$. 
\end{proof}

At this point we have constructed a nice formalism of sub-manifolds within the classical $L^p\left(X\right)$ space of functions. It is interesting to point out that the Banach spaces $L^p_{L,R}$ include the Banach spaces of $C_{L,R}\left(X_{L,R}\right)$. This can be seen by viewing the spaces $L^p_{L,R}\left(X_{L,R}\right)$ as just the $p$-norm completions of $C_{L,R}\left(X_{L,R}\right)$ in $L^p\left(X\right)$. However, we seem to have much more. The $\mu_{L,R}$-extendable functions are also mappings into subspaces $C_{L,R}\left(X_{L,R}\right)$ as well. Thus we have a partial embedding from subspaces of $L^p$ to subspaces of $C_\lambda \left(X\right)$. 

\begin{thm}{: $C_{L,R}\left(X_{L,R}\right)\hookleftarrow L^p_{L,R}\left(X_{L,R}\right)$}\label{embedLpC}\\
Let $f$ be a non-atomic, completely discontinuous, and piecewise defined function on a subspace of either $L^p_L\left(X_L\right)$ or $L^p_R\left(X_R\right)$. If $f$ is $\mu_{L,R}$-extendable, then the left (right) extended mappings $f_L,f_R$ are \textbf{partial embeddings} of equivalence classes from $L^p_{d,ext}\left(X_L\right)$ into subspaces $C_L\left(X_L\right)$ (or collections of left semicontinuous intervals of $C_L\left(X_L\right)$) and $L^p_{d,ext}\left(X_R\right)$ into subspaces of $C_R\left(X_R\right)$ (or collections of right semicontinuous intervals of $C_R\left(X_R\right)$) with the relative uniform $\|\cdot\|_{\sup}$ topology on $C_L,C_R$ respectively. Moreover, $C\left(\closure{\mbb{R}}\right)\subsetneq C_{L,R}\left(X_{L,R}\right)$ in general.
\end{thm}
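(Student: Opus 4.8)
The plan is to leverage the extendability hypothesis together with Theorem~\ref{sc-Banach} in order to move $f$ out of the completely discontinuous class and into the semicontinuous subspace, and then to recognize that a piecewise semicontinuous function is, on each of its continuity pieces, literally an element of $C_L$ (respectively $C_R$). Concretely, I would first reduce to a single interval: since $f$ is piecewise defined and non-atomic, it suffices to treat a typical subinterval $I=(a,b)$ on which $f$ is completely discontinuous in the sense of Theorem~\ref{discLploc}, the global statement following by passing to countable disjoint collections of such intervals. Because $f$ is $\mu_L$-extendable, Corollary~\ref{intdiscLpdX} and the closure identity $\closure{L^p_{d,ext}}=L^p_{L,R,loc}$ of Theorem~\ref{sc-Banach} give that the extension $\closure{f_{(a,b)}}_L=f_{(a,b)\cup[b]}$ is semicontinuous, i.e. lies in $L^p_{L,loc}$; restricted to the interior of $I$ this extended representative is one-sidedly continuous, hence a genuine element of $C_L$ over $I$.

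Next I would define the embedding $\iota_L\colon L^p_{d,ext}(X_L)\to C_L(X_L)$ by sending the class of $f$ to the collection of its extended pieces $\{\closure{f_{I_i}}_L\}$, each viewed in $C_L$ with the relative $\norm{\cdot}_{\sup}$ topology; this collection of one-sidedly continuous intervals is exactly the image described in the statement, and it is where the qualifier \emph{partial} enters, since $f$ carries jumps between pieces and is therefore not globally in $C_L$. Well-definedness and injectivity on equivalence classes both rest on the refined identification $\sim_L$ of Definition~\ref{LpLRQclss}: if $f\sim_L g$ then $f=g$ almost everywhere \emph{and} $\mu_L(f)=\mu_L(g)$, so the extensions agree on every continuity piece and hence as elements of $C_L$; conversely, two functions whose $\iota_L$-images coincide agree pointwise on each piece, hence almost everywhere with equal $\mu_L$-measure, so $f\sim_L g$. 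Uniqueness of the measure-zero extension underlying this step is precisely Corollary~\ref{reg}. The right-handed assertion is identical with $\mu_R$, $H_R$, $\sim_R$, and $C_R$.

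For the topological content I would use the norm comparison recorded for these spaces, namely that $\norm{\cdot}_{L,R;\sup}$ bounds every $\norm{\cdot}_{L^p_{L,R}}$, together with the isometric isomorphism $C_L\cong C_R\cong C_\lambda$ of Lemma~\ref{lemCR}. On the image the uniform norm dominates the $p$-norm, so the inverse of $\iota_L$ is continuous; combined with the completion statement of Theorem~\ref{sc-Banach}---that $L^p_{L,loc}$ is the $p$-norm completion of $C_L$---this upgrades $\iota_L$ to a homeomorphism onto its image, i.e. a genuine partial embedding. Density of the image then follows from the density of the step functions $\chi_I$ in $L^p(X)$ used in Theorem~\ref{sc-Banach}: every extended step function is itself one of these continuity-piece elements of $C_L$, and the $p$-norm closure of these exhausts $L^p_{L,loc}$. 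Finally, the strict containment $C(\closure{\mbb{R}})\subsetneq C_{L,R}(X_{L,R})$ is exhibited by the one-sided Heaviside functions: $H_L\in C_L(X_L)$ and $H_R\in C_R(X_R)$ by construction, yet neither is two-sidedly continuous and so neither lies in $C(\closure{\mbb{R}})$, making the inclusion proper.

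The hardest part will be the topological step, specifically making precise that $\iota_L$ is a homeomorphism onto its image rather than merely a continuous injection: one must reconcile the $p$-norm topology carried on the domain with the uniform $\norm{\cdot}_{\sup}$ topology carried on the codomain, where the two bound each other only in one direction on the full space and must be shown comparable on the restricted image of extended piecewise functions. Formalizing the bookkeeping of the ``collection of intervals'' so that the partial, piecewise structure is handled uniformly across countably many jump discontinuities is the companion difficulty; density and the strict containment are then comparatively routine once this identification is in place.
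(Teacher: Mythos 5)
Your proposal is correct and follows essentially the same route as the paper's proof: use $\mu_{L,R}$-extendability (via Theorem~\ref{sc-Banach}) to replace $f$ by a piecewise semicontinuous representative, observe that each continuity piece is a genuine element of $C_{L,R}\left(X_{L,R}\right)$, reconcile the $\|\cdot\|_{\sup}$ and $L^p$ norms on the image to get a topologically continuous partial embedding, invoke density of the semicontinuous step functions, and exhibit one-sided Heaviside-type functions for the strict containment $C\left(\closure{\mbb{R}}\right)\subsetneq C_{L,R}\left(X_{L,R}\right)$. The only substantive difference is that you spell out well-definedness and injectivity of the embedding on the $\sim_{L,R}$ equivalence classes, a point the paper's proof leaves implicit, while the paper instead phrases the norm comparison through the H\"{o}lder extremal bound $\|f\|_\infty\leq\|f\|_1$; both arguments carry the same content at the same level of rigor.
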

\begin{proof}{:}{ (Left case)}\\
Let $f$ be non-atomic, completely discontinuous, and piecewise in $L^p_{d,ext}$. Then $f$ is $\mu_L$-extendable such that $f\to f_{L}\in L^p_L\left(X_L\right)$, for example, a collection of left semicontinuous step functions. $f_L$ need only be non-zero on some collection of left semicontinuous intervals $I_L=\cup_{i}(\cdot,\cdot]_i\subset \closure{\mbb{R}}$, provided that on each subinterval $(\cdot,\cdot]_i$ where $f\neq 0$, $f$ is continuous and, $f=0$ on the remaining left semicontinuous intervals. Since functions which admit left semicontinuity are regarded as continuous in $C_L\left(X_L\right)$, restricting to the $L^1_L\left(X_L\right)$ norm is equivalent to the uniform $\|\cdot\|_{\sup}$ norm on $C_L\left(X_L\right)$. In the H\"{o}lder extremal case, left (semi)continuity then implies that $\|f\|_{\infty}\leq \|f\|_1$ is $L^1$-norm bounded wherever $f\neq 0$. Hence $f_L$ is Lebesgue-Stieltjes integrable, which implies it is Riemann-Stieltjes integrable, and norm bounded by the relative $\|\cdot\|_{\sup}$ in $C_L\left(X_L\right)$. Therefore we have a complete normed space. The right semicontinuous case is analogous. By Theorem~\ref{sc-Banach}, up to functions defined on sets of measure zero, we have a topologically continuous partial embedding of functions of $L^p(X_L)\hookrightarrow C_{L,R}\left(X_{L,R}\right)$. In particular for $p<\infty$, this shows that the classes of semicontinuous step functions $\chi_{L,R}$, which are dense in $L^p\left(X\right)$, are similarly dense in $C_{L,R}\left(X_{L,R}\right)$, though $\chi_{L,R}\notin C\left(\closure{\mbb{R}}\right)$ in general. This implies that $C\left(\closure{\mbb{R}}\right)\subsetneq C_{L,R}\left(X_{L,R}\right)$ in general.  
\end{proof}
We mention that as a result of the imposed topological and measure space congruence, Theorem~\ref{embedLpC} implies that the embedding holds both algebraically and topologically. This will have interesting implications regarding the dual spaces of $L^p_{L,R}$. For example, it is well known that the dual of $L^\infty$ is not generally $L^1$, but does contain $L^1$ as a subspace\cite{roman2,Teschl14}. The dual of $L^\infty$ may be characterized such that $L^{\infty\ast}\cong L^1\cup \mc{BV}$, where $\mc{BV}$ is the space of functions of bounded variation with $\|\cdot\|_{\mc{BV}}$-norm. $\mc{BV}$ includes the subset of finitely additive signed and countably additive measures. Let $B\left(\Sigma\right)$ be the set of bounded $\Sigma$-measurable functions, and $\mu$, a measure on the space $B\left(\Sigma,\mu\right)$. The measurable space, $B\left(\Sigma\right)^\ast \cong\mc{BV}\left(\Sigma\right)$ is the continuous dual of $B\left(\Sigma\right)$. It follows that $B\left(\Sigma,\mu\right)^\ast\cong\mc{BV}\left(\Sigma,\mu\right)$ holds with respect to the $\|\cdot\|_{\sup}$-norm, if and only if $B\left(\Sigma, \mu\right)$ is continuous in the $\sup$-norm topology. With the \textit{essential sup}-norm $L^\infty(\mu)\cong B\left(\Sigma\right)/N_\mu$, where $N_\mu$ is the closed subspace of all bounded null measurable functions\cite{Teschl14}. It follows that $L^{\infty\ast}\cong N^\perp_\mu$, the orthogonal complement of $N_\mu$, which is the space of all finitely additive measures on $\Sigma$ that are absolutely continuous in measure ($\mu$-a.c.). If the measure space $B\left(\Sigma, \mu\right)$ is $\sigma$-finite (and therefore separable), then we can identify $L^1(\mu)^\ast\cong L^\infty(\mu)$. Taking the dual once more, we have that $L^1(\mu)\subset L^1(\mu)^{\ast\ast}\cong L^\infty(\mu)^\ast$ by the Radon-Nikodym theorem. Given Theorem~\ref{embedLpC} and the class of step functions: $\chi\in \mc{BV}$, this suggests that there is also some non-trivial embedding of $\mc{BV}\left(X_{L,R}\right)\hookrightarrow C_{L,R}\left(X_{L,R}\right)^\ast$, provided any finitely additive measure $\mu\in\mc{BV}$ is $\mu_{L,R}$-extendable.

\subsection{Linear transformations on the semicontinuous Banach spaces}\label{Lplinearxforms}

In~\ref{Lp functions} we established that $L^p_{L,R}\left(X_{L,R}\right)$ are sub-manifolds of $L^p\left(X\right)$, however we would also like to define the  transformations acting on them. These are Banach spaces (subspaces and subalgebras) of continuous linear transformations and operators, which we denote as $\mc{B}$ and $\mc{B}_{L,R}$. 
\begin{defin}{: Banach spaces of linear transformations}\\
The space of linear transformations from $L^p_{L,R}\left(X_{L,R}\right)\to L_{L,R}^p\left(X_{L,R}'\right)$ is also a Banach space with the $p$-norm topology. We denote the Banach space of left (resp. right) linear transformations by $\mc{B}_L\left(L_{L}^p\left(X_{L}\right),L_{L}^p\left(X_{L}'\right)\right)$ and $\mc{B}_R\left(L_{R}^p\left(X_{R}\right),L_{R}^p\left(X_{R}'\right)\right)$. \textbf{Note:} Reflection symmetry (parity) is not preserved, and is the result of an odd number of parity violating transformations (i.e. transformations composed of an odd number of reflections). We distinctly denote parity violating linear transformations by $\mc{B}_{L\to R}\left(L_{L}^p\left(X_{L}\right),L_{R}^p\left(X_{R}'\right)\right)$, or $\mc{B}_{R\to L}\left(L_{R}^p\left(X_{R}\right),L_{L}^p\left(X_{L}'\right)\right)$, respectively.
\end{defin}
\begin{rmk}
In general the space of linear transformations $B\left(X,Y\right)$ which maps the arbitrary measure space $X\to Y$ is a Banach space if and only if $Y$ is a Banach space, where given some $f\in B$, $f\left(Y\right)$ inherits the relative topology from $Y$\cite{roman2}. Since $L^p_{L,R}\left(X_{L,R}\right)$ are Banach spaces, we take this as a definition for $\mc{B}$, and $\mc{B}_{L,R}$.
\end{rmk}
\begin{defin}{: Reflection (parity) Map}\label{parity}\\
Let $f_L,f_R$ be semicontinuous over some interval $I\subset\closure{\mbb{R}}$, with $f_L\in\mc{B}_L(I_{(\cdot,\cdot]}))$ and $f_R\in \mc{B}_R(I_{[\cdot,\cdot)})$, and $\hat{\Pi}$ be the parity operator define on elements $x\in X,X_{L,R}$ such that $\hat{\Pi}: x\mapsto -x$. Therefore $\hat{\Pi}:X_{L,R}\to X_{R,L}$, where $L,R$ designates that $\hat{\Pi}:X_L\to X_R$ and $R,L$ designates $\hat{\Pi}:X_R\to X_L$. For functions $f\in\mc{B},\mc{B}_{L,R}$, $\hat{\Pi}$ acts through composition $(\hat{\Pi} f)(x)\equiv(f\circ \hat{\Pi})(x)=f(-x)$. If $f_L,f_R$ are not invariant under reflections, then $\hat{\Pi} f_L$ defines a mapping $\mc{B}_{L\to R}\left(L^p_L(I_{(\cdot,\cdot]}),L^p_R(I_{[\cdot,\cdot)})\right)$ and $\hat{\Pi} f_R$ defines a mapping $\mc{B}_{R\to L}\left(L^p_R (I_{[\cdot,\cdot)}), L^p_L (I_{(\cdot,\cdot]})\right)$. 
\end{defin} 
\begin{cor}{: Measures of $\hat{\Pi}$}\label{Opi}\\
Let $f_L,f_R$ be as in def.~\ref{parity}, in particular not invariant with respect to $\hat{\Pi}$. It follows that $\mu_L \left(\pi f_L\right)(I_{(\cdot,\cdot]})=\mu_R\left(\pi f_R\right)(I_{[\cdot,\cdot)})=0$.
\end{cor}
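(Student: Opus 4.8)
The plan is to derive this corollary directly from the handedness-flipping property of $\hat{\Pi}$ recorded in Definition~\ref{parity}, combined with the orthogonality $L^p_{\mu_L}(X_L)\perp L^p_{\mu_R}(X_R)$ proved in Proposition~\ref{LpmuLmuR}(3) (which itself rests on Theorem~\ref{discLploc}). In one line: $\pi f_L$ is a right-semicontinuous object, and the left-semicontinuous measure $\mu_L$ annihilates every right-semicontinuous function, while symmetrically $\mu_R$ annihilates $\pi f_R$.

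First I would make the handedness flip precise. Under the point reflection $x\mapsto -x$ each left half-open generating cell $(\alpha,\beta]$ is carried to the right half-open cell $[-\beta,-\alpha)$, so a one-sided jump of $f_L$ that is left-semicontinuous becomes a one-sided jump of $\pi f_L$ that is right-semicontinuous; this is exactly the assertion of Definition~\ref{parity} that $\hat{\Pi}f_L\in\mc{B}_{L\to R}$ has target $L^p_R(I_{[\cdot,\cdot)})$, and likewise $\hat{\Pi}f_R\in\mc{B}_{R\to L}$ has target $L^p_L(I_{(\cdot,\cdot]})$. Thus $\pi f_L$ belongs to the genuinely right-semicontinuous sub-manifold and $\pi f_R$ to the genuinely left-semicontinuous one. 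Invoking the orthogonality of Proposition~\ref{LpmuLmuR}(3): by Theorem~\ref{discLploc} a function that is right-semicontinuous but not left-semicontinuous on $I$ is completely discontinuous relative to the left-semicontinuous subspace and hence carries vanishing $\mu_L$-measure there, giving $\mu_L(\pi f_L)(I_{(\cdot,\cdot]})=0$; the mirror statement gives $\mu_R(\pi f_R)(I_{[\cdot,\cdot)})=0$, and the two equalities chain to the claimed common value $0$.

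The step I expect to be the main obstacle is pinning down the precise content of the hypothesis that $f_L,f_R$ are \emph{not invariant under reflection}, since the bare statement is false for reflection-asymmetric but continuous functions (for $f(x)=x$ one has $\mu_L(\pi f)\neq 0$). The hypothesis must be read as non-invariance of the \emph{semicontinuity type}: $f_L,f_R$ are the genuinely discontinuous, one-handed representatives of $L^p_{\mu_L}(X_L)$ and $L^p_{\mu_R}(X_R)$, so that reflection actually toggles the handedness and lands them in the orthogonal sub-manifold, rather than in the shared continuous core $C_L\cap C_R$ where $\mu_L=\mu_R=\lambda$ and no measure vanishes. Once this is fixed, I would sanity-check the conclusion against the explicit computations already in hand: the reflected, right-handed signum paired against the open-closed ($\mu_L$) topology vanishes in Eq.~\eqref{sgnR,oc}, and the left-handed signum against the closed-open ($\mu_R$) topology vanishes in Eq.~\eqref{sgnL,co}, which are precisely the two cancellations that the corollary abstracts. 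The general case then follows from these representatives by density of the semicontinuous step functions (Theorem~\ref{sc-Banach}) and linearity of $\mu_{L,R}$.
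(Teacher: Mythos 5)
Your proof is correct and takes essentially the same route as the paper: the paper's own one-line proof simply asserts $\hat{\Pi}f_L\in L^p_d\left(X,\mu_L\right)$ and $\hat{\Pi}f_R\in L^p_d\left(X,\mu_R\right)$, which is precisely the handedness-flip-plus-Theorem~\ref{discLploc} argument you spell out in detail. Your caveat that ``not invariant with respect to $\hat{\Pi}$'' must be read as non-invariance of the semicontinuity type (excluding reflection-asymmetric but continuous functions such as $f(x)=x$) is a genuine clarification of a point the paper's proof leaves implicit, not a departure from its method.
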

\begin{proof}{:} A consequence of $\hat{\Pi} f_L\in L^p_d\left(X,\mu_L\right)$ and $\hat{\Pi}f_R\in L^p_d\left(X,\mu_R\right)$, when $f_L,f_R$ are not invariant under reflections of the domain coordinate.
\end{proof}
We see that $\hat{\Pi}$ can act as a linear operator on functions of $\mc{B}_{L,R}$, provided it acts on functions which are invariant under reflections. Otherwise, $\hat{\Pi}$ acts as a linear transformation, mapping subsets of $\mc{B}_{L,R}$ to subsets $\mc{B}_{R,L}$.  

We also have that $\mc{B}_L$ and $\mc{B}_R$ are left and right semicontinuous Banach subalgebras. Let $f,g$ be families of left and right semicontinuous functions and $a\in\mbb{R}$. Then $f_1+f_2:\mc{B}_L\times \mc{B}_L\to\mc{B}_L$ and $af:\mc{B}_L\times \mbb{R}\to\mc{B}_L$ are well defined and closed under addition and scalar multiplication. The same holds for the family of right semicontinuous functions, $g$. Moreover, if we take $f_1,f_2\in L^p_L\left(X_L\right)$, it follows that may define pointwise multiplication over their common domains ($\mc{D}_{f_1\cdot f_2}=\mc{D}_{f_1}\cap\mc{D}_{f_2}\neq\{\emptyset\}$), as $f_1$ and $f_2$ are continuous in $L^p_L\left(X_L\right)$ and therefore, $f_1+f_2:L^p_L\left(X_L\right)\times L^p_L\left(X_L\right)\to L^p_L\left(X_L\right)$. We may take this one step further to define $f_1+g_1:L^p_L\left(X_L\right)\times L^p_R\left(X_R\right)\to L^p_L\left(X_L\right)\oplus L^p_R\left(X_R\right)$, as well as $f_1g_1:L^p_L\left(X_L\right)\times L^p_R\left(X_R\right)\to L^p\left(X\right)$. This is really nothing new, as the sum and products of two absolutely integrable functions is also integrable\cite{roman2,Teschl14}. However subspaces of $L^p_{L,R}\left(X_{L,R}\right)$ having embeddings in $C\left(\closure{\mbb{R}}\right)$ can have non-trivial implications for the continuous dual space of linear functionals, to which we now turn.

\subsection{The duals of $L^p\left(X\right)$ and $L^p_{L,R}\left(X_{L,R}\right)$}\label{secLp*} 
We now wish to consider the dual space of our Banach spaces. These spaces are the spaces of continuous linear functionals on $L^p\left(X\right)$ and the semicontinuous submanifolds $L_{L,R}^p\left(X_{L,R}\right)$. Let $p,q\in\mbb{N}:1<p,q<\infty$. From the H\"{o}lder inequality, it is well known that $L^p\left(\closure{\mbb{R}}\right)^\ast \cong L^q\left(\closure{\mbb{R}}\right)$ where $p,q$ are conjugate pairs such that $\frac{1}{p}+\frac{1}{q}=1$. Since our universal space is $L^1\left(X\right)$ where $X=\left(\closure{\mbb{R}}, \mathscr{B},\lambda\right)$, the well known duals apply for $1\leq p\leq\infty$ remain true. Therefore in addition to dual spaces stated for $1<p<\infty$, and since the Lebesgue measure $\lambda$ is $\sigma$-finite, we know that $L^1\left(X\right)^\ast\cong L^\infty\left(X\right)$. It follows that $L^1\left(X\right)^{\ast\ast}\cong L^\infty\left(X\right)^\ast\supset L^1\left(X\right)\cup \mc{BV}\left(X\right)$, where we have identified $\mc{BV}$ as the set which contains all finitely additive signed Borel measures and all countable finitely additive signed Borel measures which are absolutely continuous ($\mu$-a.c.) as a subspace\cite{Teschl14}. See the discussion after the proof of Theorem~\ref{embedLpC}. Our measure space is generated by the collection of all Lebesgue measures over $\closure{\mbb{R}}$, which is equivalent to the measure space generated by all collections of all Borel measures. Therefore we can make the identification that $B\left(\mathscr{B},\lambda\right)=L^1\left(X\right)^\ast$, where $B\left(\mathscr{B},\lambda\right)$ is the Banach space of all Lebesgue measurable functions\footnote{Here we note that $\mc{B}\left(\mathscr{B},\lambda\right)=\closure{\mc{B}\left(\mathscr{B},\mu\right)}$, is the Borel $\mu$-norm completion of the measure space. But $\|\cdot\|_{L^1}=\|\cdot\|_{\sup}$, which is the metric norm for $\mu$.}, but with the total variation metric norm $\left|\nu\right|\left(Y\right)\equiv\sup\{\sum_{i=1}^n \left|\nu\left(Y_i\right)\right|\big|Y_i\in \mathscr{B}~\text{disjoint},~Y=\cup_{i=1}^n Y_i\}$. Note that the definition of the $\left|\nu\right|\left(Y\right)$ does not require $\sigma$-additivity, so $\nu$ is finite if $\left|\nu\right|\left(Y\right)<\infty$. $\nu$ is therefore a map $\nu:\mathscr{B}\to \mc{C}\left(\mc{R}\right)$, and is referred to as a \textit{(complex) content}. It follows that $\left|\nu\right|$ is a \textit{positive content}\cite{Teschl14}. We do not need to say anything further regarding $L^p\left(X\right)^\ast$ for $1\leq p\leq\infty$, and now turn our attention towards finding the duals for $L^p_{L,R}\left(X_{L,R}\right)$.

\subsubsection{$L^p_{L,R}\left(X_{L,R}\right)^\ast$, with $1\leq p<\infty$.}
Given that we are working with $L^p$ submanifolds defined on semicontinuous metric-norm quotient spaces of $L^p\left(\closure{\mbb{R}}\right)$, these dual spaces require slightly more care in their definitions. First we make the following definition.
\begin{defin}{: Continuous linear functionals on $L^p\left(X\right),~p<\infty$}\label{Lp*}\\
For the Banach space $L^p\left(X)\equiv L^p(X,\lambda\right)$ with the $\sigma$-finite Lebesgue measure $\lambda$, $p<\infty$, and $q$, the dual conjugate to $p$ such that $\frac{1}{p}+\frac{1}{q}=1$, then we define the space of continuous linear functionals $L^q\left(X\right)$ to be the collection of all maps $g\in L^q\left(X\right)\mapsto l_g\in L^p\left(X\right)^\ast$ given by 
\beq\label{linf}
l_g\left(f\right)&=\int_X gfd\mu(g)
\eeq
\end{defin}

We state the well known theorem for the Banach spaces $L^p(X)$:

\begin{thm}{: $L^q\left(X\right)$ properties}\label{Lqspace}\\
Let $L^p\left(X\right), l_g, p,q$ be as in def.~\ref{Lp*}. For $1\leq p<\infty$, then the collection of all mappings defined by Eq.~\eqr{linf} is an isometric isomorphism, and thus $L^p\left(X\right)^\ast\cong L^q\left(X\right)$, and reflexive for $1<p<\infty$. If $p=\infty$, then the mapping $l_g$ is isometric, but not isomorphic.
\end{thm}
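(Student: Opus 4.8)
The plan is to follow the classical Riesz representation argument for $L^p$ duality, specialized to the $\sigma$-finite measure space $X=\left(\closure{\mbb{R}},\mathscr{B},\lambda\right)$. The proof separates into an \emph{into} part (boundedness together with isometry) and an \emph{onto} part (surjectivity), after which reflexivity for $1<p<\infty$ and the breakdown at $p=\infty$ follow as corollaries. Throughout, the density of the step functions in $L^p\left(X\right)$ furnished by Theorem~\ref{sc-Banach} is the engine that lets me reduce statements about arbitrary $L^p$ elements to statements about simple functions.

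First I would verify that each $g\in L^q\left(X\right)$ induces a genuine element $l_g\in L^p\left(X\right)^\ast$ with $\norm{l_g}\leq\norm{g}_q$; this is an immediate application of the H\"older inequality to Eq.~\eqr{linf}. To upgrade this to the isometry $\norm{l_g}=\norm{g}_q$ (which simultaneously yields injectivity, since $\norm{g}_q=0$ forces $g=0$), I would exhibit a maximizing test function. For $1<p<\infty$ I take $f=\norm{g}_q^{1-q}\,|g|^{q-1}\sgn(g)$, whose $p$-norm equals $1$ and which forces equality in H\"older so that $l_g(f)=\norm{g}_q$; for $p=1$ one instead tests against $\sgn(g)\chi_E$ for sets $E$ of small positive finite measure on which $|g|$ nearly attains its essential supremum (such $E$ exist by $\sigma$-finiteness). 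These functions lie in $L^p\left(X\right)$, so the supremum defining $\norm{l_g}$ is attained in the limit.

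The substantive step is surjectivity: every $L\in L^p\left(X\right)^\ast$ must be shown to equal $l_g$ for some $g\in L^q\left(X\right)$. Here I define the set function $\nu(E):=L(\chi_E)$ on measurable $E$ of finite measure, and check that $\nu$ is a countably additive (complex) content that is absolutely continuous with respect to $\lambda$ — continuity of $L$ together with the estimate $|\nu(E)|\leq\norm{L}\,\lambda(E)^{1/p}$ delivers both additivity and $\lambda$-absolute continuity. The Radon--Nikodym theorem then produces a density $g$ with $\nu(E)=\int_E g\,d\lambda$. By linearity over the dense class of step functions (Theorem~\ref{sc-Banach}) this extends to $L(f)=\int_X gf\,d\lambda$ for all simple, hence all $f\in L^p\left(X\right)$. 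The $\sigma$-finiteness of $\lambda$ is used to exhaust $\closure{\mbb{R}}$ by finite-measure pieces and to patch the local Radon--Nikodym derivatives into a single $g$.

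I expect the genuine obstacle to be showing $g\in L^q\left(X\right)$ with $\norm{g}_q\leq\norm{L}$, rather than merely $g\in L^1_{loc}$: one must bound the $L^q$-norm of the Radon--Nikodym derivative by the operator norm of $L$, which I would do by testing $L$ against truncations $g_n:=g\,\chi_{\{|g|\leq n\}}$ paired with the optimal H\"older weight and then passing $n\to\infty$ by monotone convergence. Once isometric isomorphism is established for $1\leq p<\infty$, reflexivity for $1<p<\infty$ follows because both $p$ and its conjugate $q$ lie in the open range, so $L^p\left(X\right)^{\ast\ast}\cong L^q\left(X\right)^\ast\cong L^p\left(X\right)$ realizes the canonical embedding as a surjection. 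Finally, for $p=\infty$ (so $q=1$) the same H\"older computation shows $g\mapsto l_g$ remains isometric, but surjectivity fails: as noted in the discussion following Theorem~\ref{embedLpC}, $L^\infty\left(X\right)^\ast\cong L^1\left(X\right)\cup\mc{BV}\left(X\right)$ strictly contains $L^1\left(X\right)$, the extra functionals arising from purely finitely additive measures that no integrable $g$ can represent.
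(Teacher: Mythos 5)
Your proposal is correct, and it is precisely the classical Riesz representation argument for $L^p$ duality; the paper offers no independent proof of this theorem, deferring entirely to \cite[Theorem 11.1 and Corollary 11.2]{Teschl14}, whose proof is this same argument (H\"older plus an extremal test function for the isometry, Radon--Nikodym on $\nu(E)=L(\chi_E)$ with a truncation argument for surjectivity, and failure of surjectivity at $p=\infty$ via purely finitely additive measures). Since $X$ here is just $\closure{\mbb{R}}$ with Lebesgue measure, nothing from the semicontinuous formalism is needed, and writing out the standard proof as you did matches the paper's intent exactly.
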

\begin{proof}{:} See \cite[theorem 11.1 and corollary 11.2]{Teschl14}. 
\end{proof}
\begin{rmk} The isometric isomorphism for $p=\infty$ is discussed in~\ref{secLp*}
\end{rmk} 
With def.~\ref{Lp*} and Theorem~\ref{Lqspace}, we may now define the dual of $L^p_{L,R}\left(X_{L,R}\right)$ for $1\leq p<\infty$.
\begin{thm}{: The dual space of $L^p_{L,R}\left(X_{L,R}\right)$, with $1\leq p<\infty$}\label{LqLR}\\
Let $p,q$ be the H\"{o}lder conjugate pairs, with $1\leq p<\infty$. The continuous isometric isomorphic dual to the semicontinuous Banach spaces $L^p_{L,R}\left(X_{L,R}\right)$ is given $L^q_{L,R}\left(X_{L,R}\right)$, for $p=\infty$, they are only isometric. Again, for $1<p<\infty$, the spaces $L^p_{L,R}\left(X_{L,R}\right)$ are reflexive.
\end{thm}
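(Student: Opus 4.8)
The plan is to mirror the proof of Theorem~\ref{Lqspace} for the ambient space $L^p(X)$, but to carry the semicontinuous Lebesgue--Stieltjes measure $\mu_{L,R}$ and the refined equivalence relations $\sim_L,\sim_R$ of Definition~\ref{LpLRQclss} through every step. First I would introduce, in analogy with Eq.~\eqref{linf}, the candidate pairing
\beq
l_g(f)=\int_{X_{L,R}} g f\, d\mu_{L,R}(g),\qquad g\in L^q_{L,R}(X_{L,R}),\ f\in L^p_{L,R}(X_{L,R}),
\eeq
and verify that it descends to the quotient: since any two representatives of a class $[f]\sim_L[g]$ differ only on a $\mu_L$-null set (they agree almost everywhere and are mutually $\mu_L$-extendable), the integral is independent of the chosen representative. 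The bound $|l_g(f)|\le\|g\|_{L^q_{L,R}}\|f\|_{L^p_{L,R}}$ is the H\"{o}lder inequality taken with respect to $\mu_{L,R}$; this is legitimate because $\mu_{L,R}\preceq\lambda$ (Definition~\ref{measure}) and the two measures coincide on the semicontinuous representatives, so the ambient inequality restricts cleanly. This exhibits $g\mapsto l_g$ as a norm-nonincreasing linear injection $L^q_{L,R}\hookrightarrow (L^p_{L,R})^\ast$, and the reverse inequality (hence isometry) follows by the usual extremal-test-function argument, choosing $f=\sgn_{L,R}(g)\,|g|^{q-1}$ built from the semicontinuous signum of Eqs.~\eqref{sgnL}--\eqref{sgnR} so that the test function lies in the correct semicontinuous class.

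Second, and this is the crux, I would prove surjectivity: every $\ell\in(L^p_{L,R})^\ast$ is represented by some $g\in L^q_{L,R}$. Here I would use that the semicontinuous step functions $\chi_{L,R}$ are dense in $L^p_{L,R}$ (Theorem~\ref{sc-Banach}, Theorem~\ref{embedLpC}) and define a set function $\nu(A):=\ell(\chi_A)$ on the half-open generating Borel algebra $\mathscr{B}_{L,R}$. The congruence of the measure topology with the half-open topology forces $\nu$ to be finitely additive over disjoint half-open sets, and continuity of $\ell$ upgrades this to countable additivity and to absolute continuity with respect to $\mu_{L,R}$; the Radon--Nikodym theorem on $(\closure{\mbb{R}},\mathscr{B}_{L,R},\mu_{L,R})$ then yields a density $g=d\nu/d\mu_{L,R}$ with $\ell=l_g$ on the dense set, whence on all of $L^p_{L,R}$ by continuity. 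That $g$ has the correct orientation — i.e.\ $g\in L^q_L$ rather than leaking into $L^q_R$ — is exactly where the orthogonality $L^p_L\perp L^p_R$ of Proposition~\ref{LpLR-props}(3) is used: a functional continuous on the left-semicontinuous quotient annihilates the right-semicontinuous complement, so its representative carries no $\mu_R$-charge and is $\mu_L$-extendable into $L^q_L$. I expect the main obstacle to be precisely this absolute-continuity and orientation step, since it is here that the non-isometric, measure-sensitive nature of $L^p_{L,R}$ departs from the standard $L^p$ theory and where one must check that no mass concentrates on the Lebesgue-null but $\mu_{L,R}$-nontrivial endpoint sets.

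Finally, reflexivity for $1<p<\infty$ follows formally: there $q$ also lies in $(1,\infty)$, so the isometric isomorphism just established applies symmetrically to give $(L^q_{L,R})^\ast\cong L^p_{L,R}$, and a diagram chase shows the canonical embedding $L^p_{L,R}\to(L^p_{L,R})^{\ast\ast}=(L^q_{L,R})^\ast$ coincides with this isomorphism, hence is surjective. For the boundary case $p=1$ the construction above gives $(L^1_{L,R})^\ast\cong L^\infty_{L,R}$ isometrically and isomorphically, using that $\mu_{L,R}$ is $\sigma$-finite (inherited from $\lambda$); but taking the dual once more the map $l_g:L^1_{L,R}\to(L^\infty_{L,R})^\ast$ remains isometric while failing to be onto, because $(L^\infty_{L,R})^\ast$ contains the strictly larger space $\mc{BV}_{L,R}$ of finitely additive semicontinuous measures discussed after Theorem~\ref{embedLpC}. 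This is the content of the ``$p=\infty$, only isometric'' clause.
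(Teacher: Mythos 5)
Your proposal is correct in substance, but it takes a genuinely different route from the paper. You prove the representation theorem from scratch, intrinsically on the semicontinuous measure space: you define the pairing $l_g(f)=\int gf\,d\mu_{L,R}$ directly, check it descends to the quotient classes of Definition~\ref{LpLRQclss}, get the norm bound from H\"{o}lder with respect to $\mu_{L,R}$, recover the isometry with the extremal function $\sgn_{L,R}(g)|g|^{q-1}$, and then obtain surjectivity constructively via a set function on the half-open generating algebra and the Radon--Nikodym theorem. The paper instead never re-derives a Riesz representation: it works inside the \emph{ambient} dual, invoking the abstract Banach-space fact that the dual of a quotient $Y/M_Y$ is the annihilator $\{l\in Y^\ast \mid M_Y\subseteq\ker(l)\}$, identifies $L^p_{L}(X_L)^\ast$ with an annihilator subspace of $L^p(X)^\ast\cong L^q(X)$, uses the orthogonality $L^p_L\perp L^p_R$ (Propositions~\ref{LpmuLmuR} and~\ref{LpLR-props}) to pin that annihilator down as the $\mu_R$-supported functionals together with $[0]$, and then cites the classical Theorem~\ref{Lqspace} for isometry, reflexivity, and the $p=\infty$ caveat. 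The two arguments deploy the same key ingredient --- left/right orthogonality --- at the analogous step (you use it to show the representing density carries no $\mu_R$-charge; the paper uses it to compute the annihilator), and both dispose of $p=\infty$ by pointing to $\mc{BV}_{L,R}$. What the paper's route buys is brevity: all measure-theoretic labor is inherited from the known dual of $L^p(X)$ and from quotient duality, though at the cost of stating the annihilator identification rather quickly. What your route buys is self-containedness and transparency: the representing function is exhibited explicitly, and the genuinely delicate points of this framework --- that $\mu_{L,R}$ is $\sigma$-finite, that continuity of $\ell$ yields countable additivity and $\mu_{L,R}$-absolute continuity, and that no mass concentrates on Lebesgue-null but $\mu_{L,R}$-charged endpoint sets --- are surfaced as explicit verification steps rather than absorbed into an appeal to the ambient theory. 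The one detail you should still patch is the extremal-function step at $p=1$ (where $q=\infty$ and $|g|^{q-1}$ is not meaningful); there one uses the standard sets-of-large-values argument instead, exactly as in the classical proof your construction mirrors.
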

\begin{proof}{:}{($\left(L^p_L\left(X_L\right)\right)^\ast$)}
We recall that the Banach spaces are equivalence classes define by the topologies of quotient space $L^p_L\left(X_L\right)\equiv L^p\left(X/X_L,\mu_L\right)\sim L^p\left(X,\lambda\right)/L^p\left(X_L,\mu_L\right)$ where the metric norm is inherited from the relative topology of $X_L\equiv\left(\closure{\mbb{R}},\mathscr{B},\mu_L\right)$ and $\mu_L\preceq \lambda$. $L^p_R\left(X_R\right)$ was analogously defined. They are also closed subspaces (submanifolds) of $L^p\left(X\right)$, see discussion following Corollary~\ref{Opi}. Since for an arbitrary Banach space $Y$ with closed subspace $M_Y$, the dual of the quotient space $\left(Y/M_Y\right)$ is $\left(Y/M_Y\right)^\ast\cong\{l\in Y^\ast\big| M_Y\subseteq \ker(l)\}$, the analogous result must hold here. Thus we have that $L^p_L\left(X_L\right)^\ast\cong \{l\in L^p\left(X\right)^\ast \big| L^p \left(X_L,\mu_L\right)\subseteq\ker(l)\}$. But $\left(L^p\left(X\right)/L^p\left(X_L,\mu_L\right)\right)^\ast\cong \left(L^p_L\left(X_L\right)\right)^\perp$, which is the annihilator set of $L^p_L\left(X_L\right)$. Therefore $\left(L^p_L\left(X_L\right)\right)^\perp=\{l\in X^\ast\big| l(x)=0~\forall x\in L^p_L\left(X_L\right)\}$. But this is just the set of functionals $l$, which takes $g\in L^q\left(X\right)\mapsto l_g\in L^p\left(X\right)^\ast$, for which any $f\in L^p_L\left(X_L\right)$ is null for the functional $l_g(f)=\int_X gfd(\cdot)=0$. Since $L^p_L\left(X_L\right)\perp L^p_R\left(X_R\right)$ by Propositions~\ref{LpmuLmuR},~\ref{LpLR-props}, this implies that the annihilator set is just the union of the set of mappings with measure $\mu_R$ and the zero functional $[0]$. Thus by the H\"{o}lder inequality, $\left(L^p_L\left(X_L\right)\right)^\ast\cong L^q_L\left(X_L\right)$, with $\frac{1}{p}+\frac{1}{q}=1$. By Theorem~\ref{Lqspace}  $L^p_L\left(X_L\right)\cong L^q_L\left(X_L\right)$: they are isometrically isomorphic, and for $1<p<\infty$, they are reflexive. For $p=\infty$, they are isometric. The proof of $\left(L^p_R\left(X_R\right)\right)^\ast$ follows analogously.   
\end{proof}

\subsubsection{The dual space of $L^\infty_{L,R}\left(X_{L,R}\right)$}\label{LinftyLR}
Now we discuss the dual space of $L^\infty_{L,R}\left(X_{L,R}\right)$. In short, this will be similar to what is to be expected from $L^\infty\left(X\right)^\ast$, modulo minor modifications regarding the quotient space constructions of $L^\infty_{L,R}\left(X_{L,R}\right)$. To reassure ourselves that all the relevant details are taken into consideration, we will proceed constructively. It is rather harmless to assume that $L^\infty_{L,R}\left(X_{L,R}\right)^\ast \subset L^\infty\left(X\right)^\ast \cong \mc{BV}\left(X\right)$, consistent with standard results from analysis. This leads us to the following.
\begin{thm}{:}\label{muLR121BV}
Let $\mu_{L,R}$ left and right semicontinuous Borel measures, and $f_L,f_R$ be left and right continuous functions respectively in $\mc{BV}\left(\closure{\mbb{R}}\right)$. There is a one-to-one correspondence between functions $f_L\in\mc{BV}\left(\closure{\mbb{R}}\right)$, and $f_R\in\mc{BV}\left(\closure{\mbb{R}}\right)$ which are left, respectively right continuous, and normalized by $f_L(0)=0$, $f_R(0)=0$, and complex Borel measures $\mu_L$ and respectively $\mu_R$ on $\mbb{R}$ such that $f_L$ is the left continuous distribution function of $\mu_L$ defined by 
\beq\label{Ldistfunc}
f_L(x)\xleftrightarrow{1-to-1} \mu_L(x)\equiv\begin{cases}
-\mu_L\left((x,0]\right),	&  x<0,\\
0,					& x=0,\\
\mu_L\left((0,x]\right),	& x>0,
\end{cases}
\eeq
and similarly, $f_R$ is the right continuous distribution function of $\mu_R$ defined by
\beq\label{Rdistfunc}
f_R(x)\xleftrightarrow{1-to-1} \mu_R(x)\equiv\begin{cases}
-\mu_R\left([x,0)\right),	& x<0,\\
0,					& x=0,\\
\mu_R\left([0,x)\right),	& x>0,\\
\end{cases}
\eeq
It follows that the distribution functions of the total variations of $\mu_L,\mu_R$ are respectively defined by
\beq\label{Ltotvar}
\left|\mu_L\right|(a)=\lim_{x\to  a^-}V_{(0,x]}(f_L)=V_{(0,a]}(f_L),
\eeq
and
\beq\label{Rtotvar}
\left|\mu_R\right|(a)=\lim_{x\to 0^+}V_{[x,a)}(f_R)=V_{[0,a)}(f_R).
\eeq
\end{thm}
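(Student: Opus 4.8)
The plan is to recognize this as the classical Lebesgue--Stieltjes / Riesz correspondence between complex Borel measures and normalized functions of bounded variation, specialized to the half-open generating sets of $\mathscr{B}_L$ and $\mathscr{B}_R$ from Definition~\ref{measurespaces}, with the normalization $f_{L,R}(0)=0$ replacing the usual $f(-\infty)=0$. I would establish the two directions of the bijection separately and then show they are mutually inverse, deducing the total-variation identities \eqr{Ltotvar}--\eqr{Rtotvar} as a byproduct. Throughout I would handle the left case and obtain the right case verbatim via the parity map $\hat{\Pi}$ of Definition~\ref{parity}, which exchanges $\mathscr{B}_L\leftrightarrow\mathscr{B}_R$ and $(\cdot,\cdot]\leftrightarrow[\cdot,\cdot)$.

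For the direction $\mu_L\mapsto f_L$, I would take \eqr{Ldistfunc} as the definition of $f_L$ and verify the three required properties. The normalization $f_L(0)=0$ is built in. Bounded variation follows from finiteness of $\left|\mu_L\right|(\mbb{R})$: for any partition $0=x_0<\dots<x_n$ the telescoping identity $f_L(x_i)-f_L(x_{i-1})=\mu_L\left((x_{i-1},x_i]\right)$ gives $\sum_i\left|f_L(x_i)-f_L(x_{i-1})\right|\le\left|\mu_L\right|(\mbb{R})$. One-sided continuity matching the $\mathscr{B}_L$ topology is the delicate point and is exactly where the half-open convention does its work: as $x$ runs through the generating intervals, the sets $(0,x]$ vary monotonically, so one-sided continuity of the set function $\mu_L$ along the generators of $\mathscr{B}_L$ transfers directly to $f_L$. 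This is precisely the semicontinuous measure-topology congruence imposed in Definition~\ref{measure}, where $\mu_L\preceq\lambda$ and endpoint atoms are absorbed into the closed end of each generating interval, so the correspondence is genuinely one-to-one rather than one-to-one up to atoms.

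For the converse $f_L\mapsto\mu_L$, I would run the standard Carath\'eodory construction on the semialgebra of left-open intervals. I would decompose $f_L$ into real and imaginary parts, and each of these by its Jordan decomposition into a difference of monotone functions continuous in the $\mathscr{B}_L$ sense; I would set the premeasure of $(a,b]$ equal to $f_L(b)-f_L(a)$ on each monotone piece, verify countable additivity on $\mathscr{B}_L$ using the one-sided continuity established above, and extend to a complex Borel measure $\mu_L$. Because the half-open intervals generate $\mathscr{B}_L$, the extension is unique; together with the normalization $f_L(0)=0$ this shows the two maps are mutual inverses, giving the claimed bijection, with the right-continuous case \eqr{Rdistfunc} following after applying $\hat{\Pi}$ and invoking the isometry of Lemma~\ref{lemma1}.

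Finally, for \eqr{Ltotvar} I would prove $V_{(0,a]}(f_L)=\left|\mu_L\right|\!\left((0,a]\right)$. The inequality $\le$ is again the telescoping estimate $\sum_i\left|f_L(x_i)-f_L(x_{i-1})\right|=\sum_i\left|\mu_L\left((x_{i-1},x_i]\right)\right|\le\left|\mu_L\right|\!\left((0,a]\right)$; the reverse inequality holds because $\left|\mu_L\right|$ is itself the supremum of $\sum_i\left|\mu_L(E_i)\right|$ over finite measurable partitions, and the half-open intervals are cofinal among partitions of $(0,a]$ in the $\mathscr{B}_L$ topology. The limit form $\lim_{x\to a^-}$ merely records that the left-continuous representative evaluates the variation on $(0,a]$ rather than $(0,a)$, consistent with the $\mu_L$-topology, and the right case \eqr{Rtotvar} follows analogously. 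I expect the main obstacle to be precisely this one-sided continuity verification and the correct bookkeeping of endpoint atoms; this is where Definition~\ref{measure} and Theorem~\ref{sc-Banach} must be invoked in place of the classical right-continuous normalization, since it is the congruence of the measure with the half-open topology, rather than any reflection-symmetric choice, that makes the correspondence exact.
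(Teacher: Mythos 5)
Your proposal is correct, and its skeleton matches the paper's: both directions of the classical Lebesgue--Stieltjes correspondence, with bounded variation coming from the telescoping identity $f_L(x_i)-f_L(x_{i-1})=\mu_L\left((x_{i-1},x_i]\right)$ and the total-variation identity handled as a separate step. The differences lie in how the sub-steps are discharged. For the direction from functions to measures, the paper simply asserts the identification ``by construction'' (it takes the Stieltjes measure $df_R$ of a normalized semicontinuous $\mc{BV}$ function as given), whereas you run the Carath\'eodory extension explicitly after a Jordan decomposition into monotone, one-sidedly continuous pieces; your version is more self-contained and makes visible exactly where the half-open generating sets of $\mathscr{B}_L$, $\mathscr{B}_R$ enter, which suits the paper's program. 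For the identity between $\left|\mu_R\right|$ and the variation function, the paper argues via inner regularity on open subsets of a compact interval followed by outer regularity to reach all Borel sets, while you argue that half-open interval partitions realize the same supremum as arbitrary measurable partitions. One caution there: ``cofinal'' is not literally right, since an arbitrary Borel partition of $(0,a]$ cannot be refined by intervals; what you need is the standard approximation statement that each cell of a Borel partition can be approximated in $\left|\mu_L\right|$-measure by finite unions of half-open intervals, which is what makes the two suprema agree. With that repair, your partition-approximation route and the paper's regularity route are equally standard and sit at the same level of rigor. Finally, you derive the right-continuous case from the left one via the parity map $\hat{\Pi}$ of Definition~\ref{parity}, where the paper proves the right case and declares the left analogous; these are interchangeable.
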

\begin{proof}{:}{~$\mu_R,~~(\rightarrow)$}\\
Each right continuous complex measure $df_R$ can be identified with a function $f_R\in\mc{BV}$. Assume $f_R$ is normalized. Then by construction $f_R$ is equal to the right continuous distribution function.\\
 $(\leftarrow)$\\
 Let $d\mu_R$ be a complex measure with distribution function $\mu_R$. For each $a<b\in\mbb{R}$, which has the interval partition $P=\{a=x_0,\ldots,x_n=b\}$. It follows that the total variation, $V_{[a,b)}(\mu_R)=\sup_{P}V\left(P,\mu_R\right)=\sup_P\sum^n_i\left|\mu_R\left([x_{i-1},x_i)\right)\right|\leq \left|\mu_R\right|\left([a,b)\right)$, which is of bounded variation. This can also be extended to all Borel sets. First consider a measure $\mu(x)$ with total variation $V_{[0,x)}(\mu_R)$. Now $\mu$ is inner regular with respect to $\mu_R$, and thus valid for all open subsets of a compact interval $I\subset\mbb{R}$. Extend this to all Borel sets by outer regularity. It then follows that $\mu=\left|\mu_R\right|$, which implies that $\left|\mu_R\right|(x)=V_{[0,x]}(f)$. The case for left continuous Borel measures follows analogously.
 \end{proof}

So we have for any Borel measure, a unique left, and a unique right continuous function in $\mc{BV}$. As before, we form quotient spaces for the left and right continuous measure spaces $X_{L,R}$, such that functions that are equal almost everywhere in measure, with respect to the left and right Borel measures are identified. 

\begin{defin}\label{BVLR}
For the measure spaces $X_{L,R}$, we denote the left and right semicontinuous sets of $BV$ over $X_{L,R}$ respectively by defining the quotient spaces $\mc{BV}_L\equiv\mc{BV}_L\left(X_L\right)\sim\mc{B}\left(X,\lambda\right)/\mc{BV}\left(X_L,\mu_L\right)$ and $\mc{BV}_R\equiv\mc{BV}_R\left(X_R\right)\sim\mc{BV}\left(X,\lambda\right)/\mc{BV}\left(X_R,\mu_R\right)$. These quotient spaces identify functions which are almost everywhere equivalent and continuous with respect to $\mu_L$ and $\mu_R$ respectively, such that for $f_L\in\mc{BV}_L$ and $f_R\in\mc{BV}_R$, $\mu_L(f_L),\mu_R(f_R)\neq 0$ and $\mu_L(f_R)=\mu_R(f_L)=0$.
\end{defin}

Let $I=[a,b]\subset\mbb{R}$ be a bounded interval. It is a well known result from analysis that the set $\left(\mc{BV}[I],\|f\|_{\mc{BV}}\right)$ is a Banach space, with norm defined $\|f\|_{\mc{BV}}\equiv \left|f(a)\right|+V_{I}(f)$. Now that we have $\mc{BV}_{L,R}$ defined, we may see that they are also Banach spaces bounded above by the $\|f\|_{BV}$-norm. We will return to this momentarily. For now let us exploit the freedom granted us by continuity of our Banach spaces. 

We recall that any Borel measure $\mu$ is absolutely continuous ($\mu$-a.c.) with respect to Lebesgue measure $\lambda$, if and only if its distribution function is \textit{locally absolutely continuous} ( i.e. absolutely continuous on every compact sub-interval). The consequence of this is and the Radon-Nikodym derivative, is that $\mu$ is differentiable a.e., such that
\beq\label{Borelmuint}
\mu(x)=\mu(0)+\int_0^x\mu'(y)dy,
\eeq
$\mu'$ integrable, and $\int_{\mbb{R}}\left|\mu'(y)\right|dy=\left|\mu\right|(\mbb{R})$. However this is just the fundamental theorem of calculus, which provides an alternative definition $\mu$-a.c. functions. Since we have a one-to-one correspondence between half-open Borel measures and semicontinuous functions of $\mc{BV}$, we may then characterize the half-open Borel measures in terms of some unique primitive function associated with the integral of Eq.~\eqr{Borelmuint}.
\begin{thm}{:}\label{muAC}
On the quotient spaces of $\mc{BV}_{L,R}\left(X_{L,R}\right)$, any semicontinuous function is absolutely continuous with respect to Lebesgue measure. 
\end{thm}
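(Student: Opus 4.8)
The plan is to use the one-to-one correspondence between half-open Borel measures and semicontinuous $\mc{BV}$ functions from Theorem~\ref{muLR121BV} to translate the claim into a statement about the associated measures, and then to run the Lebesgue--Radon--Nikodym decomposition against $\lambda$. Given a normalized semicontinuous $f_L\in\mc{BV}_L(X_L)$ (the right-continuous case being identical under the parity map of Definition~\ref{parity} and Eqs.~\eqr{Rdistfunc},~\eqr{Rtotvar}), I would first pass to the unique left-continuous Borel measure $\mu_L$ it determines through the distribution function Eq.~\eqr{Ldistfunc}. Decomposing $\mu_L=\mu_L^{ac}+\mu_L^{j}+\mu_L^{sc}$ into its absolutely continuous, atomic (jump), and singular continuous parts, the theorem reduces to showing that the two singular pieces $\mu_L^{j}$ and $\mu_L^{sc}$ are trivial as elements of the quotient space $\mc{BV}_L$.

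The atomic part is disposed of by the semicontinuous topology itself. By the quotient identification of Definition~\ref{BVLR}, an element of $\mc{BV}_L$ is continuous with respect to $\mu_L$, i.e. continuous in the left half-open topology of $X_L$; a one-sided jump of $f_L$ is thus $\mu_L$-extendable in the sense of Corollary~\ref{intdiscLpdX}, and by Corollary~\ref{reg} its extension is a regular Borel measure with $\closure{\mu}_L=\lambda$ off a Lebesgue-null set (Definition~\ref{measure}). The mass that an ordinary two-sided jump would contribute as an atom of $df_L$ therefore computes to zero against the correctly oriented measure, in precise analogy with the vanishing measures $\mu_L(f_R)=\mu_R(f_L)=0$ of Definition~\ref{BVLR} and Corollary~\ref{Opi}. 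Hence $\mu_L^{j}=0$ on the quotient and $\mu_L$ is non-atomic.

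The main obstacle is the singular continuous part $\mu_L^{sc}$, since a left-continuous $\mc{BV}$ function may a priori be of devil's-staircase (Cantor--Lebesgue) type: continuous, hence trivially left-continuous, of bounded variation, non-atomic, yet not absolutely continuous, so the preceding paragraph does not rule it out. Here I would invoke the standing restriction of the construction to non-atomic functions excluding Cantor--Lebesgue measure and fat-Cantor supports, imposed for the $L^p_{L,R}$ spaces in \secref{semi} and carried into $\mc{BV}_{L,R}$ through the embedding discussed after Theorem~\ref{embedLpC}: such measures live on Lebesgue-null sets, on which Definition~\ref{measure} identifies $\mu_L$ with $\lambda$, forcing $\mu_L^{sc}=0$ on the quotient. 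With both singular components gone, $\mu_L\ll\lambda$, so the Radon--Nikodym theorem supplies a density and the fundamental-theorem-of-calculus characterization Eq.~\eqr{Borelmuint} shows the distribution function $f_L$ is locally absolutely continuous, hence absolutely continuous with respect to $\lambda$; the right case is verbatim.
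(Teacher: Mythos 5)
Your proof is correct within the paper's framework and, at bottom, travels the same road as the paper's own proof --- pass from the semicontinuous function to its Borel measure via the one-to-one correspondence of Theorem~\ref{muLR121BV}, establish that this measure is absolutely continuous with respect to $\lambda$, and conclude through the fundamental-theorem-of-calculus characterization Eq.~\eqref{Borelmuint} --- but you supply substantially more of the argument than the paper does. The paper's proof is two sentences: it asserts $\mu_{L,R}\preceq\lambda$ ``by construction'' (Definition~\ref{measure}) and then invokes the unique primitive correspondence, delegating the entire domination claim to the conventions. You instead \emph{prove} the domination, by running the Lebesgue--Radon--Nikodym decomposition and disposing of the atomic part through the quotient identifications of Definition~\ref{BVLR} (in the paper's pairing of left-continuous distribution functions with $(\cdot,\cdot]$ sets, point masses indeed vanish, since $\mu_L(\{a\})=\lim_{\epsilon\to 0^+}\mu_L\left((a-\epsilon,a]\right)=f_L(a)-f_L(a^-)=0$) and of the singular continuous part through the standing exclusion of Cantor--Lebesgue-type sets. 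This is the more honest argument: it makes visible that the singular continuous piece is the genuine obstruction --- a devil's staircase is continuous, hence trivially semicontinuous, and of bounded variation, yet its Stieltjes measure is singular, so the theorem is classically false without that exclusion --- and it locates exactly which of the paper's conventions are load-bearing. The one caveat is that the Cantor exclusion you invoke is stated by the paper only informally, in the overview preceding \secref{semi}, and for the $L^p_{L,R}$ construction rather than in Definition~\ref{BVLR} itself; importing it into $\mc{BV}_{L,R}$ is an interpretive step, but since the paper's own ``by construction'' rests on the same unstated restriction, this is not a gap relative to the paper's standard of rigor.
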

\begin{proof}{:} Recall that $\mu_{L,R}\preceq \lambda$ by construction. Then each Borel measure $\mu_{L,R}$ is uniquely associated with some primitive left $(f_L)$ or right $(f_R)$ continuous function.
\end{proof}

Theorem~\ref{muAC}, and the preceding discussion gives us everything that we need to complete the discussion for continuous dual of $L^\infty_{L,R}\left(X_{L,R}\right)$.
\begin{thm}{: $L^\infty_{L,R}\left(X_{L,R}\right)^\ast$}\label{Linfin*}\\
$L^\infty_{L,R}\left(X_{L,R}\right)^\ast\cong\mc{BV}_{L,R}\left(X_{L,R},\|\cdot\|_{\mc{BV}}\right)$, where $\|\cdot\|_{\mc{BV}}$ is the norm completion of $\mc{BV}_{L,R}\left(X_{L,R}\right)$. Moreover, the bi-dual of $L^\infty_{L,R}\left(X_{L,R}\right)$ is precisely the set $L^1_{L,R,loc}\left(X_{L,R}\right)\hookrightarrow C_{L,R,c}\left(X_{L,R}\right)$, where the embedding is continuous and dense.
\end{thm}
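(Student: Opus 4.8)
The plan is to establish the two assertions separately, reducing each to machinery already in hand. For the first isomorphism I would mirror the proof of Theorem~\ref{LqLR}, exploiting the quotient structure $L^\infty_{L}(X_L)\equiv L^\infty(X,\lambda)/L^\infty_{\mu_L}(X_L)$ together with the classical identification $L^\infty(X)^\ast\cong\mc{BV}(X)$ recorded in~\secref{secLp*}. Since $L^\infty_{\mu_L}(X_L)$ is a closed subspace of $L^\infty(X)$, the dual of the quotient is the annihilator, $L^\infty_L(X_L)^\ast\cong\{\,l\in L^\infty(X)^\ast \mid L^\infty_{\mu_L}(X_L)\subseteq\ker(l)\,\}=L^\infty_{\mu_L}(X_L)^\perp$. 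The orthogonality $L^\infty_{\mu_L}(X_L)\perp L^\infty_{\mu_R}(X_R)$ from Propositions~\ref{LpmuLmuR} and~\ref{LpLR-props} then shows this annihilator consists exactly of those contents whose distribution functions are left semicontinuous, together with the zero functional.

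First I would pin down the representation of these functionals. By Theorem~\ref{muLR121BV} each left-semicontinuous Borel measure $\mu_L$ corresponds one-to-one to a normalized left-continuous $f_L\in\mc{BV}$, with total variation $\lvert\mu_L\rvert(\closure{\mbb{R}})=V(f_L)$; pairing $f\mapsto\int_{\closure{\mbb{R}}}f\,d\mu_L$ realizes $f_L$ as a continuous functional on $L^\infty_L(X_L)$. I would then verify that the operator norm of this functional equals $\norm{f_L}_{\mc{BV}}=\lvert f_L(0)\rvert+V(f_L)=V(f_L)$ under the normalization, so the correspondence is isometric onto its range; after the quotient identifications of Definition~\ref{BVLR} the residual seminorm becomes a genuine norm, and completing in $\norm{\cdot}_{\mc{BV}}$ yields the Banach space $\mc{BV}_L(X_L)$. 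This establishes $L^\infty_L(X_L)^\ast\cong\mc{BV}_L(X_L)$, and the right case is verbatim with $\mu_R$ and the intervals $[x,0)$, $[0,x)$ in place of the half-open left intervals.

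For the bidual I would take the dual once more, $L^\infty_{L,R}(X_{L,R})^{\ast\ast}\cong\mc{BV}_{L,R}(X_{L,R})^\ast$, and let Theorem~\ref{muAC} do the decisive work. Classically $L^\infty(X)$ fails to be reflexive precisely because $\mc{BV}(X)$ carries purely finitely-additive (singular) contents; but Theorem~\ref{muAC} asserts that on the quotient $\mc{BV}_{L,R}(X_{L,R})$ every semicontinuous element is absolutely continuous with respect to $\lambda$. Hence by the Radon--Nikodym theorem and the second fundamental theorem of calculus (Eq.~\eqr{Borelmuint}) each such element is the primitive of a unique density in $L^1_{L,R,loc}(X_{L,R})$, with $V(f_{L,R})=\int\lvert f_{L,R}'\rvert=\norm{f_{L,R}'}_{L^1_{L,R,loc}}$, so differentiation $f\mapsto f'$ is an isometry identifying the reflexive core of the bidual with $L^1_{L,R,loc}(X_{L,R})$. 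Finally I would invoke Theorem~\ref{embedLpC}: the $\mu_{L,R}$-extendable densities embed continuously and densely (through the class of semicontinuous step functions, dense in both $L^1_{L,R}$ and $C_{L,R}$) into $C_{L,R,c}(X_{L,R})$, which yields the stated embedding $L^1_{L,R,loc}(X_{L,R})\hookrightarrow C_{L,R,c}(X_{L,R})$.

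The main obstacle I expect is the bidual step, not the first isomorphism. The delicate point is to show that the absolute-continuity property of Theorem~\ref{muAC} genuinely annihilates the singular, purely finitely-additive part of $\mc{BV}_{L,R}(X_{L,R})^\ast$ that obstructs reflexivity in the ordinary $L^\infty$ setting, so that the bidual collapses exactly onto $L^1_{L,R,loc}$ rather than a strictly larger content space; this requires carefully tracking that the half-open L--S measure topology forces every surviving functional to be $\lambda$-absolutely continuous. A secondary care point is the bookkeeping between the quotient seminorm inherited from the annihilator and the completed $\norm{\cdot}_{\mc{BV}}$ norm, together with confirming that the density/primitive isometry respects the left/right orthogonal decomposition throughout.
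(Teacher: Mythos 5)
Your first isomorphism is handled soundly and, if anything, more explicitly than the paper handles it: the paper's own proof of the first claim is essentially one sentence citing Theorem~\ref{muAC} together with the preceding discussion (Theorem~\ref{muLR121BV}, Definition~\ref{BVLR}, and the classical $L^\infty\left(X\right)^\ast\cong\mc{BV}\left(X\right)$), whereas you rerun the annihilator/quotient-duality argument from the proof of Theorem~\ref{LqLR} at $p=\infty$ and then check the isometry. That part is consistent with the paper's machinery and is not where the problem lies.

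The genuine gap is in your bidual step, and it sits exactly where you flag ``the main obstacle''; flagging it does not fill it. Your differentiation map $f\mapsto f'$ identifies $\mc{BV}_{L,R}$ itself (whose elements are $\lambda$-absolutely continuous by Theorem~\ref{muAC}) with $L^1_{L,R,loc}$; but the bidual is $\left(L^\infty_{L,R}\right)^{\ast\ast}\cong\left(\mc{BV}_{L,R}\right)^\ast$, one dual higher. Composing your isometry with duality gives $\left(L^\infty_{L,R}\right)^{\ast\ast}\cong\left(L^1_{L,R,loc}\right)^\ast$, which is an $L^\infty$-type object rather than $L^1_{L,R,loc}$: nothing in your plan converts the statement ``every element of $\mc{BV}_{L,R}$ is the primitive of an $L^1$ density'' into a statement about the functionals \emph{on} $\mc{BV}_{L,R}$, which is what the bidual consists of. The paper closes precisely this hole with a different device, the self-duality of $\mc{BV}_{L,R}$: the inclusion chains in its proof (Eqs.~\eqr{2} and~\eqr{3}) assert $\mc{BV}_{L,R}\subset\left(\mc{BV}_{L,R}\right)^\ast\cong\left(L^\infty_{L,R}\right)^{\ast\ast}\subset\left(L^\infty_{L,R}\right)^\ast\cong\mc{BV}_{L,R}$, hence $\left(\mc{BV}_{L,R}\right)^\ast\cong\mc{BV}_{L,R}$, and then locate $L^1_{L,R}\subset\left(L^1_{L,R}\right)^{\ast\ast}\cong\mc{BV}_{L,R}\cong\left(\mc{BV}_{L,R}\right)^\ast$; density of each inclusion, the $\mu_{L,R}$-extension of Theorem~\ref{embedLpC}, and the norm ordering $\|\cdot\|_{p}\leq\|\cdot\|_{\sup}\leq\|\cdot\|_{\mc{BV}}$ of Eq.~\eqr{L1BVnorm} then give the continuous dense embedding into $C_{L,R,c}\left(X_{L,R}\right)$. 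Whatever one thinks of the rigor of that chain, it is the paper's mechanism for collapsing $\left(\mc{BV}_{L,R}\right)^\ast$ onto $L^1_{L,R,loc}$, and your proposal contains no substitute for it: deferring the collapse to a hoped-for consequence of Theorem~\ref{muAC} leaves the second assertion of Theorem~\ref{Linfin*} unproved, since Theorem~\ref{muAC} constrains the elements of $\mc{BV}_{L,R}$, not the functionals on it.
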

\begin{proof}{:}
Here we implicitly assume that we are on the measure spaces $X_L$ or $X_R$, and omit their explicit mention in the Banach spaces. We start with $\left(L^1_{L,R}\right)^\ast\cong L^\infty_{L,R}$. Therefore we have the inclusions
\beq\label{1}
L^1_{L,R}\subset \left(L^1_{L,R}\right)^\ast\cong L^\infty_{L,R}
\eeq
Taking the dual, we have from Theorem~\ref{muAC} that $\left(L^\infty_{L,R}\right)^\ast\cong \mc{BV}_{L,R}$. The dual of this gives
\beq\label{2}
\mc{BV}_{L,R}\subset\left(\mc{BV}_{L,R}\right)^\ast\cong\left(L^\infty_{L,R}\right)^{\ast\ast}\subset \left(L^\infty_{L,R}\right)^\ast\cong\mc{BV}_{L,R}.
\eeq
Therefore $\mc{BV}_{L,R}$ is self-dual. Eq.~\eqr{1} also implies
\beq\begin{aligned}\label{3}
L^1_{L,R}\subset\left(L^1_{L,R}\right)^{\ast\ast}\subset\left(L^1_{L,R}\right)^\ast\cong L^\infty_{L,R}&\implies L^1_{L,R}\subset\left(L^1_{L,R}\right)^{\ast\ast}\subset\left(L^\infty_{L,R}\right)^\ast\cong\mc{BV}_{L,R}\\
&\implies L^1_{L,R}\subset \left(L^1_{L,R}\right)^{\ast\ast\ast}\cong\left(L^\infty_{L,R}\right)^{\ast\ast}\subset \left(\mc{BV}_{L,R}\right)^\ast\cong\mc{BV}_{L,R}\\
&\implies L^1_{L,R}\subset \left(\mc{BV}_{L,R}\right)^\ast\subset \left(L^\infty_{L,R}\right)^{\ast\ast}\cong\left(\mc{BV}_{L,R}\right)^\ast\cong\mc{BV}_{L,R}\\
&\implies L^1_{L,R}\subset\left(L^1_{L,R}\right)^{\ast\ast}\cong\mc{BV}_{L,R}\cong\left(\mc{BV}_{L,R}\right)^\ast
\end{aligned}\eeq
The last line above allows us to identify the locally Lebesgue measurable functions with the $\|\cdot\|_{\sup}$-norm as a subset of the dual to $\mc{BV}_{L,R}$ functions, which are the continuous functions over all compact intervals of $\closure{\mbb{R}}$, denoted as $C_{L,R,c}\left(X_{L,R}\right)$. We can see this by noting that for all inclusions above, each set inclusion is dense with respect to the corresponding superset. Next, the $L^1_{L,R,loc}$ functions can be $\mu_{L,R}$-extended by Theorem~\ref{embedLpC}. Since $L^1_{L,R,loc}$ has $\|\cdot\|_{\sup}$-norm, and 
\beq\label{L1BVnorm}
\|\cdot\|_{p}\leq\|\cdot\|_{\sup}\leq \|\cdot\|_{\mc{BV}}=\|\cdot\|+\|V_{L,R}\|_{\sup},
\eeq 
where $\|V_{L,R}\|_{\sup}$ denotes the supremum of the left/right variation. Hence for $1\leq p\leq\infty$, we have $\|\cdot\|_{L^p_{L,R,loc}}$ is bounded by $\|\cdot\|_{\mc{BV}}$. It follows that the $\mc{BV}$-norm is the norm completion for $\left(L^p_{L,R,loc}\right)^\ast$ and therefore, for all $L^p_{L,R}$. Thus $L^p_{L,R,loc}\hookrightarrow C_{L,R,c}$ continuously.  
\end{proof}
\begin{rmk}\label{mixnorm} After the initial posting of this work to the arXiv, the author was made aware of the work of Johnson and Lapidus, by a form student of M. Lapidus. Particularly, the norm above is very similar in form to the \textit{mixed-norm} defined in \cite[Ch. 15.2]{johnlap}. However there are some differences, which differ mainly in their respective origins based on how the linear function spaces are fundamentally structured. We will not discuss these details further here.  
\end{rmk}

\section{The Dirac-$\delta'$ System}\label{deltaprm}
We will now utilize the formalism developed in~\ref{semi} to analyze the quantum system described by Eq.~\eqr{hamdelta2}, which we reproduce below. The system under investigation here is given by the quantum mechanical Hamiltonian in 1-dimension described by Schr\"{o}dinger's equation. In what follows, we will only discuss the so called \textit{"interaction Hamiltonian"}, where the potential is assumed to contribute to the functional equation: $\hat{V}(x)\neq 0$. 

In order to make the equation well defined, we consider as a linear functional given by the first integral equation derived from Eq.~\eqr{varH}. If the system is Hamiltonian (at least locally), there exists a vector flow whose first integral is the solution to Hamilton's equations. Thus, integration is implicit in the construction of the functional equation. 
\beq\label{schrodeq}
\hat{H}&=\frac{1}{2m}\hat{P}^2+V(\hat{x})\\
&=-\frac{\hbar^2}{2m}\frac{\partial^2}{\partial x^2}+\alpha\frac{\partial\delta (x)}{\partial x}.
\eeq
$\hat{P}$ is the one dimensional momentum operator, $i\hbar\frac{\partial}{\partial x}$. $\alpha$ is a coupling constant with unspecified sign ($\left|{\alpha}\right| >0$) and units of $length^2\times energy$. 

\subsection{Differential Geometry and the Hamiltonian Operator}\label{diffgeo}
For the moment we recall some generalities of Hamiltonian systems on differentiable manifolds in order to redefine Eq.~\eqr{schrodeq} in terms of operators acting on them. Let $M$ be the compactified real line in $n$-dimensions. The Hamiltonian functional $\hat{H}$ defines a map $\hat{H}:TM\to M$, and thus $\hat{H}\in T^\ast M\cong TM$. We identify the generalized coordinates $(q_1,\ldots,q_n)$ on $M$ as the configuration space of the manifold, such that for $x\in M$, then $x=x(q_i)$. Then $\hat{H}(M)$ is a linear functional on the tangent bundle $TM$ of M. Let $\mc{X}(M)$ denote the space of a vector fields in $TM$, and $F^p(M)$ denote the space of $p$-forms on $M$.  

The solutions to Eq.~\eqr{schrodeq} are defined on the space of $p$-forms $\mb{F(M)}=\bigoplus^n_{p=0}F^p(M)$, in terms of the scalar product. For two forms $\phi,\psi\in F^p(M)$, the scalar product is 
\beq\label{eigenH}
\braket{\hat{H}\psi,\phi}=\int_{M}\hat{H}\psi\wedge\ast\phi=E\psi[\ast\phi],
\eeq
where $E$ is the eigenvalue of $\hat{H}$, and $\ast$ the Hodge dual. In the case of vacuum to vacuum transitions, then $\psi=\phi$, for a vacuum state $\psi$ and $\braket{\psi,\psi}\geq 0$ for all $\psi$. In terms of scalar solutions, then $\psi\in F^0(M)$.

The dimensionless free kinetic energy operator $\hat{H}_f$, is equivalent to the Laplace-Beltrami (Laplacian) operator, which defines a map from $F^p(M)\to F^p(M)$, for $0\leq p\leq n$. For a $C^\infty(M)$ scalar function $f\in F^0(M)$,
\beq\notag
\mathbold{d}f(q): T_{q_i} M\to T_{f(q_i)}M,~\quad 1\leq i\leq n
\eeq
such that at the point $q_i$, $\mathbold{d}f(q_i)\in T^\ast_{q_i} M$, the cotangent bundle, and $T_{f(q_i)}M = T_{q_i}(TM)$ is the tangent space of $TM$ at $q_i$. Any free vacuum to vacuum scalar wave function which satisfies Eq~\eqr{schrodeq} with $\hat{V}(x)=0$ is restricted to the class of harmonic 0-forms, $\mb{H}^0$. This must also be the case for $\hat{V}(x)\neq 0$, otherwise by the Hodge decomposition theorem, any $\psi\in F^p(M)$ with $p>0$ will necessarily be orthogonal to $\mb{H}^0$, resulting in a decoupled ($i.e.$ non-interacting) solution set\footnote{We exclude cases such as tensor products of $\mbb{R}\times\ldots\times\mbb{R}$}. 

In order for the Hamiltonian to admit an interactive scalar solutions and avoid the introduction of the 1-form "potential", $dx\frac{\partial\delta (x)}{\partial x}$, we take Eq.~\eqr{schrodeq} to be defined as
\beq\label{schrodiffop}
\hat{H}&:=\frac{\hbar^2}{2m}\mathbold{\Delta}+\alpha (\ast \mathbold{d}\delta(x)),
\eeq
On $\closure{\mbb{R}}^1$, $\ast\mathbold{d}\delta(x)=\ast \left(\frac{\partial}{\partial x}\delta(x)\right)=\frac{\partial\delta(x)}{\partial x}$ is a 0-form rather than the component of a 1-form. It follows that $\hat{H}$ defines a map, $\hat{H}: F^p\to F^p$, from which Eq.~\eqr{schrodeq} follows directly. Therefore, we take Eq~\eqr{schrodeq} to implicitly have the intent defined by Eq.~\eqr{schrodiffop}. 

For the remainder of this subsection, it will be convenient to set all the constants above to 1, and discuss Eq.~\eqr{schrodiffop} with respect to a general potential $\tta\in F^0(M)$ rather than specifically having $\hat{V}(x)=\ast \frac{\partial \delta(x)}{\partial x}$. We will also restrict our discussion to $M=\closure{\mbb{R}}^1$, so $P(M)=M^2$. Then we have $\hat{H}$ given by
\beq\label{schrodiffop2}
\hat{H}\to\mathbold{\Delta}+\ast(\mathbold{d}\tta).
\eeq

In order to discuss possible harmonic solutions to Eq.~\eqr{schrodiffop2}, we first need to define some differential equivalence class relations. Equivalence class identifications may seem somewhat unnecessary. However, because the function equivalence classes established in \secref{Lp functions} exclude identifications on sets of LMZ, and such identifications must be established under alternative associations. This is important for limiting processes, such as derivatives, convergence of regularized sequences of nets to distributions, and sheafs. In these cases, there is some form of $\epsilon$ neighborhood for which we wish to include the limit point, $\epsilon =0$. Without such equivalence class identifications it is not necessarily true, that a sequence of approximating functions which ordinarily converge to a distribution at a measure zero limit point, may be identified with a distribution.

For example, take $f(x)=\delta(x)$ and $g_\epsilon (x)=\frac{1}{2}\tan^{-1}({\frac{x}{\epsilon}})$, and $\frac{1}{2}\int_{-\infty}^\infty dk~ e^{ikx}\cdot e^{\epsilon\left|k\right|}=\frac{\epsilon}{x^2+\epsilon^2}$. In this case the regulated Dirac-$\delta$, given by $\delta_\epsilon (x)=\frac{\epsilon}{x^2 +\epsilon^2}$, may be integrated to produce $g_\epsilon (x)$. Since $g_\epsilon (x)\to \sgn(x)$ as $\epsilon\to 0$, and thus $\delta_\epsilon^{-1}(x)=g_\epsilon (x)$. However, one would like to have $f^{-1}(x)=\delta^{-1}(x)=\frac{1}{2}\sgn(x)$ at the limit point of $\epsilon=0$, as in~Ex. \ref{expl3}. The equivalence class identifications permit such connections to be established at $\epsilon=0$, without explicitly mapping LMZ sets under the function(al) equivalence classes of \secref{Lp functions}.

Suppose that in some open neighborhood $x\in U\subset M$ (with a given topological space $M$), we have $\mathbold{d}f(x)=g(x)\in M$ which defines a class of differentiable equivalences $:\mathbold{d}f\sim_x g$. We can repeat this process for any $x\in U^\prime\subset U$. Furthermore, on the spaces of vector fields and differential forms, the equivalence relation $\sim_x$ defines the stalk $\mc{F}_x:=\mathbold{d}f_x$ of the presheaf on open neighborhoods of $U$. For $M$ a differentiable manifold, this is the space of jets of order $k$, $J^k_x(M,M)$. This is particularly true for distributions on the tangent spaces (vector fields) of open subsets of $U\subset T_x M$ and codistributions on open subsets of $V\subset T(T_x M)=T^\ast_x M$. If it so happens that $f\in\mb{H}^0(\closure{\mbb{R}}^1)$, then $g\in \mb{H}^1(\closure{\mbb{R}}^1)$. It follows that $\mathbold{d}\ast g=\mathbold{d}\mathbold{\delta}f\in \mb{H}^0(\closure{\mbb{R}}^1)$ which is coexact and coclosed, and establishes another differential equivalence relation. Thus, we have a second order equivalence relation for some $h\sim_x \mathbold{\delta}g\sim_x \mathbold{\delta}\mathbold{d}f$, on all open subsets of $V$. If a sheaf is established, then we have a form of uniqueness given by the equivalence relation. We now give the formal definition.

\begin{defin}\label{germequiv}{:} Let $\alpha$ be a 0-form for which $\mathbold{d}\alpha=\tta$ is exact. In the category of differential forms, the equivalence class of differentiable maps $f,g\in U\subset M$, for an open neighborhood $U$ of a differentiable manifold $M$ at the point $x$, is the set of  germs, given by $\mathfrak{f}'_x:=[f']_x\sim_x [g]$. Denote the equivalence class $[f']$ on the space of differentiable forms by $f\to\alpha$ and $~'\to \mathbold{d}$, then this equivalence class defines a germ with primitive $\tta:~[\mathbold{d}\alpha]_x\sim_x[\tta]$ in the stalk $\mc{F}'_x$ of the presheaf $\mc{F}', ~\forall U^\prime\subset U\in M$. Similarly for vector fields, $\mathfrak{X}'_x$ denotes the stalk of the equivalence class of vector fields such that, for $X,Y\in\mathfrak{X}'$, then $[X']_x\sim_x [Y]$ for the presheaf $\mathfrak{X}$. If the equivalence relations hold at each $x\in U^\prime,~\forall U^\prime \subset U$, then $\mc{F}'$ is a sheaf. Then $\mathfrak{X}^{',p}$ and $\mc{F}^{',p}$ denotes the space of $p$-dimensional vector fields and the dual space of $p$-dimensional forms respectively.
\end{defin} 

Recently a diffeomorphism invariant full sheaf property was established in \cite[Def.17-Prop.19]{2016arXiv161106061N} utilizing Colombeau algebras. There, a similar set of identifications to those made in Def.~\ref{germequiv} are established generally for regularizible generalized functions. 

With the previous definition, we now consider $\tta\in\mc{F}'_x$ in Eq.~\eqr{schrodiffop2} and look for local harmonic forms defined $[\tta]_x\in \mathfrak{f}'_x\subset F^1(M)$.

\begin{prop}{:} Let $\omega\in\mathbold{H}^0(\closure{\mbb{R}}^1)$, the space of harmonic 0-forms on $\closure{\mbb{R}}^1$, and bounded on $\closure{\mbb{R}}^1$. The Hamiltonian (Eq.~\eqr{schrodiffop2}) admits non-trivial local solutions (on $\closure{\mbb{R}}^1)$, if and only if the 0-form $\tta$ of the potential (given by $\hat{V}(x)=\ast(\mathbold{d}\tta)$ is exact and closed with respect to some exact 1-form $[\mathbold{d}\alpha]_x$, such that $\ast[\mathbold{d}\alpha]_x =[\tta]_x$, and $[\tta]_x$ is coexact and coclosed. By non-trivial, we mean that $\tta\neq 0$ and $\tta\in\mathbold{H}^0$.\end{prop}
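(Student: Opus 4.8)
The plan is to prove both implications through the Hodge decomposition theorem on the compact manifold $\closure{\mbb{R}}^1\cong S^1$, reading every differential condition locally through the germ/sheaf equivalences of Definition~\ref{germequiv}. The starting observation is that $\omega\in\mathbold{H}^0$ is harmonic, so $\mathbold{\Delta}\omega=0$ and the eigenvalue problem coming from Eq.~\eqr{schrodiffop2} paired as in Eq.~\eqr{eigenH} collapses to the local condition $(\ast\mathbold{d}\tta)\,\omega=E\omega$. Since $\ast$ carries $F^1(\closure{\mbb{R}}^1)$ to $F^0(\closure{\mbb{R}}^1)$, the potential $\hat V=\ast\mathbold{d}\tta$ is genuinely a $0$-form, and the entire question reduces to whether $\hat V$ can be cohomologous to $\omega$ rather than orthogonal to it.

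For the forward implication, suppose a non-trivial bounded $\omega\in\mathbold{H}^0$ solves the equation. By the Hodge theorem I would decompose the $0$-form $\hat V=\ast\mathbold{d}\tta$ into exact, coexact, and harmonic pieces; only the harmonic piece pairs non-trivially against $\omega$ in Eq.~\eqr{eigenH}, while the exact and coexact pieces are orthogonal to $\mathbold{H}^0$ and decouple, exactly as in the obstruction described in \secref{diffgeo}. Non-triviality of $\omega$ therefore forces a non-zero harmonic component of $\ast\mathbold{d}\tta$, which is the statement $\tta\in\mathbold{H}^0$, $\tta\neq0$, and locally the requirement that $[\tta]_x$ be coexact and coclosed; transporting this back through the star isomorphism $\ast\colon\mathbold{H}^1\to\mathbold{H}^0$ and running the stalk construction of Definition~\ref{germequiv} in reverse produces the $0$-form primitive $\alpha$ with $\ast[\mathbold{d}\alpha]_x=[\tta]_x$, giving the exactness clause.

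For the reverse implication, I would assume $\ast[\mathbold{d}\alpha]_x=[\tta]_x$ with $\mathbold{d}\alpha$ exact and $[\tta]_x$ coexact and coclosed, and build the solution explicitly. Since $\mathbold{\delta}\alpha=0$ for a $0$-form, $\mathbold{\Delta}\alpha=\mathbold{\delta}\mathbold{d}\alpha$, and the one-dimensional star identities give $\ast\mathbold{d}\tta=\ast\mathbold{d}\ast\mathbold{d}\alpha=\pm\mathbold{\delta}\mathbold{d}\alpha$; combined with the coexact/coclosed hypothesis this places $\hat V$ in the same $0$-form class as $\omega$, so the germ equivalence $[\mathbold{d}\alpha]_x\sim_x[\tta]_x$ of Definition~\ref{germequiv} yields a non-trivial local harmonic representative solving Eq.~\eqr{schrodiffop2}. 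Checking that this representative is bounded on $\closure{\mbb{R}}^1$ and reduces to the $\tfrac{1}{2}\sgn$-type primitive at the point of support is then a routine comparison with Ex.~\ref{expl3}.

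The main obstacle is an apparent tension in the conclusion itself: on a compact manifold a non-zero form cannot be simultaneously harmonic and coexact, because the Hodge theorem makes $\mathbold{H}^0$ orthogonal to the coexact $0$-forms. The resolution, and the crux of the argument, is that these statements hold only locally, through the germ-and-sheaf equivalence classes of Definition~\ref{germequiv}: the coexact representative $\ast\mathbold{d}\tta$ and a harmonic representative are identified only as germs at the point of support, where the regularization and the limit $\epsilon\to0$ discussed just above Definition~\ref{germequiv} glue them together without mapping the underlying LMZ set. Making this gluing precise, so that the projection onto $\mathbold{H}^0$ survives the $\epsilon\to0$ limit and the distributional $\mathbold{d}\tta$ carries well-defined coexact and coclosed data, is where the real difficulty lies; fixing the Hodge-star signs in $\ast\mathbold{d}\ast\mathbold{d}\alpha=\pm\mathbold{\delta}\mathbold{d}\alpha$ consistently with the chosen left/right semicontinuous topology is a secondary bookkeeping check.
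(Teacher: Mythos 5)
Your reverse implication has a genuine gap, and it sits exactly where you yourself flag the ``real difficulty.'' You correctly reach the paper's intermediate identity $\ast\mathbold{d}\tta=\ast\mathbold{d}\ast\mathbold{d}\alpha=\mathbold{\delta}\mathbold{d}\alpha=\mathbold{\Delta}\alpha$ (using $\mathbold{\delta}\alpha=0$ for a $0$-form), but from there you simply assert that this ``places $\hat V$ in the same $0$-form class as $\omega$.'' That assertion is unsupported: $\mathbold{\Delta}\alpha$ is not harmonic in general, and if it were both harmonic and coexact it would vanish by the very Hodge orthogonality you cite in your last paragraph. The paper never claims $\hat{V}\in\mb{H}^0$. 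Instead it completes the argument with a computation you omit: expand the multiplicative action of the potential on $\omega$ by the product rule
\begin{equation*}
(\mathbold{\Delta}\alpha)\omega=\mathbold{\Delta}(\alpha\omega)-\alpha\mathbold{\Delta}\omega-2\mathbold{d}\alpha\,\mathbold{\delta}\omega-2\mathbold{\delta}\alpha\,\mathbold{d}\omega,
\end{equation*}
note that the last two terms vanish because $\alpha,\omega\in F^0$, and then use self-adjointness of $\mathbold{\Delta}$ to write $\braket{\mathbold{\Delta}(\alpha\omega),\omega}=\braket{\alpha\omega,\mathbold{\Delta}\omega}$, as in Eq.~\eqr{Lapalpha}. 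Since $\mathbold{\Delta}\omega=0$, the entire interaction pairing $\braket{(\ast\mathbold{d}[\tta])\omega,\omega}$ reduces to terms proportional to $\mathbold{\Delta}\omega$, so the eigenvalue equation closes inside $\mb{H}^0$ without ever requiring $\mathbold{\Delta}\alpha$ itself to be harmonic. This integration-by-parts mechanism is the crux of the proposition; deferring it as a ``gluing'' to be made precise later leaves the implication unproved.

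Your forward direction is essentially sound and close to the paper's, though more roundabout: the paper takes the hypothesis $[\tta]\in\mb{H}^0$ directly from the definition of non-triviality, reads off $\mathbold{d}[\tta]=\mathbold{\delta}[\tta]=0$, and invokes local (germwise) exactness to produce $\alpha$ with $\ast[\mathbold{d}\alpha]_x=[\tta]_x$, whereas you route the same conclusion through a Hodge decomposition of $\hat{V}$ and the pairing in Eq.~\eqr{eigenH}. Be careful there with one conflation: a non-zero harmonic component of $\ast\mathbold{d}\tta$ is not literally the statement $\tta\in\mb{H}^0$; the paper avoids this by working with $[\tta]_x$ as the germ of the primitive from Definition~\ref{germequiv} throughout. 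Also note that your appeal to the star isomorphism $\ast\colon\mb{H}^1\to\mb{H}^0$ only produces the primitive $\alpha$ locally (on $S^1$ a harmonic $1$-form is closed but not exact), which is consistent with the paper but worth stating explicitly since the whole proposition lives at the level of stalks.
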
 
\begin{proof}{:}{$(\Rightarrow)$}
The forward direction is trivial. Assume that $[\tta]\in\mb{H}^0$. Since $\omega$ is harmonic, then the scalar product $\braket{\hat{H}\omega, \omega}$ becomes $\braket{\ast(\mathbold{d}\tta)\omega, \omega}$. If $[\tta]\in\mb{H}^0$, then we necessarily have $\mathbold{d}[\tta]=\mathbold{\delta}[\tta]=0$. Therefore $\mathbold{d}[\tta]$ is closed and coclosed. This implies that there exists some 0-form $\alpha$, such that $\mathbold{d}\alpha\in F^1$ and $\ast\mathbold{d}\alpha=[\tta]\in F^0$. So $[\tta]$ is exact and trivially coexact, since for all $\beta\in F^0$, $\delta\beta=0$.

$(\Leftarrow)$
Let $[\tta]\sim_x\ast[\mathbold{d}\alpha]$, with $\mathbold{d}\alpha$ an exact 1-form. Then the inner product becomes
\beq\begin{aligned}
\braket{(\ast\mathbold{d}[\tta])\omega,\omega}&=\braket{(\ast\mathbold{d}\ast[\mathbold{d}]\alpha)\omega,\omega}\\
&=\braket{(\mathbold{\delta}[\mathbold{d}\alpha])\omega,\omega}\\
&=\braket{(\mathbold{\Delta}\alpha)\omega,\omega}
\end{aligned}\eeq
We may rewrite ($\mathbold{\Delta}\alpha)\omega$ in the last line above as
\beq\begin{aligned}
(\mathbold{\Delta}\alpha)\omega &=\mathbold{\Delta}(\alpha\omega)-\alpha\mathbold{\Delta}\omega-2\mathbold{d}\alpha\mathbold{\delta}\omega-2\mathbold{\delta}\alpha\mathbold{d}\omega.
\end{aligned}\eeq
The last two terms in the previous line vanish trivially by the fact that $\alpha,\omega\in F^0$, so $\mathbold{\delta}\alpha=0$ and similarly for the term with $\mathbold{\delta}\omega$. Further more, the terms are necessarily orthogonal to $\mb{H}^0$ by the Hodge decomposition theorem. Therefore, the last line above reduces to,
\beq\begin{aligned}\label{Lapalpha}
 \braket{(\mathbold{\Delta}\alpha)\omega,\omega}&=\braket{\mathbold{\Delta}(\alpha\omega),\omega}-\braket{\alpha\mathbold{\Delta}\omega,\omega}\\
 &=\braket{\alpha\omega,\mathbold{\Delta}\omega}-\braket{\alpha\mathbold{\Delta}\omega,\omega}.
\end{aligned}\eeq
The final line above follows because $\mathbold{\Delta}$ is self adjoint. Since $\omega$ is harmonic both terms above are in $\mb{H}^0$, and therefore we have a solution in $\mb{H}^0$ for the left hand side. 
\end{proof}

\begin{rmk} Let us comment on the above proposition. First, the proposition does not imply that $\tta\in\mb{H}^0(M)$, but rather that $\tta$ is equivalent to a derivative of an exact form $\alpha$, which is harmonic. Thus the above proposition first maps to $\tta$ to its primitive function (its $\mathfrak{f}'_x$ germ equivalence class), which is harmonic. Second, we note that at no point do require anything regarding the smoothness of $\alpha$, only that it be differentiable. The proof does not depend upon being able to use integration by parts. Therefore this includes distributional derivatives for which an equivalence in $\mathfrak{f}'$ can be established. Although we did not directly invert the linear function $\tta$, the semicontinuous Banach spaces here do allow for this option. However, we do need to be in a linear space where this option exists in order to justify the equivalence classes above.   
\end{rmk}
 
\begin{cor}{:}\label{hamdiffopequiv}
Let $\hat{H}$ be given as in Eq.~\eqr{schrodiffop2}, $\tta,\alpha\in F^0(\closure{\mbb{R}}^1)$ such that $[\tta]_x\sim_x\ast[\mathbold{d}\alpha]$ as in Prop.~\eqr{germequiv}, and either $\omega\in\mb{H}^0\cap C_0^\infty(\closure{\mbb{R}}^1)$ or $\braket{\alpha\omega,\omega}<\infty$. Then $\hat{H}$ reduces to 
\beq\begin{aligned}\label{redH}
\hat{H}&=\mathbold{\Delta}-\frac{1}{2}\alpha\mathbold{\Delta}\\
&=(1-\frac{1}{2}\alpha)\mathbold{\Delta}.
\end{aligned}\eeq
\end{cor}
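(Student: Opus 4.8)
The plan is to reduce the potential term of $\hat{H}$ in two stages: first to replace the singular $\ast\mathbold{d}\tta$ by a Laplacian acting on the primitive $\al$, and then to trade that Laplacian against $\mathbold{\Delta}$ acting on the state through the semicontinuous pairing. I would begin from $\hat{H}=\mathbold{\Delta}+\ast(\mathbold{d}\tta)$ as in Eq.~\eqr{schrodiffop2} and invoke the germ equivalence $[\tta]_x\sim_x\ast[\mathbold{d}\al]$ of Def.~\ref{germequiv}, together with the preceding proposition. On $\closure{\mbb{R}}^1$ this lets me write $\ast\mathbold{d}\tta=\ast\mathbold{d}\ast\mathbold{d}\al=\mathbold{\delta}\mathbold{d}\al$, and since $\al\in F^0$ forces $\mathbold{\delta}\al=0$ (so that $\mathbold{d}\mathbold{\delta}\al=0$), this is exactly $\mathbold{\Delta}\al$. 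Thus on $0$-forms the interaction Hamiltonian becomes the operator $\hat{H}=\mathbold{\Delta}+\mathbold{\Delta}\al$, the second summand understood as multiplication by the $0$-form $\mathbold{\Delta}\al$.

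Next I would recall the product identity derived in the proof of the preceding proposition, specialized to $0$-forms: $(\mathbold{\Delta}\al)\omega=\mathbold{\Delta}(\al\omega)-\al\mathbold{\Delta}\omega$, the two cross terms vanishing because $\mathbold{\delta}$ annihilates both $\al$ and $\omega$. Pairing against $\omega$ and using the self-adjointness of $\mathbold{\Delta}$ reproduces the proposition's relation $\braket{(\mathbold{\Delta}\al)\omega,\omega}=\braket{\al\omega,\mathbold{\Delta}\omega}-\braket{\al\mathbold{\Delta}\omega,\omega}$. The crucial step is then to evaluate the first pairing $\braket{\al\omega,\mathbold{\Delta}\omega}$ with the half-open Lebesgue--Stieltjes measure congruent to the chosen left/right semicontinuous topology. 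Exactly as in Ex.~\ref{expl3}, where a one-sided evaluation of a singular pairing returns half of its two-sided ($\sgn$) value, the semicontinuous pairing here captures precisely half of the symmetric contribution, so that $\braket{\al\omega,\mathbold{\Delta}\omega}=\tfrac{1}{2}\braket{\al\mathbold{\Delta}\omega,\omega}$. Substituting gives $\braket{(\mathbold{\Delta}\al)\omega,\omega}=-\tfrac{1}{2}\braket{\al\mathbold{\Delta}\omega,\omega}$, i.e. the interaction acts as $-\tfrac{1}{2}\al\mathbold{\Delta}$ on the relevant domain.

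Finally I would lift this quadratic-form identity to the operator identity $\hat{H}=\mathbold{\Delta}-\tfrac{1}{2}\al\mathbold{\Delta}=(1-\tfrac{1}{2}\al)\mathbold{\Delta}$ of Eq.~\eqr{redH}. The two stated hypotheses are what make this legitimate: $\omega\in\mb{H}^0\cap C_0^\infty(\closure{\mbb{R}}^1)$ supplies the smoothness and compact support needed to discard boundary terms when transferring $\mathbold{\Delta}$, while the weaker alternative $\braket{\al\omega,\omega}<\infty$ guarantees the form is finite and $\hat{H}$ bounded in operator norm (of Schatten class), so that the form identity determines the operator on its domain. The main obstacle is precisely the factor $\tfrac{1}{2}$: a naive two-sided application of self-adjointness would identify $\braket{\al\omega,\mathbold{\Delta}\omega}$ with $\braket{\al\mathbold{\Delta}\omega,\omega}$ and collapse the interaction to zero. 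Retaining the genuine asymmetric half-weight therefore rests entirely on the half-open measure and the left/right orthogonality of the semicontinuous spaces, so the computation must be carried out measure-topology by measure-topology rather than in the ordinary Lebesgue pairing.
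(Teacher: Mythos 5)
Your first two stages coincide with the paper's own proof: the germ equivalence converts $\ast\mathbold{d}\tta$ into $\mathbold{\Delta}\al$, and the product identity paired against $\omega$ reproduces Eq.~\eqr{Lapalpha}, $\braket{(\mathbold{\Delta}\al)\omega,\omega}=\braket{\al\omega,\mathbold{\Delta}\omega}-\braket{\al\mathbold{\Delta}\omega,\omega}$. The gap is in your crucial step, the halving identity $\braket{\al\omega,\mathbold{\Delta}\omega}=\tfrac{1}{2}\braket{\al\mathbold{\Delta}\omega,\omega}$, which you justify only by analogy with Ex.~\ref{expl3}. That example establishes no such principle; it is engineered to do the opposite. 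The reflection-symmetric definitions $\sgn_L(x)=H_L(x)-H_R(-x)$ and $\sgn_R(x)=H_R(x)-H_L(-x)$ are introduced precisely so that the one-sided pairings return the \emph{full} value $-2\phi(0)$; the halved value $-\phi(0)$ appears only in Eq.~\eqr{Lsgn1}, where it is flagged as the pathology produced by the non-symmetric combination $H_L(x)-H_L(-x)$, i.e., as the defect the semicontinuous formalism is built to eliminate. Your argument is also internally inconsistent: Eq.~\eqr{Lapalpha} is derived using the self-adjointness of $\mathbold{\Delta}$, and for a symmetric pairing $\braket{\al\omega,\mathbold{\Delta}\omega}$ and $\braket{\al\mathbold{\Delta}\omega,\omega}$ are the \emph{same} integral (commutativity of pointwise multiplication), so asserting they differ by a factor of $\tfrac{1}{2}$ amounts to suspending, one line later and with no stated criterion, the very structure you just used. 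Worse, combined with the paper's boundary-term computation $\braket{\mathbold{\Delta}(\al\omega),\omega}=\tfrac{1}{2}\braket{\mathbold{\Delta}(\al\omega),\omega}+\tfrac{1}{2}\braket{\al\omega,\mathbold{\Delta}\omega}=\tfrac{1}{2}\mathbold{\Delta}\left(\braket{\al\omega,\omega}\right)=0$, your identity would force $\braket{\al\mathbold{\Delta}\omega,\omega}=0$ as well, collapsing exactly the interaction you are trying to retain.

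The paper's actual route to the factor $\tfrac{1}{2}$ is entirely different. It shows $\braket{\mathbold{\Delta}(\al\omega),\omega}$ is a vanishing boundary term --- and this is where the two hypotheses really enter: compact support of $\omega$ kills the boundary contribution, or else finiteness of $\braket{\al\omega,\omega}$ makes $\mathbold{\Delta}$ of a real number vanish trivially (not, as you have it, a Schatten-class bound). The quadratic form then reduces to $-\braket{\al\mathbold{\Delta}\omega,\omega}$, with coefficient $-1$, not $-\tfrac{1}{2}$. The $\tfrac{1}{2}$ is imported afterwards from the passage to local canonical coordinates on the cotangent bundle, where $\ast[\mathbold{d}\al]\to\tfrac{1}{\sqrt{2}}\al p$; see Prop.~\ref{Hequiv} and Eq.~\eqr{locH2} in Sec.~\ref{Hamcotan}, which the paper explicitly cites as the origin of that factor. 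To repair your proof you would need either to actually prove a halving property of the semicontinuous pairing (which Ex.~\ref{expl3} does not supply), or to follow the paper and obtain the $\tfrac{1}{2}$ from the canonical-coordinate normalization rather than from the pairing.
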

\begin{proof}{:}
This is almost a trivial consequence of Prop.~\eqr{germequiv} and $\omega\in\mathbold{H^0}\cap C^\infty_0(\closure{\mbb{R}}^1)$. In Eq.~\eqr{Lapalpha}, $\braket{\mathbold{\Delta}(\alpha\omega),\omega}$ becomes a boundary term in the inner product by
\beq\begin{aligned}
\braket{\mathbold{\Delta}(\alpha\omega),\omega}&=\frac{1}{2}\braket{\mathbold{\Delta}(\alpha\omega),\omega}+\frac{1}{2}\braket{\alpha\omega,\mathbold{\Delta}\omega}\\
&=\frac{1}{2}\mathbold{\Delta}\left(\braket{\alpha\omega,\omega}\right).
\end{aligned}\eeq
Taking $\mathbold{\Delta}$ inside the integral produces the boundary term, which must vanish because $\omega$ vanishes. Alternatively, $\braket{\cdot,\cdot}\in\mbb{R}$, then $\mathbold{\Delta}\braket{\cdot,\cdot}=0$ trivially. Therefore, we may drop the condition that $\omega\in C^\infty_0(\closure{\mbb{R}}^1)$ provided that $\braket{\alpha\omega,\omega}<\infty$. The rest obviously follows. The factor of 1/2 in Eq.~\eqr{redH} arises from moving to the local generalized coordinates in the adjoint map of the canonical cotangent projection, $i.e.$ Hamilton's equations on $T^\ast_{q,p}(T^\ast (\closure{\mbb{R}}^1))$. We will discuss Hamilton's equations in the next section.
\end{proof}

\subsection{Hamilton's Equations on the Cotangent Bundle}\label{Hamcotan}
Recall the phase space of the Hamiltonian system $P(M)$ is a $2n$-dimensional symplectic manifold, such $(q_i,p^i),~1\leq i\leq n$ is the phase space coordinates are identified through the preimage of canonical projection on the cotangent bundle, assuming that $(dq^i,dp_i)$ is the local basis for $T^\ast_{q_i,p^i}M$. Thus, $\pi^{-1}(T\ast_{q_i}M)=T^\ast_{(q_i,p^i)}M$ such that for $\pi (T\ast_{(q_i,p^i)}M)=(q_i,0)$. Let $\beta$ be a $p$-form on $T^\ast_q M$, then $\pi^\ast \beta$ is the pull-back which defines the $p$-form on $T^\ast_{q,p}M$. This is just the standard fibration of $M$ over the cotangent bundle, which is naturally endowed with the fundamental symplectic 2-form structure $\Omega=\sum_i dp_i\wedge dq^i$. 
 
The Hamiltonian $\hat{H}$, is a 0-form on the cotangent bundle. Thus $\mathbold{d}\hat{H}$ is identified as the Poincare\'{e} 1-form on the cotangent bundle. In the generalized local coordinates $(q_i,p^i)$, the Hamiltonian in natural units given by Eq.~\eqr{schrodeq} is
\beq\begin{aligned}\label{localH}
 H_{(q,p)}=\frac{1}{2}p^2+V(q).
\end{aligned}\eeq
Let us define two Hamiltonians $H_1, H_2$ as
\begin{align}\label{locH1}
H_1&=\frac{1}{2}p^2+\frac{\partial\tta(q)}{\partial q},\\
\shortintertext{and,}
\label{locH2}
H_2&=\left(1-\alpha(q)\right)\frac{1}{2}p^2,
\end{align}
where we identify $[\tta]_q\sim_q \ast[\mathbold{d}\alpha]$ as in \secref{diffgeo}. This implies $\ast\mathbold{d}\tta\to\frac{\partial\tta}{\partial q}$ in Eq.~\eqr{locH1}. Eq~\eqr{locH2} is obtained through the defined equivalence class and the results found in Corr.~\ref{hamdiffopequiv}, then $\ast[\mathbold{d}\alpha]\to\frac{1}{\sqrt{2}}\alpha p$ in the local coordinates. Note that this last statement is the origin of the factor of 1/2 which appears in Corr.~\ref{hamdiffopequiv}. Our goal here is to show that Eqs.~\eqr{locH1} and \eqr{locH2} produce equivalent sets of Hamilton's equations, which we will now show.

\begin{prop}{:}\label{Hequiv} The Hamiltonian given by \eqr{locH2} defines a symplectomorphism of Eq.~\eqr{locH1}, which is a first integral along the flow generated by the vector field $X=-\Omega^{-1}\mathbold{d}H_2$, where $\Omega=dp\wedge dq$ is the symplectic 2-form on the phase space $P(\closure{\mbb{R}}^1)=T^\ast_{q,p}\closure{\mbb{R}}^2$.
\end{prop}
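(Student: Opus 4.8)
The plan is to verify directly that the vector field $X=-\Omega^{-1}\mathbold{d}H_2$ on $P(\closure{\mbb{R}}^1)=T^\ast_{q,p}\closure{\mbb{R}}^2$ generates Hamilton's equations for $H_2$, to show by the general identity $\mathcal{L}_X\Omega=\mathbold{d}(\iota_X\Omega)+\iota_X\mathbold{d}\Omega=\mathbold{d}\mathbold{d}H_2=0$ that its flow is a one-parameter family of symplectomorphisms, and then to use the germ equivalence of \secref{diffgeo} to match this flow with the dynamics of $H_1$. First I would set up the musical isomorphism induced by $\Omega=dp\wedge dq$: contracting $X=-\Omega^{-1}\mathbold{d}H_2$ with $\Omega$ returns $\mathbold{d}H_2$ up to the sign fixed by the stated convention, and reading off components in the canonical coframe $(dq,dp)$ gives the integral-curve equations $\dot q=\partial_p H_2=(1-\alpha(q))p$ and $\dot p=-\partial_q H_2=\tfrac12\alpha'(q)p^2$. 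These integral curves are the ``first integral'' of the flow. Carrying out the same contraction for $H_1$ of \eqr{locH1} gives $\dot q=p$ and $\dot p=-\partial_q^2\tta(q)$.

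Next I would invoke the germ equivalence $[\tta]_q\sim_q\ast[\mathbold{d}\alpha]$ of Def.~\ref{germequiv} together with Corollary~\ref{hamdiffopequiv}: in local coordinates $\ast\mathbold{d}\tta\to\partial_q\tta$ while $\ast[\mathbold{d}\alpha]\to\tfrac{1}{\sqrt{2}}\alpha p$, so the operator reduction $\hat H=(1-\tfrac12\alpha)\mathbold{\Delta}$ of \eqr{redH} is precisely the statement that the $q$-only potential term of $H_1$ may be reabsorbed into the momentum-dependent rescaling of the kinetic term in $H_2$. To exhibit this reabsorption as a genuine symplectomorphism I would take the cotangent lift of the point transformation $Q(q)=\int^q\! dq'/\sqrt{1-\alpha(q')}$, namely $\Phi:(q,p)\mapsto(Q,P)$ with $P=\sqrt{1-\alpha(q)}\,p$; the cotangent lift is canonical by construction, so $\Phi^\ast\Omega=dP\wedge dQ=\Omega$, and under it $H_2=(1-\alpha)\tfrac12 p^2=\tfrac12 P^2$. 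Thus the flow of $X$ is symplectomorphic to the standard flow, and the equations of motion of $H_2$ agree, germ-by-germ on every $U'\subset U$ where $\sim_x$ holds, with those of $H_1$, which is the content of ``$H_2$ defines a symplectomorphism of Eq.~\eqr{locH1}.'' Energy conservation $\{H_2,H_2\}=0$ is immediate and identifies $H_2$ as a first integral along its own flow.

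The main obstacle I anticipate is the germ-matching step itself, since $\partial_q\tta(q)$ in $H_1$ is a function of $q$ alone whereas its equivalence-class partner $\tfrac{1}{\sqrt2}\alpha p$ is momentum-dependent; the identification is therefore legitimate only as an equality of stalks $\mc{F}'_x$ in the presheaf of Def.~\ref{germequiv}, not as an equality of functions on all of phase space. I would accordingly be careful to match the two vector fields as representatives of the same germ at each $x$, using the on-shell relation $p=\dot q$ supplied by the first of Hamilton's equations to convert between the two forms, and to track the $\tfrac12$ and $\tfrac{1}{\sqrt2}$ normalizations inherited from $\hat P=-i\tfrac{\hbar}{\sqrt{2m}}\partial_x$ so that the coefficient $(1-\alpha)$ in $H_2$ is reproduced exactly rather than with a spurious factor; this bookkeeping, together with the verification that the reparametrization of time implicit in $\Phi$ does not spoil the symplectic form, is where the real work lies, the remaining steps being routine applications of $\mathcal{L}_X\Omega=0$ and Corollary~\ref{hamdiffopequiv}.
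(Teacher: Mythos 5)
Your proposal is correct in substance and overlaps with the paper's proof in its core computations, but it establishes canonicity by a genuinely different device. Like the paper, you obtain Hamilton's equations for \eqr{locH1} and \eqr{locH2} by contracting with $\Omega$, and you get flow-invariance from $\mathbold{d}\mathbold{d}H_2=0$ (the paper does this in coordinates in Eq.~\eqr{ddH2}; your appeal to Cartan's formula $\mathcal{L}_X\Omega=\mathbold{d}\iota_X\Omega+\iota_X\mathbold{d}\Omega$ is the same content). Where you diverge: the paper never leaves the pair $(H_1,H_2)$ --- it forms the difference of the Poincar\'{e} 1-forms (Eq.~\eqr{Poincare1s}), argues it vanishes, reads this as $\left\{H_1,H_2\right\}_{P.B.}=0$, and concludes that $H_2$ is a first integral along $X=-\Omega^{-1}\mathbold{d}H_2$; note that this vanishing itself silently uses the on-shell identification $(1-\alpha)p'=p$, so the paper's computation carries exactly the caveat you flagged as your ``main obstacle,'' handled there by fiat through the map $\phi$ and the germ class of Def.~\ref{germequiv}. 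You instead construct an explicit auxiliary symplectomorphism, the cotangent lift $\Phi$ of $Q(q)=\int^q dq'/\sqrt{1-\alpha(q')}$ with $P=\sqrt{1-\alpha}\,p$, which indeed satisfies $dP\wedge dQ=dp\wedge dq$ and trivializes $H_2$ to $\tfrac{1}{2}P^2$, and then match against $H_1$ germ-by-germ using $p=\dot q$. What each buys: your lift makes the word ``symplectomorphism'' concrete and checkable, and reduces the $H_2$-dynamics to a free system, but it requires $1-\alpha(q)>0$ and differentiability of $\alpha$ --- hypotheses the paper never needs to state, and which in the intended application ($\alpha\sim\sgn_L$) hold only in the equivalence-class sense of \secref{semi}; the paper's route stays directly on the claim relating $H_1$ to $H_2$ and meshes with the sheaf/germ machinery reused afterwards in \secref{hamfunc}. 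Two small points: your signs $\dot p=+\tfrac{1}{2}\alpha'(q)p^2$ are the internally consistent ones (the paper's Eqs.~\eqr{locH2eqom} and \eqr{Xfield} carry harmless sign slips), and the integral curves are not themselves the ``first integral'' --- the first integral is $H_2$, as you correctly say at the end via $\left\{H_2,H_2\right\}_{P.B.}=0$.
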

\begin{proof}{:}
We need to show that a map $\phi:P\to P: H_1\to H_2$ is canonical. A canonical transformation preserves the Poisson brackets, and therefore is a symplectomorphism on $P$. If $\mathbold{d}H_2$ is closed along the vector field $X$, then $H_2$ is a first integral of $X$.

We begin by finding the differentials associated to each Hamiltonian, given by $\mathbold{d}H=\frac{\partial H}{\partial q}dq+\frac{\partial H}{\partial p}dp$, and the corresponding equations of motion. Thus
\begin{align}
\mathbold{d}H_1=\frac{\partial^2\tta}{\partial q^2}dq+pdp,
\end{align}
which produces the equations of motion
\beq\begin{aligned}\label{locH1eqom}
\frac{dq}{dt}&=\frac{\partial H_1}{\partial p},~\quad   &\frac{dp}{dt}&=-\frac{\partial H_1}{\partial q}\\
&=p		&~		\quad					&=-\frac{\partial^2}{\partial q^2}\cdot\tta\\
\end{aligned}\eeq
Similarly for $H_2$, we have
\begin{align}
\mathbold{d}H_2=-\frac{1}{2}\frac{\partial\alpha}{\partial q}p^2dq+(1-\alpha)pdp,
\end{align}
which yields the equations of motion
\beq\begin{aligned}\label{locH2eqom}
\frac{dq}{dt}&=(1-\alpha)p,~~~\quad 		&\frac{dp}{dt}&=-\frac{1}{2}\frac{\partial \alpha}{\partial q}p^2
\end{aligned}\eeq
Then $\phi:P\to P: q=q',~p\to p'=p:\frac{\partial}{\partial q}[\tta(q)]=\frac{1}{\sqrt{2}}p[\tta(q)]\to\frac{1}{2}\alpha(q)p^2$, is the corresponding map $\phi:H_1\to H_2$. 

It is true locally that the difference between the Poincar\'{e} 1-forms of $H_1$ and $H_2$ is a canonical transformation if the corresponding difference is exact. Let $\omega dt$ be the difference between the Poincar\'{e} 1-forms obtained from $H_1$ and $H_2$ respectively. Thus we have the total time differential as,
\beq\begin{aligned}\label{Poincare1s}
\omega dt&=p\frac{dq}{dt}dt-p'\frac{dq'}{dt}dt\\
	&=\frac{\partial H_1}{\partial p}\frac{dq}{dt}-\frac{\partial H_2}{\partial p}\frac{dq'}{dt}\\
	&=p^2dt-(1-\alpha)^2 p'^2dt\\
	&=p^2 dt-p^2 dt\\
	&=0.
\end{aligned}\eeq
Therefore the transformation is canonical, and also closed. The above is equivalent to the Poisson brackets $\left\{H_1,H_2\right\}_{P.B.}=0$. Since the Poisson brackets are equal to zero, this implies that $H_2$ is a constant (i.e a first integral) along some locally Hamiltonian vector flow, $X$.

We determine $X$ from,
\beq\begin{aligned}\label{Xfield}
X&=-\Omega^{-1}\mathbold{d}H_2\\
  &=-(1-\alpha)p\frac{\partial}{\partial q}+\frac{1}{2}\frac{\partial \alpha}{\partial q}p^2\frac{\partial}{\partial p}.
\end{aligned}\eeq
It follows that 
\beq\begin{aligned}\label{ddH2}
\mathbold{d}i_X\Omega&=\mathbold{dd}H_2\\
					&=\mathbold{d}\left((1-\alpha)pdp-\frac{1}{2}\frac{\partial\alpha}{\partial q}p^2dq\right)\\
					&=-\left(\frac{\partial \alpha }{\partial q}\right)pdq\wedge dp-\left(\frac{\partial\alpha}{\partial q}\right)pdp\wedge dq\\
					&=-\left(\frac{\partial \alpha }{\partial q}\right)pdq\wedge dp+\left(\frac{\partial\alpha}{\partial q}\right)pdq\wedge dp\\
					&=0
\end{aligned}\eeq
Therefore the 1-form $\mathbold{d}H_2$ is closed. This is equivalent to the Lie derivative $L_X\mathbold{d}H_2=0$, which implies energy conservation. Therefore $H_2$ is a first integral along the locally Hamiltonian vector field $X=-\Omega^{-1}\mathbold{d}H_2$.
\end{proof}

Let us discuss the results of the previous two sections in a bit more detail. Clearly these results only apply locally, or ultra-locally. An important implicit assumption is that the equivalence classes exist and admit identifications of $[\tta]_q\sim_q \ast\mathbold{d}\alpha$, which acts as identification of a functional with the derivative of its primitive functional. This is precarious especially with respect to singular distributions. By construction, the results have been derived from the pull-back of some mapping $\phi^\ast$ to the cotangent bundle, which we can always make well defined locally. The equivalence class simply defines an identification between principal fiber in tangent space with the canonical projection of its lift to the cohomology class representatives in the cotangent bundle, fibrated over each point in the base\cite{faddeev199540}.  

Ideally, we would like to push-forward to the tangent bundle, or the base space by $\phi_\ast \hat{H}=\hat{H}\circ\phi^{-1}$. $\hat{H}$ is a 0-form by definition. Therefore we must have $\phi^{-1}$ exist. The mapping $\phi$ implicitly assumes that we have invertible transformations $[\tta]\xleftrightarrow[\phi^{-1}]{\phi} \ast[\mathbold{d}\alpha]$. The implicit assumption of the existence of the inverse restricts this mapping to (sub)spaces on which they are defined. However in the last few sections, we spoke generally of $\tta$, where the potential was defined by $\hat{V}(q)=\ast\mathbold{d}\tta$. Thus, if $[\tta]$ is a globally defined smooth function without singularities, then $\phi$ is a globally defined and invertible map. The map $\phi$ given by Prop.~\ref{Hequiv} is a fiber homomorphism on the cotangent bundle, by $\phi^\ast:T^\ast_{q,p}(\mc{L}^\ast(\closure{\mbb{R}}^2),\Omega_{(\tta(q),p)})\to T^\ast(\mc{L}^\ast(\closure{\mbb{R}}^2)),\Omega_{([\tta(q)],p)}\Large|_U)$, for some $U\subset T^\ast\mc{L}(\closure{\mbb{R}}^1)$, the space of linear functionals. In particular, $\phi$ establishes a covariant connection on the space of jets as in \cite{Vershik1988}. Moreover, the fiber homomorphism maintains the unique point $q\in \closure{\mbb{R}}^1$ identification in the base space. In this sense, $\phi$ is involutive.

However, if $\hat{V}(q)=\delta'(q)$ as in Eq.~\eqr{hamdelta2}, then this is not so. We must restrict $\phi$ to spaces where we can establish the equivalence relation of the Dirac delta with the derivative of its primitive. We saw that this is possible locally and uniquely in \secref{semi}. \textit{De Rham's} theorem applies locally, and results in a Pfaffian solution on a foliated submanifold of the phase space. In particular, $p$-forms are the spaces of linear functionals, which form a module over the cotangent bundle. The only derivations that map $[F^p]\to [F^p]$ on a finite dimensional, $C^{k+1}$ manifold is zero, for $1\leq k<\infty$(Corr. 4.2.39,\cite{von1981differential}). Therefore, the defined equivalence class is non-trivial only for $C^\infty$ maps over manifolds, which is precisely the space of distributions. Therefore, the established equivalence could only make sense if it relates distributions. It then follows that we have $\phi: T^\ast_{q,p}(\mc{L}^\ast(C^\infty_2),\Omega_{(\tta(q),p)})\to T^\ast_{q,p}(\mc{L}^\ast(C^\infty_{2,L,R}),\Omega_{([\tta(q)],p)})$. 

Finally we remark that as a consequence of preserving the Poisson brackets, the map $\phi$ given in Prop.~\ref{Hequiv} defines a "Lie algebra" homomorphism on the phase space $(P(\closure{\mbb{R}}^1),\Omega)$, such that for $\omega\in \mb{H}^0$, then
$\mb{H}^0(P)\to$~diff$(P,\Omega);~~\omega\to X_\omega,~~\mathbold{d}\omega = i_{X_\omega}\Omega$, with kernel the constant functions on each maximal connected component. This essentially makes a claim regarding an "algebra" over the space of functionals (distributions), which is generally difficult to define consistently. At the moment, we do not speculate on the algebraic implications of the above. As the mapping ($\phi$) could be seen as an attempt to define an indefinite integral for distributions (though we regard the mapping as a nuanced, but distinct process), and leave those investigations for future work.

\subsection{The Hamiltonian functional equation}\label{hamfunc}
Let the ket state be an unspecified wave function represented by $\ket{\psi}$. We assume \textit{a priori}, that it is defined over a compatible domain, which remains to be determined. The configuration space (position) $x$, is continuously parametrized by an independent time parameter $t$, ensuring that the energy is a constant of motion with respect to time ($\frac{dE}{dt}=0$). Therefore, we implicitly define the wave function in Dirac's notation, as the position $x$ at time $t$, such that $\ket{\psi_t}\sim\ket{x_t}$. We begin with infinitesimal time shifts of the wave function in the Heisenberg picture. The state $\psi'$ at time $t+\delta t$ (an infinitesimal time shift) is obtained from the state $\psi$ at time $t$ by the perturbative expansion~\cite{ramondfield}
\begin{align}\label{wftdt}
\ket{\psi_{t+\delta t}}\approx \ket{\psi_t}+\frac{i}{\hbar}\delta t \hat{H}\ket{\psi_t}+\mathcal{O}(\delta t)^2.
\end{align}
This implies that the transition amplitude is given by
\begin{align}\label{amp1}
\braket{\psi'_{t+\delta t}| \psi_t}&=\braket{\psi'_t|\psi_t}-\frac{i}{\hbar}\bra{\psi'_{t}}\hat{H}\ket{\psi_t}\delta t+\mathcal{O}(\delta t)^2,
\end{align}
where $\bra{\psi'_t}=\bra{\psi(x')_t}$.

The configuration states of the system at a time $t$, must obey the relations
\begin{align}\label{psi}
\psi&\sim\ket{\psi_t}\\\label{psi*}
\psi^\ast&\sim\bra{\psi_t}\\\label{xop}
\hat{x}\ket{\psi_t}&=x\ket{\psi_t}\\\label{deltanorm}
\braket{\psi_t | \psi'_t}&=\delta(x-x')\\\label{compset}
\int^\infty_{-\infty}\ket{\psi_t}\bra{\psi_t}&=1.
\end{align}
Eqs.~\eqr{psi} and~\eqr{psi*}, are identifications of the particle-state correspondence, Eq.~\eqr{deltanorm} defines the orthonormal Fourier basis with the Dirac-$\delta$ normalization, and Eq.~\eqr{compset} is the completeness relation.

In what follows, we will drop the explicit $\mathcal{O}(\delta t)^2$ term, and tacitly assume it remains present. In terms of a continuous linear functional, the bra-kets must contain information about the measure (space), and must belong to some linear vector space. Therefore Eq.~\eqr{amp1} has the interpretation as a continuous linear functional with measure $\mu\left(\braket{\psi_{t+\delta t}|\psi_t}\right)=1$, of the form
\beq\label{func-amp1}
\mu\left(\braket{\psi'_{t+\delta t}|\psi_t}\right)=\int^\infty_{-\infty}\left\lbrace\psi'^\ast \psi-\frac{i}{\hbar}\delta t~\psi'^\ast \left(\hat{H}\psi\right)\right\rbrace_t d\mu=1
\eeq

We now consider the linear functional equation (Eq.~\eqr{func-amp1}) defined over the quotient measure spaces $X_{L,R}$. We will explicitly work with the left continuous quotient space $X_L$ and note that the results will analogously apply in $X_R$. We may then consider the linear functional to be defined by
\beq\label{func-amp2}
\braket{\psi'_{t+\delta t}|\psi_t}=\int^\infty_{-\infty}\left\lbrace\psi'^\ast \psi-\frac{i}{\hbar}\delta t~\psi'^\ast \left(\hat{H}\psi\right)\right\rbrace_t d\mu_L
\eeq

Inserting the Hamiltonian operator from~\eqr{schrodeq} into the transition amplitude~\eqr{amp1} and keeping linear terms in $\delta t$ yields
\begin{align}\label{amp2}
\braket{\psi'_{t+\delta t}| \psi_t}&=\braket{\psi'_t|\psi_t}-\frac{i}{\hbar}\delta t \bra{\psi'_t} -\frac{\hbar^2}{2m}\dfrac{d^2}{dx^2} +\alpha\delta '(x) \ket{\psi_t}+\ldots
\end{align}

For the moment, we work with the second term on the R.H.S. of~\eqr{amp2}. We wish to have in the space of test functions for this linear functional to cover all of $\mbb{R}$ or rather all the measure space $X_L$. We know the measure of $X_L$ is continuous with respect to the Lebesgue measure, $\lambda$. From~\ref{expl3} we see that we if we assume $\phi\in\mc{S}$, then we have the weak equivalences
\beq\label{sgn-delt-dist}
\braket{\delta'(x),\phi(x)}=-\braket{\delta(x),\phi'(x)}=-\frac{1}{2}\braket{\sgn'(x),\phi'(x)}=\frac{1}{2}\braket{\sgn(x),\phi''(x)}=-\phi'(0),
\eeq
up to negligible terms involving powers of $x$ multiplying $\delta(x)$\footnote{The terms of the form $x\delta(x)$ are discarded. These terms are either zero, or are orthogonal to the harmonic solution space $\mb{H}^0$.}. We furthermore make the assumptions that $\ket{\psi}$ is self-dual (i.e. $\bra{\psi}^\dagger=\ket{\psi}$). We use the above weak equivalence to make the identification of $\delta\sim_x\ast[\mathbold{d}\sgn_L]$ as in \secref{Hamcotan}. It is interesting to note that \textit{if} indefinite integrals of distributions were indeed defined, the same result could be obtained using integration by parts two times on second term on the R.H.S. of~\eqr{amp2}. 

\begin{rmk}\label{distalgebra}
It is well known that L-S measures do allow us to write the integral of $\delta_x$ similar to $\int_{\mbb{R}}f(x)\delta\left\{dx\right\}=\int_{-\infty}^{x}{} f(y)dH(y)$, for $f\in\mc{S}$ and $H(y)$, the Heaviside distribution. In fact Talvila \cite{Talvila2009,TalvilaLp} discusses Banach spaces of integrable distributions, where the above is defined uniquely. The issue which arises as that such spaces are not, in general, separable. However, this is not an issue for the present case. Throughout our derivation above and below, we assume that the spatial variable $x$ really represents some interval: $x\in(\cdot, \cdot]~or~x\in[\cdot,\cdot)$ of $\closure{\mbb{R}}^1$, and therefore admits a countable basis for the Hilbert space. The Banach spaces of integrable distributions are separable under such circumstances. However, this is no longer true once we take the \textit{"continuum limit"}. This is often the standard approach by physicists, but only after completing the \textit{operational calculi} steps. Thus, in theory, we could employ such methods as integration by parts and maintain the structure of a "functional" Hilbert space. It is likely that we would even have some notion of a non-commuting Banach algebra similar to the family of disentangling algebras $\left\{\mc{A}_t\right\}_{t>0}$ of \cite[Ch. 18]{johnlap}. However, we will leave such discussions for future work.
\end{rmk}

The $\ket{\psi}$ has an expansion in terms of some Schauder basis, such that $\ket{\psi}=\Sigma_n \ket{\psi_n}<\infty$. Therefore we can say that the unbounded differential operators and the Dirac-$\delta$ are weakly bounded for wave functions which belong to a compatible function space. The calculation is sketched as follows,  
\beq\label{amp2Htrm}
\bra{\psi'_{t}}\hat{H}\ket{\psi_t}&= \left[-\frac{\hbar^2}{2m}\frac{d^2}{dx^2} +\alpha\delta '(x)\right] \braket{\psi'_t |\psi_t},
\eeq

We have left continuity and apply the results of Prop.~\ref{Hequiv} to obtain an operator similar to~\cite{Branko1995}, which is just a linear transformation on the functional space\footnote{Theorem~\ref{muLR121BV} ensures that this map is well defined, as it is locally compact and Hausdorff on $(-\infty,\infty]$.}, 
\beq\label{Hpiece}
\bra{\psi'_{t}}\hat{H}\ket{\psi_t}&=\int^\infty_{-\infty}d\mu_L (x)\left[ -\frac{\hbar^2}{2m} +\frac{\alpha}{4}\sgn_L (x)\right]\frac{d^2}{dx^2}\braket{\psi'_t|\psi_t}.
\eeq

In order to reduce the accumulation of constants in~\eqr{Hpiece}, we relabel the constant terms with the definition $a\equiv\frac{\hbar}{\sqrt{2m}}$, and write the functional equation in the less cluttered form
\begin{align}\label{Hpiece2}
\bra{\psi'_{t}}\hat{H}\ket{\psi_t}&=\int^\infty_{-\infty}d\mu_L (x)\left[ -\left(a^2-\frac{\alpha}{4}\sgn_L(x)\right)\frac{d^2}{dx^2}\right]\braket{\psi'_t|\psi_t}\\
&=\int^\infty_{-\infty}d\mu_L (x)\left[a^2-\frac{\alpha}{4}\sgn_L(x)\right]\left(i\frac{d}{dx}\right)^2\braket{\psi'_t|\psi_t},
\end{align}
where we have adsorbed the minus sign by restoring the $i$ in the differential operators to make them Hermitian. 

We may now write the transition amplitude of~\eqr{amp2} to $1^{st}$ order in $\delta t$ as
\begin{align}\label{transamp}
\braket{\psi'_{t+\delta t} |\psi_t}&=\int^\infty_{-\infty}d\mu_L(x)\left[1-\frac{i}{\hbar}\delta t \hat{H}(x)\right]\braket{\psi'_t|\psi_t}+\mathcal{O}(\delta t)^2,\\
\shortintertext{where}\label{Ht}
\hat{H}(x)&=\left(a^2-\frac{\alpha}{4}\sgn_L(x)\right)\left(i\frac{d}{dx}\right)^2.
\end{align}
Eq.~\eqr{Ht} then represents the transformed connection on the fibers of the cotangent bundle. It is worth noting that the Fourier transform of Eq.~\eqr{transamp} above, (and more generally Eq.~\eqr{redH}) is very similar, and seemingly analogous to the difference equation results of \cite[Eq.~4.4]{TakFaddSpect2015} (with understandably different boundary conditions). Another notable similarity of the above result, is to the Hartee equation for infinitely many particles, where the $L^2$ well-posedness of which was discussed in terms of Strichartz estimates by \cite{LewinSabinHartee2015}.  

We close this section by noting that we may find the Lagrangian density function from Eq.~\eqr{Ht}, which defines an isomorphism of the fibers from the cotangent bundle to the tangent bundle. We recall the normalization condition of Eq.~\eqr{deltanorm} and interpret the $(x'-x)$ factor as a velocity by writing it as
\begin{equation}
\begin{aligned}\label{xdot}
x'-x&=\dfrac{dx}{dt}\delta t\\
&=\dot{x}\delta t
\end{aligned}\end{equation}
to first order in $\delta t$. By including a factor of $\hbar$, we may also interpret the differential operator as the momentum operator $\hat{P}$ such that
\beq\label{Phat}
\hat{P}=i\hbar\frac{d}{dx},
\eeq
and obeys the eigenvalue equation
\beq\label{Peigen}
\hat{P}\ket{\psi}=p\ket{\psi}
\eeq
with eigenvalue $p$. Equations \eqr{transamp} and~\eqr{Ht} are then a Legendre transformation on the linear functional equation. Together they yield, 
\begin{align}\label{T1}
\braket{\psi'_{t+\delta t} |\psi_t}&=\int^\infty_{-\infty}d\mu_L(x)\left\{\frac{i}{\hbar}\int^{t_f}_{t_i}\delta t~\mscr{L}\right\}+\mathcal{O}(\delta t)^2,\\
\shortintertext{where the Lagrangian density is}\label{expT}
\mscr{L}&=\dot{x}p-\left(-\frac{\hbar^2}{2m} + \frac{\alpha}{4}\sgn_L(x)\right)\cdot p^2.
\end{align}
Obviously, analogous results are obtained for the measure space $X_R$. 

Eq.~\eqr{expT} may be exponentiated, and inserted into the free Feynman functional integral,
\beq\label{FeynInt}
\braket{\psi_{t+\delta t}|\psi_t}=\int{\mc{D}x\mc{D}p}\exp^{\frac{i}{\hbar}\int{\delta t}\mscr{L}},
\eeq
where $\mc{D}x$ is the Feynman measure. The full path integral may be evaluated by first integrating over the momentum. Then analytically continue by $t\to i\tau$, which compactifies $\tau$ on $S^1$, and produces the convergent Gaussian integral.

\subsection{The domain (test function space) of the functional equation}\label{domwf}
 From Eqs.~\eqr{transamp} and~\eqr{Ht}, the topological measure space $X_L$, from which we demanded topological continuity in our solution space. Since a discontinuous function cannot be in $\mc{D}_\delta$, this leaves us out of the space of $L^p(\closure{\mbb{R}})$ functions. However as we have seen above, the semicontinuous quotient spaces allows the partial  embedding of measure extendable $L^p$ functions into $C_{L,R}(\closure{\mbb{R}})$. This is an artifact of the Lebesgue-Stieltjes measure when continuity is restricted to semicontinuity. There are a few more interesting properties of this which we will comment on shortly. However with respect to $X_{L,R}$, the Hamiltonian operator in Eq.~\eqr{Ht} is topologically continuous. Thus in the case of Lebesgue-Stieltjes measures, functions which are topologically continuous with respect to $X_{L,R}$, may now also be in $\mc{D}_\delta$. It follows that we may define a Sobolev space which has some $L^p$-space functions, but are continuous with respect to $X_{L,R}$. 
 
With Lebesgue-Stieltjes measures $\mu_{L,R}$, $\mc{D}_\delta$ includes the space of $C_{L,R}$ functions functions. In terms of the Hamiltonian operator of Eq.~\eqr{Ht}, $\mc{D}_{\hat{H}}=\{f|f''\in C_{L,R}(X_{L,R})\}:=C^2_{L,R}(X_{L,R})$, or the class of $\mu_{L,R}$-extendable $L^p$ functions which are also $\mu_{L,R}$-measurable (integrable) on all compact subsets of $X_{L,R}$. The dual space of $C^2_{L,R}(X_{L,R})$ is the spaces of left/right continuous measures of bounded variation, which is the completion of $C_{L,R}$ with respect to the $\|\cdot\|_{\mc{BV}}$-norm. We note that the Hamiltonian operator (Eq.~\eqr{Ht}) with the $\|\cdot\|_{op}$-norm is not only bounded (weakly) by the $\mc{BV}_{L,R}$-norm completion of $L^1_{L,R}(X_{L,R})\supset L^p_{L,R}(X_{L,R})$ for $1\leq p\leq\infty$, but in fact they are equivalent. Let $\phi\in C^2_{L,R}(X_{L,R})$ such that $\|\phi''\|_{L^1_{L,R}}=1$ and $\|V(\phi'')\|_{L^1_{L,R}}=\left|a\right|\|\phi\|_{L^1_{L,R}}$, for some $a\in\mbb{R}$ and $\left|a\right|<\infty$. Then
\beq\label{Hbound}
\|\hat{H}\phi\|_{L^1_{L,R}}&\leq\left|\sup\left|\phi''\right|+\sup\left|V(\phi'')\right|\right|\\
&\leq\sup\left|\|\phi''\|+|a|\|\phi''\|\right|\\
&\leq\left|\phi''\right|+|a|\left|\phi''\right|\\
&=\left(1+|a|\right)\left|\phi''\right|\\
&=\left(1+|a|\right)\cdot 1\\
&=\|\phi''\|_{\mc{BV}_{L,R}}.
\eeq
Then we have that the operator norm of the Hamiltonian is given by
\beq\label{Hop-norm}
\|\hat{H}\|_{op}&=\frac{\|\hat{H}\phi\|_{L^1_{L,R}}}{\|\phi''\|_{\mc{BV}_{L,R}}}\\
&=\frac{\left(1+|a|\right)\left|\phi''\right|}{\|\phi''\|_{\mc{BV}_{L,R}}}\\
&=\frac{\left(1+|a|\right)\cdot 1}{\left(1+|a|\right)}\\
&=1.
\eeq
 Moreover, if $\phi$ is Lipschitz continuous such that for a real number $M\geq 0$ with $\frac{\|\phi^{(n)}\|}{\|\phi\|}\leq M$ for all $n>0$, with $\|\phi\|=1$ and $|a|\leq M$, the previous norm-bounded result can be strengthened to $\|\hat{H}\phi\|_{L^1_{L,R}}\leq M\left(1+ |a|\right)=M\|\phi\|_{\mc{BV}_{L,R}}$. Inserting the Lipschitz result in the set of Eqs.~\eqr{Hop-norm}, the same result is obtained under a stronger form of continuity. 

 \subsection{The Hilbert space of $C_{L,R}\left(X_{L,R}\right)$}
 In the last section we saw that the $\|\cdot\|_{\mc{BV}_{L,R}}$-norm is the norm completion of $C_{L,R}\left(X_{L,R}\right)$ with $\|\cdot\|_{\sup}$-norm, and that $\mc{D}_{\hat{H}}=C^2_{L,R}\left(X_{L,R}\right)$. It follows that the wave function $\ket{\psi}$ must also belong to this Sobolev-type space, or belong to a completely continuous function space after two derivatives, which could be described as the measurable (integrable) functions of $C^2_L(X_L)\cap C^2_R(X_R)=C_{\lambda}(X_\lambda)$ on all compact subsets of $\closure{\mbb{R}}$. 
 
 In order to have a Hilbert space, any wave function $\ket{\psi}$ which satisfies these conditions should be self-dual, and the functional equation $\braket{\hat{H},\psi^\ast\psi}\sim\braket{\hat{H},\psi^2}$ should also satisfy the 
H\"{o}lder inequality with $\frac{1}{p}+\frac{1}{q}=1$. It follows that quantum mechanics requires we identify
\beq\label{Holder}
\braket{\hat{H},\psi^\ast\psi}^{\frac{1}{2}}=\braket{\hat{H},\psi^2}^{\frac{1}{2}}\sim\sqrt{\braket{\psi\left|\hat{H}\right|\psi}}.
\eeq   
With this in mind, it follows that we must restrict $\ket{\psi}$ to be those $C_{L,R}(X_{L,R})$ functions with Euclidean norm. Therefore we need a norm defined by $\|\cdot\|_{\mc{BV}_{L,R}}\cap\|\cdot\|_{L^2_{L,R}}$. This implies that the Hilbert space norm of $\ket{\psi}$ is given by $\sqrt{\|\cdot\|_{\mc{BV}_{L,R}}}$. Therefore we define the Hilbert space to be the following.

\begin{defin}{: The semicontinuous Hilbert space of $C^2_{L,R}\left(X_{L,R}\right)$}\label{Hilbertspace}\\
Let the Hamiltonian operator be given by Eq.~\eqr{Ht}, and $\psi\in C^2_{L,R}\left(X_{L,R}\right)$. The Hilbert space for $\psi$ is defined by the twice differentiable semicontinuous space of left/right Lebesgue-Stieltjes measurable (integrable) functions with $\mc{BV}_{L,R}$ 2-norm, or $\mc{H}:=\{\psi\big|\psi''\in C_{L,R}\left(X_{L,R}\right),~\text{and}~\sqrt{\norm{(\psi^\ast,\hat{H}\psi)}_{BV_{L,R}}}<\infty \}$. 
\end{defin}

Since we have a Hilbert space which is dependent upon the measure of an operator on $C^2_{L,R}\left(X_{L,R}\right)$ functions, this is becomes equivalent to the Schatten norm over these function spaces. In light of Ex.~\ref{expl3} and as a result of the semicontinuous quotient space construction, we have the following result, which by now may be obvious.
\begin{thm}{: Decomposition of $\mc{H}$}\label{Hdecomp}\\
The Hilbert space of $\mc{H}$ of $C^2_{L,R}\left(X_{L,R}\right)$-functions has the orthogonal decomposition of left (resp. right) measurable functions such that $\mc{H}=\mc{H}_L\oplus\mc{H}_R$, where $\mc{H}_L$ and $\mc{H}_R$ are separable orthogonal subspaces and submanifolds of left/right $\mu_L/\mu_R$-measurable (integrable) functions on the measure spaces $X_L$ and $X_R$, respectively. Moreover, for any non-atomic completely discontinuous function $f$ (neither left, nor right semicontinuous) over an interval $I=(\cdot,a)\cup(a,b)\cup(b,\cdot)\subset \mbb{R}$ which is Lebesgue measurable for $f$ over the subinterval $I_1=(a^+,b^-)$ such that $\lambda(f)_{(a^+,b^-)}\neq 0$, finite and with Lebesgue-Stieltjes measure $\mu_L(f)_I=\mu_R(f)_I=0$, is separately left and right $\mu$-extendable to the intervals $I_L=(a,b]\cup(b,\cdot]$ and $I_R=[\cdot,a)\cup[a,b)$, where $\mu_L(f)_{I_L}\neq 0$ implies $f_{I_L}\in\mc{H}_L$, and $\mu_R(f)_{I_L}=0$ and $\mu_R(f)_{I_L}\notin \mc{H}_L$. Similarly, $\mu_L(f)_{I_R}=0$ and $\mu_R(f)_{I_R}\neq 0$ implies $\mu_R(f)_{I_R}\in\mc{H}_R$ and $\mu_R(f)_{I_L}\notin\mc{H}_R$. Therefore each left/right extension is separately unique in $\mc{H}_L$ and $\mc{H}_R$.
\end{thm}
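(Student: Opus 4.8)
The plan is to reduce the orthogonal decomposition to the orthogonality already established at the level of the underlying $L^p$ submanifolds, and then to treat the pointwise extension claims as a direct application of the extendability machinery of \secref{Lp functions}. First I would define the candidate factors intrinsically by the semicontinuity of the second derivative: set $\mc{H}_L:=\{\psi\in\mc{H}\mid \psi''\in C_L(X_L)\}$ and $\mc{H}_R:=\{\psi\in\mc{H}\mid \psi''\in C_R(X_R)\}$. Because Definition~\ref{Hilbertspace} equips $\mc{H}$ with the norm $\sqrt{\norm{(\psi^\ast,\hat{H}\psi)}_{\mc{BV}_{L,R}}}$, and because by Theorem~\ref{Linfin*} the $\mc{BV}_{L,R}$ norm is the completion of the $\sup$-norm on $C_{L,R}$ which dominates every $L^p_{L,R}$ norm, the inner product on $\mc{H}_{L,R}$ restricts to the $L^2_{L,R}$ inner product on the corresponding semicontinuous submanifold. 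I would then invoke Proposition~\ref{LpLR-props}(3), namely $L^2_L(X_L)\perp L^2_R(X_R)$, together with the continuous dense embedding $C_{L,R}\hookleftarrow L^p_{L,R}$ of Theorem~\ref{embedLpC}, to transport orthogonality upward to $\mc{H}_L\perp \mc{H}_R$. Separability of each factor is then inherited from the separability of $L^p_{L,R}$ proved in Proposition~\ref{LpLR-props}(2).

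Next I would show that the sum exhausts $\mc{H}$. Here the final clause of Theorem~\ref{sc-Banach}---that under the $\mu_{L,R}$-extensions every $L^p$ function is either piecewise semicontinuous or completely (left and right) continuous---does the essential work. Given $\psi\in\mc{H}$, its second derivative $\psi''\in C_{L,R}(X_{L,R})$ splits, up to LMZ sets, into a left-semicontinuous and a right-semicontinuous part on the disjoint half-open Borel cover of $\closure{\mbb{R}}$; completeness from Theorem~\ref{sc-Banach} guarantees that the corresponding $\mc{H}_L$- and $\mc{H}_R$-components lie in $\mc{H}$, and the orthogonality just established forces the decomposition to be unique. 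This yields $\mc{H}=\mc{H}_L\oplus\mc{H}_R$ as an orthogonal direct sum, and since $\mc{H}$ is the measure-valued (Schatten-class) space over $C^2_{L,R}$, each summand is a closed submanifold.

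For the concrete extension statement I would take $f$ completely discontinuous on $I=(\cdot,a)\cup(a,b)\cup(b,\cdot)$ with $\lambda(f)_{(a^+,b^-)}\neq 0$ but $\mu_L(f)_I=\mu_R(f)_I=0$, which are exactly the hypotheses of Theorem~\ref{discLploc}, so $f\in L^p_d(X)$ has vanishing L-S measure on both semicontinuous subspaces. By Corollary~\ref{intdiscLpdX} such an $f\in L^p_{d,ext}$ admits a gauge refinement by adjoining a single endpoint, an LMZ set: adjoining the right endpoints gives $\closure{f}_L=f_{(a,b]\cup(b,\cdot]}$ with $\mu_L(\closure{f}_L)_{I_L}\neq 0$, placing it in $\mc{H}_L$, while $\mu_R(\closure{f}_L)_{I_L}=0$ keeps it out of $\mc{H}_R$; adjoining the left endpoints gives the symmetric statement in $\mc{H}_R$. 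Uniqueness of each extension is precisely the content of Corollary~\ref{reg}: the $\mu_{L,R}$-extension is the unique regular Borel measure realizing the strongest topology for which continuity between $X_{L,R}$ and $X$ holds, so no competing LMZ adjunction can reproduce the same nonzero semicontinuous measure.

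The main obstacle I anticipate is the reconciliation in the first paragraph of the \emph{operator-weighted} Hilbert norm $\sqrt{\norm{(\psi^\ast,\hat{H}\psi)}_{\mc{BV}_{L,R}}}$ with the plain $L^2_{L,R}$ orthogonality. The orthogonality of Proposition~\ref{LpLR-props} is stated in the $p$-norm topology, whereas the genuine inner product here is Schatten/operator-valued; one must check that the Hamiltonian of Eq.~\eqr{Ht}, being multiplication by $a^2-\tfrac{\alpha}{4}\sgn_{L}$ composed with $(i\,d/dx)^2$, does not mix the two semicontinuous sectors---equivalently, that $\sgn_L$ and the second-derivative operator preserve the half-open Borel $\sigma$-algebras $\mathscr{B}_L$ and $\mathscr{B}_R$ separately. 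Since $\sgn_L$ is itself left-semicontinuous, multiplication by it maps $C_L(X_L)$ into $C_L(X_L)$, and the parity structure of Corollary~\ref{Opi} and Definition~\ref{parity} indicates the sectors are not coupled; granting this the decomposition follows, but without it the orthogonality could in principle fail to lift from $L^2_{L,R}$ to $\mc{H}$.
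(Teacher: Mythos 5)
Your proposal is correct at the paper's level of rigor, but it takes a genuinely different route from the paper's own proof. The paper works bottom-up and concretely: it approximates any $\mu_{L,R}$-extendable element of $C_{L,R}\left(X_{L,R}\right)$ by sequences of semicontinuous step functions (pointing back to Ex.~\ref{expl3}), then runs a three-case analysis at boundary points --- a left extension, a right extension at a jump, and the degenerate case $\mu_L(g)=\mu_R(g)=0$ --- concluding uniqueness of each extension from the fact that a genuine jump forces $f_L(a)\neq f_R(a)$, directness of the sum from the observation that $[0]$ is the only equivalence class common to $X_L$ and $X_R$, and separability from the countable basis of the measure topologies. You instead work top-down: you define the factors intrinsically via the semicontinuity of $\psi''$, lift orthogonality from Proposition~\ref{LpLR-props}(3) through the norm dominance $\norm{\cdot}_{L^p_{L,R}}\leq\norm{\cdot}_{\sup}\leq\norm{\cdot}_{\mc{BV}_{L,R}}$ and the embedding of Theorem~\ref{embedLpC}, obtain exhaustion of $\mc{H}$ from the final clause of Theorem~\ref{sc-Banach}, and recover the extension claims as direct applications of Theorem~\ref{discLploc}, Corollary~\ref{intdiscLpdX} and Corollary~\ref{reg}. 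Your version is more modular, and --- importantly --- it makes explicit the one step the paper's proof glosses over entirely: that the operator-weighted inner product $\braket{\psi^\ast,\hat{H}\psi}$ does not mix the two sectors, i.e.\ that multiplication by $\sgn_L$ and the operator $\left(i\,d/dx\right)^2$ preserve the semicontinuity type. The paper only confronts this later, in Eq.~\eqref{HpsiLR} of \secref{Krein}, where the cross terms $\braket{\psi_L|\hat{H}_R\psi_R}$ are set to zero by appeal to orthogonality; your sector-preservation argument is what actually justifies that step, so flagging it as the load-bearing assumption is a genuine improvement rather than a defect. What the paper's approach buys in exchange is a constructive account of \emph{why} each left/right extension is unique (the jump-value argument $f_L(a)\neq f_R(a)$), which in your write-up is delegated to the more abstract uniqueness statement of Corollary~\ref{reg}; both are acceptable, but the paper's version localizes the uniqueness at the discontinuity itself.
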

\begin{proof}{:} The proof of this can most easily be seen by first referring to Ex.~\ref{expl3}. This shows for a simple step function (Heaviside function) that there is a unique left and right extension of the Heaviside function such that $\mu_L(H_L)\perp\mu_R(H_R)$. Separability of $\mc{H}_L$ and $\mc{H_R}$ is inherited by the countable measure topology basis of $X_{L,R}$. The left/right extension is chosen such that the function is piecewise extended to the left/right by a set of Lebesgue measure zero, and the left/right function value over the interval extension is continuous (i.e. with no jump). Given any $\mu_{L,R}$ extendable function $f$ in the spaces $C_{L,R}(X_{L,R})$ (which by construction may not consist only of atoms), $f$ may be approximated by a sequence of step functions semicontinuous step functions $\chi_{L_n}\in C_L(X_L)$ or $\chi_{R_n}\in C_R(X_R)$. There are two possible cases at each boundary point, and a third separate case which we will explain after the boundary point cases. 

Case i) Take a sequence of left continuous step functions. Each left sequence is, be definition zero over any boundary point which is discontinuous from the left \textbf{and} lies outside the interval extension (i.e. if the extended interval $\closure{\chi_{(a,b)}}=\chi_{(a,b]}$, this by definition implies $\chi(a)=0$ unless there is a different step function defined over the interval $\chi_{(\cdot,a]}$). The analogous holds for a right continuous sequence of step functions.

Case ii) Take a completely discontinuous function $f$ over the joined intervals $I=(\cdot,a)\cup(a,\cdot)$, where a jump occurs at $f(a)$. Choose to extend $f$ such that it becomes left continuous, $\widehat{f}=f_L$ over the interval $I_L=(\cdot,a]\cup(a,b]$, such that $f(a)\neq f(a^+)$. In the topological measure space $X_L$, $\mu_L(f(x=a))$ is well defined, however $\mu_R(f=a)=\emptyset$, and $\mu_R(\emptyset)=0$. The analogous holds for $f$ right extended, such that $\widehat{f}=f_R$ on $I_R=[\cdot,a)\cup[a,\cdot)$. Since in either case, there is a jump discontinuity at $f(a)$, and $\widehat{f}$ is chosen to be continuous from either the left or the right, we have that $f_L(a)\neq f_R(a)$, otherwise there would be no jump, and thus each left/right extension is unique.
 
 In either Case i) or ii), we have that either $\mu_L(f)=\emptyset$ and $\mu_R(f)=\emptyset$, which implies $\mu_L(f)=\mu_R(f)=0$, or $\widehat{f}=f_L$, which implies $\mu_L(f)\neq 0$ and $\mu_R(f)=0$ (and similarly for $\widehat{f}=f_R$). Again, the left/right extension is unique.
 
 Case iii) There is a function $g$ over some interval $I_0$, for which $\mu_L(g)_{I_0}=\mu_R(g)_{I_0}=0$ and $g$ is not $\mu_{L,R}$-extendable. In this case $g\sim[0]$ (the equivalence class of the zero function for both $X_L$ and $X_R$). However $[0]$ is the only function equivalence class which may be common to both $X_L$ and $X_R$.

The remaining aspects of the proof follow straightforwardly.
\end{proof} 

\section{Indefiniteness of $\mc{H}$ and Krein Spaces}\label{Krein}
 In this section we only wish to make some cursory comments regarding the formalism developed above and the theory of Krein spaces (and Krein space operators). For a concise overview of Krein spaces see~\cite{Rovnyak2002}. For a more comprehensive introduction see~\cite{ellis2003}. 
 
 \subsection{Krein spaces and Krein space operators}\label{Kreinspaces}
 We summarize some basic definitions of Krein spaces and Krein space operators given by~\cite[Sec. 3]{Rovnyak2002}. Let $\mf{H},\mf{K}$ denote Krein spaces on $\closure{\mbb{R}}$. A Krein space (which may also be a Pontryagin space) is an indefinite inner produce space which is representable as the orthogonal direct sum $\mf{H}=\mf{H}_+\oplus\mf{H}_-$, where $\mf{H}_+=\set{\left(\mf{H},\braket{\cdot |\cdot}\right)}$ is a positive-definite Hilbert space and $\mf{H}_-=\set{\left(\mf{H},-\braket{\cdot |\cdot}\right)}$ is the \textit{antispace} of a Hilbert space with a negative inner product. A fundamental symmetry on $\mf{H}$ are symmetries expressible an orthogonal direct sum. The Hilbert space topology is the strong topology induced on $\mf{H}$, and the $\dim\mf{H}_\pm$ are the indices of $\mf{H}$. A Pontryagin space is a Krein space with finite $\mf{H}_-$ index. 
 
 The spaces of continuous linear functionals (operators) and adjoint operators are denoted by $\mc{L}\left(\mf{H}\right)$ and $\mc{L}\left(\mf{f},\mf{K}\right)$. For some operator $A\in\mc{L}\left(\mf{H},\mf{K}\right)$ then $A^\ast\in\mc{L}\left(\mf{K},\mf{H}\right)$, with $\braket{Af,g}=\braket{f,A^\ast g}$ for some $f\in\mf{H}$ and $g\in\mf{K}$. 
 \begin{defin}{: Properties of Krein space functions}\label{Kfuncs}\\
 Let $A\in\mc{L}(\mf{H})$. Then $A$ is:
 \item\hspace{.5cm} \textit{self adjoint} if $A^\ast=A$,
 \item\hspace{.5cm} \textit{a projection} if $A^\ast=A$ and $A^2=A$,
 \item\hspace{.5cm} \textit{nonnegative} if $\braket{Af,f}\geq 0,~\forall~f\in\mf{H}$.
 \end{defin} 
 
 \begin{defin}{: Properties of Krein space operators}\label{Kops}\\
 Let $A\in\mf{H}$ be self adjoint, and denote the supremum of all $r$ for which there exists an $r$-dimensional subspace of $\mf{H}$ that is a (anti)Hilbert space by $\text{ind}_+A$, respectively $\text{ind}_-A$, in the inner product given by $\braket{f,g}_A=\braket{Af,g}$ for $f,g\in\mf{H}$. Let $B$ be an operator in $\mc{L}\left(\mf{H},\mf{K}\right)$, then $B$ is:
 \item\hspace{.5cm} \textit{isometric} if $B^\ast B=1_\mf{H}$,
 \item\hspace{.5cm} \textit{partially isometric} if $BB^\ast B=B$,
 \item\hspace{.5cm} \textit{unitary} if both $B$ and $B^\ast$ are isometric,
 \item\hspace{.5cm} \textit{a contraction} if $B^\ast B\leq 1_{\mf{H}}$,
 \item\hspace{.5cm} \textit{a bicontraction} if both $B$ and $B^\ast$ are contractions.
 \end{defin}
 
 We also note a difference between Krein spaces ($\mf{H}$) and Hilbert spaces ($\mc{H}$) regarding orthogonality. Let $\mf{M}\subset\mf{H}$ be a closed subspace in $\mf{H}$. It is generally not true that $\mf{H}=\mf{M}\oplus\mf{M}^\perp$. However, if in addition to being a linear subspace of $\mf{H}$, $\mf{M}$ is also a regular subspace (a Krein subspace) if it is closed and a Krein space in the inner product of $\mf{H}$. If these conditions hold, then we have the following: 
 \begin{defin}{: Krein subspaces}\label{subKrein}\\
 Let $\mf{M}$ be a regular subspace of $\mf{H}$, then the following are equivalent:
 \item\hspace{.5cm} $\mf{M}$ is a Krein subspace,
 \item\hspace{.5cm} $\mf{H}=\mf{M}\oplus\mf{M}^\perp$,
 \item\hspace{.5cm} For a projection operator $\hat{\mc{P}}\in\mc{L}\left(\mf{H}\right)$ such that $\hat{\mc{P}}:\mf{H}\to\mf{M}$, then $\mf{M}=\text{ran}~\hat{\mc{P}}$.
 \end{defin} 
 
 \subsection{$\mc{H}$ as a Krein space}\label{H-Kspace}

From what we have seen in~\secref{Kreinspaces}, the Hilbert space certainly has the properties of a Krein space. The Hamiltonian functional operator Eq.~\eqr{Ht} on the orthogonal measure spaces $X_{L,R}$ provides a decomposition of $\mc{H}$. Let $\psi_L,\phi_L\in\mc{H}_L$ and $\psi_R,\phi_R\in\mc{H}_R$. Since Eq.~\eqr{Ht} was found explicitly on the measure space $X_L$, we denote the left/right Hamiltonian by $\hat{H}_L=$ Eq.~\eqr{Ht}, and $H_R=$ Eq.~\eqr{Ht} with $\sgn_L(x)\to\sgn_R(x)$, and denote the initial Hamiltonian function $\hat{H}=$ Eq~\eqr{amp2Htrm}. Then by repeating the steps from moving from Eq.~\eqr{amp2Htrm} to Eq.~\eqr{Ht} in the case of $X_R$ and denoting the eigenvalues of $\hat{H}_{L,R}\ket{\psi_{L,R}}=E_{L,R}\ket{\psi_{L,R}}$, we have the functional result
\beq\label{HpsiLR}
\braket{\psi|\hat{H}|\psi}&=\braket{\psi_L|\hat{H}|\psi_L}\oplus\braket{\psi_R|\hat{H}|\psi_R}\oplus\braket{\psi_L|\hat{H}|\psi_R}\oplus\braket{\psi_R|\hat{H}|\psi_L}\\
&=\braket{\psi_L|\hat{H}_L\psi_L}\oplus\braket{\psi_R|\hat{H}_R\psi_R}\oplus\braket{\psi_L|\hat{H}_R\psi_R}\oplus\braket{\psi_R|\hat{H}_L\psi_L}\\
&=E_L\braket{\psi_L|\psi_L}\oplus E_R\braket{\psi_R|\psi_R}\oplus 0\cdot\braket{\psi_L|\psi_R}\oplus 0\cdot\braket{\psi_R|\hat{\psi}_L}\\
&=E_L\braket{\psi_L|\psi_L}\oplus E_R\braket{\psi_R|\psi_R}
\eeq
where orthogonality of $\ket{\psi_L},\ket{\psi_R}$ is used from the second to the the third lines. 

Eq.~\eqr{HpsiLR} shows that we have the Hilbert space $\mc{H}=\left(\mc{H}_L,E_L\braket{\psi_L|\psi_L}\right)\oplus\left(\mc{H}_R,E_R\braket{\psi_R|\psi_R}\right)$. However $\mc{H}$ being expressible as an orthogonal decomposition does not ensure that $\mc{H}$ is itself is Krein.
A necessary condition is that $\mc{H}_{L,R}$ are each a Krein subspace. Therefore we must be able to show that $\mc{H}_{L,R}=\mc{H}_{L,R,+}\oplus\mc{H}_{L,R,-}$, where $\mc{H}_{L,R,+}$ is a Hilbert space and $\mc{H}_{L,R,-}$ is the associated anti-Hilbert space. We already have the indefinite structure built into our left/right states via the coupling constant $\alpha$. The Hamiltonian functional was defined such that the sign of $\alpha$ was unspecified.

 Let $\mf{H}$ be a Krein space associated with the Hilbert space $\mc{H}$, $A$ a definitizable operator in $\mf{H}$, and $E$ the spectral function of $A$. If $A$ is positive, its spectrum is $\sigma(A)\in\mbb{R}$, where 0 is the only non-negative semi-simple eigenvalue of $\sigma(A)$. Also $\set{0,\infty}$ may be the only critical points of the spectrum. If $\set{0,\infty}$ are regular critical points (non-singular), and 0 is not an eigenvalue, then $A^\ast=A$ in the Hilbert space $\mc{H}=\left(\mf{H}, \braket{(E(\mbb{R}_+)-E(\mbb{R}_-))\cdot,\cdot}\right)$. 
 
 It was shown in~\cite{Branko1995} the if $\mf{H}=L^2(\mbb{R})$ with $y\in\mc{D}_A=W^{2,2}$, where $W^{2,2}$ is the $L^2$ Sobolev space, then the Julian operator $Jy=\sgn(x)\frac{d^2}{dx^2}y$ is a fundamental symmetry on $\mf{H}$. Then $Ay=-\sgn(x)\frac{d^2}{dx^2}y$ is congruent to a self adjoint operator on $L^2(\mbb{R})$. Moreover $A$ has no eigenvalues, $\sigma(A)=\mbb{R}$, and $\set{0,\infty}$ are regular and the only critical critical points of $\sigma(A)$. Given the work of~\cite{Branko1995}, we can directly infer the following about the Hamiltonian functional operator Eq.~\eqr{Ht} and the Hilbert (anti)spaces. 
\begin{thm}{: Properties of $\hat{H}$ as a Krein space $\mf{H}$}\label{Hop-props}\\
Let $\hat{H}$ be the Hamiltonian functional operator given by Eq.~\eqr{Ht} on the Hilbert space $\mc{H}=\mc{H}_L\oplus\mc{H}_R$, and $\mf{H}=C_{L,R}\left(X_{L,R}\right)\supset L^2_{L,R}\left(X_{L,R}\right)$ with an indefinite inner product. Then $\mf{H}$ is direct sum decomposable as $\mf{H}=\left(\mc{H}_{L}\oplus\mc{H}_{R}\right)_+\oplus\left(\mc{H}_{L}\oplus\mc{H}_R\right)_-$, with 
\item $\mc{H}_{L,+}=\left(\mf{H}, \braket{E_L(\closure{\mbb{R}}_{(\cdot,\cdot]})\cdot,\cdot}\right)$ being the left continuous Hilbert space,
\item $\mc{H}_{R,+}=\left(\mf{H}, \braket{E_R(\closure{\mbb{R}}_{[\cdot,\cdot)})\cdot,\cdot}\right)$ being the right continuous Hilbert space,
\item $\mc{H}_{L,-}=\left(\mf{H}, -\braket{E_L(\closure{\mbb{R}}_{(\cdot,\cdot]})\cdot,\cdot}\right)$ being the left continuous Hilbert antispace,
\item and $\mc{H}_{R,-}=\left(\mf{H}, -\braket{E_R(\closure{\mbb{R}}_{[\cdot,\cdot)})\cdot,\cdot}\right)$ being the right continuous Hilbert antispace.\\
Moreover $\hat{H}$ is self adjoint on $\mc{H}_{L,R,+}$ and anti-self adjoint on $\mc{H}_{L,R,-}$.
\end{thm}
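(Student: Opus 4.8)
The plan is to reduce the four-fold splitting to the already-established two-fold decomposition of Theorem~\ref{Hdecomp} and then apply, on each piece, the spectral theory of definitizable operators recalled just before the statement, in the form proved in \cite{Branko1995}. First I would invoke Theorem~\ref{Hdecomp} to write $\mc{H}=\mc{H}_L\oplus\mc{H}_R$ as an orthogonal sum of the left- and right-measurable pieces, so that it suffices to produce the $\pm$ splitting separately on each factor and then re-assemble. On the left factor I would rewrite the operator of Eq.~\eqr{Ht} in weighted-Laplacian form $\hat{H}_L=-\left(a^2-\tfrac{\alpha}{4}\sgn_L(x)\right)\tfrac{d^2}{dx^2}$, isolating the indefinite Julian-type part $A_L=-\sgn_L(x)\tfrac{d^2}{dx^2}$, whose sign is carried by $\sgn_L$ and whose overall scale is set by the sign-unspecified coupling $\alpha$. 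Because $\mu_L=\lambda$ off Lebesgue-measure-zero sets (Def.~\ref{measure}), and $L^2_L(X_L)$ sits densely inside $C_L(X_L)$ with $\mc{D}_{\hat H}=C^2_{L,R}(X_{L,R})$ playing the role of $W^{2,2}$, the hypotheses of \cite{Branko1995} should transfer to $X_L$, giving that $J_L=\sgn_L(x)\tfrac{d^2}{dx^2}$ is a fundamental symmetry, that $A_L$ is congruent to a self-adjoint operator with no eigenvalues, $\sigma(A_L)=\mbb{R}$, and that $\set{0,\infty}$ are regular and the only critical points.

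With these spectral facts in hand I would appeal to the definitizable-operator statement recalled before the theorem: since $\set{0,\infty}$ are regular critical points and $0$ is not an eigenvalue, the spectral function $E_L$ of $A_L$ splits the left factor into the positive and negative spectral subspaces $E_L(\mbb{R}_+)\mf{H}$ and $E_L(\mbb{R}_-)\mf{H}$, and $A_L$ is self-adjoint in the Hilbert space $\left(\mf{H},\braket{(E_L(\mbb{R}_+)-E_L(\mbb{R}_-))\cdot,\cdot}\right)$. The decisive identification to make is that the left spectral resolution $E_L$ coincides with the left half-open measure structure on $X_L$, i.e. that $E_L(\mbb{R}_+)$ is precisely projection onto the piece carrying the measure $\closure{\mbb{R}}_{(\cdot,\cdot]}$; this is forced by the congruence of the L--S measure with the generating half-open Borel topology (Defs.~\ref{measurespaces}, \ref{measure}). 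I would then define $\mc{H}_{L,+}=\left(\mf{H},\braket{E_L(\closure{\mbb{R}}_{(\cdot,\cdot]})\cdot,\cdot}\right)$ as a genuine positive-definite Hilbert space and $\mc{H}_{L,-}=\left(\mf{H},-\braket{E_L(\closure{\mbb{R}}_{(\cdot,\cdot]})\cdot,\cdot}\right)$ as its antispace, the sign flip mirroring the undetermined sign of $\alpha$; the right factor is identical with $\sgn_L\to\sgn_R$ and $(\cdot,\cdot]\to[\cdot,\cdot)$.

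To assemble the claim I would observe that each $\mc{H}_{L,R}$ is now exhibited as a fundamental decomposition $\mc{H}_{L,R,+}\oplus\mc{H}_{L,R,-}$, hence is a Krein subspace in the sense of Def.~\ref{subKrein} (it is closed, and the spectral projection $E_{L,R}$ supplies the projection operator required there). Orthogonality $\mc{H}_L\perp\mc{H}_R$ from Theorem~\ref{Hdecomp} then lets me collect the four pieces into $\mf{H}=\left(\mc{H}_L\oplus\mc{H}_R\right)_+\oplus\left(\mc{H}_L\oplus\mc{H}_R\right)_-$, the stated fundamental decomposition. Finally, self-adjointness of $\hat{H}$ on $\mc{H}_{L,R,+}$ is exactly the congruence-to-self-adjoint conclusion of \cite{Branko1995} read in the positive-definite inner product, whereas on $\mc{H}_{L,R,-}$ the inner product carries the opposite sign, so the identity $\braket{\hat{H}f,g}=\braket{f,\hat{H}g}$ becomes an anti-self-adjointness relation there; I would record this as self-adjoint on the Hilbert pieces and anti-self-adjoint on the antispaces.

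The main obstacle I anticipate is precisely the transfer-and-identification step of the second paragraph: \cite{Branko1995} is proved for ordinary $L^2(\mbb{R})$ with the two-sided $\sgn$ and the standard Sobolev domain, whereas here the ambient space is the semicontinuous $\mc{H}_{L,R}$ with the $\mc{BV}_{L,R}$-based norm of Def.~\ref{Hilbertspace} and the one-sided $\sgn_{L,R}$. I must verify that restricting to one-sided continuity and to the half-open L--S measure neither renders the critical point at $0$ singular (which would destroy the Hilbert/antispace splitting) nor introduces embedded eigenvalues, and, more delicately, that the operator-theoretic spectral projections $E_{L,R}(\mbb{R}_\pm)$ genuinely coincide with the measure-theoretic half-open projections used to label the four subspaces. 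A secondary point needing care is that, once the positive free part $-a^2\tfrac{d^2}{dx^2}$ is reinstated, the weight $a^2-\tfrac{\alpha}{4}\sgn_L(x)$ must actually change sign across the origin for the form to be genuinely indefinite; I would handle this by analyzing the sign-weighted operator $A_{L,R}$ on its own and recovering $\hat{H}$ as a definitizable perturbation, so that the Krein structure is governed by $\sgn_{L,R}$ and the sign of $\alpha$ rather than by the magnitude of $a$.
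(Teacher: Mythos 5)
Your proposal is correct and follows essentially the same route as the paper's own proof: both reduce the four-fold splitting to the left/right decomposition of Theorem~\ref{Hdecomp} and then lean on \cite{Branko1995} for the fundamental symmetry $J=\sgn(x)\frac{d^2}{dx^2}$ and the congruence of $A=-\sgn(x)\frac{d^2}{dx^2}$ to a self-adjoint operator, with the $\pm$ (Hilbert space versus antispace) split tied to the sign of the coupling $\alpha$. The only real difference is thoroughness: the paper dispatches the passage to the semicontinuous setting with the phrase ``analogous arguments,'' whereas you work through the definitizable-operator spectral machinery and explicitly flag the identification of the spectral projections $E_{L,R}(\mbb{R}_\pm)$ with the half-open measure projections (and the sign behavior of the weight $a^2-\tfrac{\alpha}{4}\sgn_{L,R}(x)$) as the steps that still require verification.
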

\begin{proof}{:}
Each Hilbert space and antispace $\mc{H}_{L,R,\pm}$ are mutually orthogonal and constructed from the quotient spaces of half-open topologies on $\closure{\mbb{R}}$. The Hilbert space topology is the strong topology inherited by $\mf{H}$, therefore $\mf{H}$ inherits the left/right continuity of $X_{L,R}$. Analogous arguments to~\cite{Branko1995} show that $\hat{H}$ is self adjoint on $\mc{H}_{L,R,+}$ separately when $\alpha>0$, which corresponds to the operator $A=-\sgn(x)\frac{d^2}{dx^2}$, defined in the previous paragraph. Since the Julian operator defined above by $J=\sgn(x)\frac{d^2}{dx^2}$ is a fundamental symmetry of $\mf{H}$ (again shown in~\cite{Branko1995}), which corresponds to the case of $\alpha<0$ and $-A=J$, it follows that $\mf{H}$ is anti-self adjoint on $\mc{H}_{L,R,-}$. 
 \end{proof}
 
 \section{Summary and Concluding Remarks}\label{conclusion}
 We have investigated the quantum mechanics Hamiltonian with a Dirac-$\delta'$ potential as a continuous linear functional operator. The particular aspect of this equation is that the kinetic energy operator is a diffeomorphism mapping from the space of weakly continuous linear functions $\mc{L}$ to another function space $\mc{L}''$ by $(D\circ D):\mc{L}\to\mc{L}'\to\mc{L}''$. However $\delta'$ may be considered (after one integration by parts) as a measure or distribution on the linear transformation from $\mc{L}\to\mc{L}'$, or rather $T:\mc{L}\to\mc{L}'\to\mbb{C}$ (or $\mbb{R}$). Moreover the space of test functions of $\delta$ and $\delta'$ is equivalent to any $L^p$ space due to $L^p$ containing discontinuous functions. 
 
In order to resolve the domain incompatibilities for distributional potentials in quantum mechanics, we constructed the spaces of semicontinuous functions. The spaces $L^P_{L,R}$ are projective subspaces of the standard $L^p$ spaces, which single point extensions/restrictions on each disconnected open set for which they are defined on $L^p$. This effectively allows all $L^p$ (aside from functions of atomic sets) functions to be identified as left/right semicontinuous  function, which is topologically continuous when defined on their corresponding topologically semicontinuous measure spaces, $X_{\mu_L}$ or $X_{\mu_R}$. This continuity is reduced to semicontinuity on each measure space. The $\|\cdot\|_{sup}$ bounds all $\|\cdot\|_{L^p_{L,R}}$ including the finitely additive measures of bounded variation, $\|\cdot\|_{BV_{L,R}}$. Under these conditions, we have that $L^p_{L,R}\hookrightarrow C_{L,R}$ is a partial embedding of semicontinuous $L^p$ functions into semicontinuous spaces $C_{L,R}$, with Riemann-Stieltjes integral. We may view Riemann-Stieltjes measures as an extension of the Riemann measure to include half-open intervals, or the Lebesgue-Stieltjes integral as a restriction of the Lebesgue measure to half-open intervals. In this way, they are equivalent on $C_{L,R}$ spaces. The $C_{L,R}$ spaces provide two advantages over the standard $C(\mbb{R})$ and $L^p$ spaces. The first advantage is that $C_{L,R}$ allows us define a common space of test functions for $\delta$ and $\delta'$ functionals which includes as subspaces semicontinuous $L^p$ functions. The second advantage is that we may include Banach spaces of regulated distributions, which invert regulated distributions in terms of their primitive functions.

In~\secref{deltaprm} we analyzed the functional Hamiltonian Eq.~\eqr{hamdelta2} on the semicontinuous spaces as differentiable manifolds, complete with the tangent (and cotangent) fiber bundle structures. We then obtained a connection form transformation of Eq.~\eqr{hamdelta2}, which was shown to be canonical on the cotangent bundle. This permitted an equivalence class identification, which was a foliation of the of the cotangent bundle in terms of the cohomology classes of linear functionals with derivative of their primitive functions. The fact that inverses of regulated distributions is possible in the semicontinuous function spaces is needed to make these equivalence identifications well defined. In that way, semicontinuity was the key property which made such constructions possible. The semicontinuity of the differentiable manifolds then allowed us to define a common domain of Eq.~\eqr{hamdelta2}, and determine a 0-form wave function solution exists in such a way it is in the cohomology class of harmonic 0-forms for both the kinetic energy operator and the $\delta'$ potential. The orthogonality of $L^p_{L,R}$ function spaces extends to the Hilbert spaces of the Hamiltonian operator Eq.~\eqr{Ht}, and thus provided an semicontinuous orthogonal decomposition of the Hilbert space $\mc{H}$.  
 
 In~\secref{Krein}, we discussed the Hilbert space $\mc{H}$ within the indefinite structure of Krein spaces, $\mf{H}$. The indefinite structure was implicitly manifest through indefinite multiplicative coupling $\alpha\sgn(x)$, defined in Eq.~\eqr{Ht}. Therefore we found that the Hilbert space and associated antispace were regular subspaces of $\mf{H}$. The work of~\cite{Branko1995}, shows that the Hamiltonian functional equation, Eq.~\eqr{Ht}, is self adjoint on $C^2_{L,R}$ for the case of $\alpha>0$ and anti-self adjoint for the case when $\alpha<0$. 
 
There remains open questions to which we leave for future work. In particular, the existence of an antispace of $\mc{H}$ implies the existence anti-particle states inherent in QFT. Here they are manifest in a basic quantum mechanics construction. A complete spectral analysis of the system provide insight between quantum mechanics and quantum field theory with respect to this system. What is interesting is that our construction was based on a classical Banach space formulation of quantum mechanics, yet it seems that notions of quantum field theory are almost implicit. Obviously restricting to the half-line would remove the negative definite components of the spectrum. However, that notion seems unsatisfying. There is nothing special regarding $\closure{\mbb{R}}^-$. Regardless, the coupling term $\alpha\sgn(x)$ is almost better viewed as a Pfaffian-like $\beta$-function in the QFT path integral quantization. This touches upon work current in progress regarding a Feynman path integral formulation of the system for which we will investigate (among other things) anomalous bound states, and ghost states resulting from the Feynman "measure", and the possibility of supersymmetric states. In regards to the latter, a supersymmetric field is defined via fields which obey an anti-symmetric commutator algebra. An investigation of some algebraic structure, as in Rmrk. \ref{distalgebra}, would be needed for such an analysis. 

In light of the discussion in~\secref{Krein} and the odd character of the Dirac-$\delta^\prime$ potential, supersymmetric states may be implicit in a very natural way. Preliminary calculations suggest this to be the case. Anomalous bound (or even possibly scattering) states may also provide some theoretical predictions regarding low dimensional solid state systems. For example, the possible quantization of magnetically induced current flows in carbon nano-tubes and other small scale structures for which quantum interactions become dominant. A Feynman path integral formulation of the Dirac-$\delta'$ system shows that the Feynman measure can introduce non-trivial dynamics through the exponential (i.e. ghosts), which can become dominant if the coupling constant is on the order of unity. 

Another intriguing component of our study here is the connection form Eq.~\eqr{redH} derived from Eq.~\eqr{hamdelta2}. Eq.~\eqr{redH} has a form similar in nature to the Dirac-Born-Infeld (DBI) operator. It would be interesting to generalize what has done hear to $\closure{\mbb{R}}^n$ and look  to see if this analogy indeed holds true. Intuitively, one would expect that 1-dimensional Pfaffians would be come components of spinors in higher dimensions. Also, the possibility of defining a consistent "Lie algebra" using this canonical form of Eq.~\eqr{redH}, and the relevant implications for an algebra representation for jets, or for pseudo-differential operators. It would be interesting to investigate the limits of our construction here in terms of these formalisms, both separately and in conjunction with the possible DBI operator connection. 

\subsection*{Acknowledgments}
The author would like to thank Prof. Helge Holden for helpful comments regarding historical developments and useful references for this work. The author is particularly grateful to Dr. Michael Maroun for his friendship, the many helpful comments, uncountable enlightening discussions, and suggested references. This work would not have been possible without his input. Finally, the author is also tremendously grateful to Dr. Tuna Yildirim for his friendship and willingness to help proof read this document for grammatical errors.

\bibliographystyle{amsplain}

\end{document}